\keywords{%
λ-calculus,
program approximation,
Taylor expansion,
λI-calculus,
persistent free variables,
Böhm trees,
Ohana trees%
}
\DeclareFontFamily{U}{fontawesome1}{}
\DeclareFontShape{U}{fontawesome1}{m}{n}{<->FontAwesome--fontawesomeone}{}
\NewDocumentCommand \drule {m} {\ensuremath{\mathrm{(#1)}}}
	\tikzstyle{m} = [fill=white, inner sep=1, scale=0.75]
	\tikzstyle{n} = [fill=blue!70!green!30, inner sep=1, rounded corners]
	\tikzstyle{b} = [thick]
\crefname{cor}{Corollary}{Corollaries}
\crefname{defi}{Definition}{Definitions}
\crefname{prop}{Proposition}{Propositions}
\crefname{lem}{Lemma}{Lemmas}
\crefname{prob}{Problem}{Problems}
\crefname{thm}{Theorem}{Theorems}
\crefname{nota}{Notation}{Notations}
\crefname{obs}{Observation}{Observations}
\crefname{conj}{Conjecture}{Conjectures}
\crefname{exa}{Example}{Examples}
\crefname{rem}{Remark}{Remarks}
\newcommand{\bsub}{\begin{enumerate}[(i)]}
\newcommand{\esub}{\end{enumerate}}
\NewDocumentCommand \ie 			{}		{\emph{i.e.}\@\xspace}
\NewDocumentCommand \Ie 			{}		{\emph{I.e.}\@\xspace}
\NewDocumentCommand \eg 			{}		{\emph{e.g.}\@\xspace}
\NewDocumentCommand \lam		{}		
	{\texorpdfstring{\ensuremath{\lambda}}{lambda}}
\NewDocumentCommand \lamI		{}		
	{\texorpdfstring{\ensuremath{\lambda\comb{I}}}{lambda-I}}
\NewDocumentCommand \defemph 	{m} 	{\emph{#1}}
\NewDocumentCommand \eqdef		{}		{ \coloneqq }
\NewDocumentCommand \eqbnf		{}		{ \mathrel{\Coloneqq} }
\NewDocumentCommand \eqcobnf	{}		
	{ \mathrel{\Coloneqq}_{\textrm{co-ind.}} }
\NewDocumentCommand \bnf		{ommo}
	{\[	\IfValueT{#1}{#1 \quad \ni \quad}
		#2 \eqbnf #3
		\IfValueT{#4}{\quad (#4)}
	\]}
\NewDocumentCommand \cobnf		{ommo}
	{\[	\IfValueT{#1}{#1 \quad \ni \quad}
		#2 \eqcobnf #3
		\IfValueT{#4}{\quad (#4)}
	\]}
\NewDocumentCommand \set 		{smo}	
	{\IfBooleanTF{#1}
		{ \left\{ #2 \IfValueT{#3}{ \ \middle|\ #3 } \right\} }
		{ \{ #2 \IfValueT{#3}{ \ |\ #3 } \} }
	}
\NewDocumentCommand \subf		{}		{ \subseteq_{\mathrm{f}} }
\RenewDocumentCommand \emptyset {}		{ \varnothing }
\RenewDocumentCommand \setminus {}		{ - }
\NewDocumentCommand	\parts		{m}		{ \mathcal P(#1) }
\NewDocumentCommand	\partsfin	{m}		{ \mathcal P_{\mathrm{fin}}(#1) }
\NewDocumentCommand	\Mfin		{m}		{ \mathcal{M}_{\mathrm{fin}}(#1) }
\NewDocumentCommand \bigdunion	{}		{ \coprod }
\NewDocumentCommand \pto		{}		{ \rightharpoondown }
\NewDocumentCommand \Nat 		{}		{ \mathbb{N} }
\NewDocumentCommand \Var		{}		{ \mathcal{V} }
\NewDocumentCommand \red 			{oo}	
	{ \rightarrow \IfValueTF{#1}
		{ \IfValueTF{#2}{ _{#1}^{#2} }{ _{#1} } }
		{ \IfValueT{#2}{ ^{#2} } } 
	}
\NewDocumentCommand \redd 			{oo}	
	{ \mathrel{\rightarrow\mathrel{\mkern-14mu}\rightarrow} \IfValueTF{#1}
		{ \IfValueTF{#2}{ _{#1}^{#2} }{ _{#1} } }
		{ \IfValueT{#2}{ ^{#2} } } 
	}
\NewDocumentCommand \Lam		{}		{ \Lambda }
\NewDocumentCommand \LamI		{}		{ \Lambda_\comb{I} }
\NewDocumentCommand \subst 		{om}	
	{[ #2 / \IfValueTF{#1}{#1}{x}] }
\NewDocumentCommand \comb		{m}		{ \mathsf{#1} }
\NewDocumentCommand \Om			{}		{ \comb{\Omega} }
\NewDocumentCommand \Bible 		{o} 		
	{ \text{\scalebox{0.8}{\faBible}}_{\,\IfValueTF{#1}{#1}{l}} }
\NewDocumentCommand \bred 		{} 		{ \red[β] }
\NewDocumentCommand \breds 		{} 		{ \redd[β] }
\NewDocumentCommand \rterms		{ss}
	{\mathord{
		\Delta_{\comb{I}}
		^{\IfBooleanTF{#1}{ \IfBooleanTF{#2}{\!(!)}{!} }{}}
	}}
\NewDocumentCommand \frsums 	{oss}
	{ \IfBooleanTF{#2}
		{ \IfBooleanTF{#3}
			{ \Nat[ \rterms** \IfValueT{#1}{(#1)} ] }
			{ \Nat[ \rterms* \IfValueT{#1}{(#1)} ] } }
		{ \Nat[ \rterms \IfValueT{#1}{(#1)} ] }
	}
\NewDocumentCommand \esum		{m}		{ \rs{0}_{#1} }
\NewDocumentCommand \rt			{m}		{ #1 }
\NewDocumentCommand \rb			{m}		{ \bar{#1} }
\NewDocumentCommand \rs			{m}		{ \mathbf{#1} }
\NewDocumentCommand \rbs		{m}		{ \bar{\mathbf{#1}} }
\NewDocumentCommand \msubst 	{mo}	
	{ \left\lbrace #1 \middle/ \IfValueTF{#2}{#2}{x} \right\rbrace }
\NewDocumentCommand \rsubst 	{smo}	
	{ \left\langle \IfBooleanTF{#1}{ #2 }{ \rb{#2} } \middle/ 
	\IfValueTF{#3}{ #3 }{ x } \right\rangle }
\NewDocumentCommand \resource	{}		{ \mathsf{r} }
\NewDocumentCommand \rred 		{} 		{ \red[\resource] }
\NewDocumentCommand \rredr 		{} 		{ \red[\resource][?] }
\NewDocumentCommand \rreds 		{} 		{ \redd[\resource] }
\NewDocumentCommand \nf 		{}		{ \mathrm{nf} }
\NewDocumentCommand \size		{m}		{ \mathsf{size}(#1) }
\NewDocumentCommand \sumsize	{m}		{ \mathsf{ssize}(#1) }
\NewDocumentCommand \supp		{sm}
	{ | \IfBooleanTF{#1}{#2}{\rs{#2}} | }
\NewDocumentCommand \msetlt		{}		{ \prec_\Nat }
\NewDocumentCommand \msetgt		{}		{ \succ_\Nat }
\NewDocumentCommand \memory		{}		{ \mathsf{m} }
\NewDocumentCommand \fv 		{}		{ \mathrm{fv} }
\NewDocumentCommand \pfv 		{}		{ \mathrm{pfv} }
\NewDocumentCommand \BT 		{m}		{ \mathrm{BT}(#1) }
\NewDocumentCommand \OT			{om}	
	{ \mathrm{OT}\IfValueT{#1}{^{#1}}(#2) }
\NewDocumentCommand \MT			{om}
	{ \IfValueTF{#1} {\OT[#1]{#2}} {\OT{#2}} }
\NewDocumentCommand \Appset		{}		{ \mathcal{A}_\memory }
\NewDocumentCommand \Appof		{om}
	{ \mathrm{App}_{\IfValueTF{#1}{#1}{\memory}}(#2) }
\NewDocumentCommand \eqAppof	{om}
	{ \mathrm{App}_{\IfValueTF{#1}{#1}{\memory}}^=(#2) }
\NewDocumentCommand \dirapp		{om}	
	{ \omega_{\IfValueTF{#1}{#1}{\memory}}(#2) }
\NewDocumentCommand \MTle		{}		{ \sqsubseteq }
\NewDocumentCommand \bigMTlub	{}		{ \bigsqcup }
\NewDocumentCommand \MTrees		{}		{ \mathbf{O} }
\NewDocumentCommand \Appsets	{}		{ \mathbf{A} }
\NewDocumentCommand \mtapp		{m}		{ {\mathcal A}(#1) }
\NewDocumentCommand \appmt		{m}		{ {\mathcal T}(#1) }
\NewDocumentCommand \TE 		{ssm}	
	{ \mathcal{T}_\memory^{
		\IfBooleanT{#1}{\IfBooleanTF{#2}{!}{\bullet}}
	} (#3) }
\NewDocumentCommand \TMT		{ssm}
	{ \IfBooleanTF{#1}
		{ \IfBooleanTF{#2}{ \TE**{\MT{#3}} }{ \TE*{\MT{#3}} } }
		{ \TE{\MT{#3}} } 
	}
\NewDocumentCommand \NFT		{ssm}
	{ \nf( \IfBooleanTF{#1}
		{ \IfBooleanTF{#2}{ \TE**{#3} }{ \TE*{#3} } }
		{ \TE{#3} } 
	) }
\NewDocumentCommand \TEin		{}		{ \in^\bullet }
\NewDocumentCommand \TElt		{}		{ \subseteq^\bullet }
\NewDocumentCommand \TElub		{}		{ \cup^\bullet }
\NewDocumentCommand \bigTElub	{}			
	{ \mathop{\bigcup\nolimits^{\mathrlap{\bullet}}}\displaylimits }
\NewDocumentCommand \resourceof	{sm}	
	{ [\![#2]\!]_\resource^{\IfBooleanT{#1}{!}} }
\NewDocumentCommand \cB			{}		{ \mathcal{B} }
\NewDocumentCommand \cT			{}		{ \mathcal{T} }
\NewDocumentCommand \cO			{}		{ \mathcal{O} }
\NewDocumentCommand \cM			{}		{ \cO }
\NewDocumentCommand \cX			{}		{ \mathcal{X} }
\NewDocumentCommand \cY			{}		{ \mathcal{Y} }
\NewDocumentCommand \model		{m}		{ \mathcal{#1} } 
\NewDocumentCommand \Atoms		{}		{ \mathcal A }
\NewDocumentCommand \atom 		{}		{ \ast }
\NewDocumentCommand \lto		{}		{ \multimap }
\NewDocumentCommand \Types		{}		{ \mathbb{T} }
\NewDocumentCommand \OTypes		{}		{ \mathord{?\mathbb{T}} } 
\NewDocumentCommand \MTypes		{}		{ \mathord{!\Types} }
\NewDocumentCommand \vdashbang	{}		{ \vdash^! }
\DeclareMathOperator \domain			{dom}
\DeclareMathOperator \support			{supp}
\NewDocumentCommand \uimage		{}		{ \cup\mathrm{im} }
\NewDocumentCommand \ctxsep		{}		{ \mathrel{;} }
\NewDocumentCommand \derives	{}		{ \mathrel{\triangleright} }
\NewDocumentCommand \interp		{m}		{ [\![ #1 ]\!] }
\NewDocumentCommand \commentaire	{som}
	{\IfBooleanT{#1}{\vspace{\topsep}}%
	\colorbox {\IfValueTF{#2}{#2}{yellow}!50}
		{\IfBooleanTF{#1}
			{\parbox{.99\textwidth}{\sffamily#3}}
			{\sffamily #3}}%
	\IfBooleanT{#1}{\vspace{\topsep}}}
\definecolor{remy}{rgb}{1, 0.2, 0.4}
\NewDocumentEnvironment {oldstuff} {}
	{\color{gray}\colorlet{remy}{gray}}
	{}
\NewDocumentCommand \isep 			{}		{ \ |\ }
\NewDocumentCommand \Vars 			{}		{ \Var }
\NewDocumentCommand \LTERMS 		{oooo}	
	{ \IfValueT{#3}{#3} \Lambda \IfValueT{#4}{#4} \IfValueTF{#1}
		{ \IfValueTF{#2}{ _{#1}^{#2} }{ _{#1} } }
		{ \IfValueT{#2}{ ^{#2} } }
	}
\NewDocumentCommand \varstyle 		{m}		{ #1 }
\NewDocumentCommand \varx 			{} 		{ \varstyle{x} }
\NewDocumentCommand \vsetstyle 		{m}		{ #1 }
\NewDocumentCommand \vsetV 			{} 		{ \vsetstyle{V} }
\NewDocumentCommand \vsetW 			{} 		{ \vsetstyle{W} }
\NewDocumentCommand \vsetX 			{} 		{ \vsetstyle{X} }
\NewDocumentCommand \vsetY 			{} 		{ \vsetstyle{Y} }
\NewDocumentCommand \infrapp 		{mo}	
	{ \left(#1\right)^{\omega}\IfValueT{#2}{_{#2}} }
\NewDocumentCommand \lltstyle 		{m}		{ \mathsf{#1} }
\NewDocumentCommand \lltI 			{s} 	
	{ \lltstyle{I} \IfBooleanT{#1}{_{\emptyset}} }
\NewDocumentCommand \lltK 			{s} 	
	{ \lltstyle{K} \IfBooleanT{#1}{_{\emptyset}} }
\NewDocumentCommand \lltM 			{s} 	
	{ \lltstyle{M} \IfBooleanT{#1}{_{\vsetV}} }
\NewDocumentCommand \lltN 			{s} 	
	{ \lltstyle{N} \IfBooleanT{#1}{_{\vsetW}} }
\NewDocumentCommand \lltP 			{s} 	
	{ \lltstyle{P} \IfBooleanT{#1}{_{\vsetX}} }
\NewDocumentCommand \lltQ 			{s} 	
	{ \lltstyle{Q} \IfBooleanT{#1}{_{\vsetY}} }
\NewDocumentCommand \hreds 			{} 		{ \red[\mathrm{h}][*] }
\NewDocumentCommand \rtstyle 		{m}		{ #1 }
\NewDocumentCommand \rts 			{} 		{ \rtstyle{s} }
\NewDocumentCommand \rtt 			{} 		{ \rtstyle{t} }
\NewDocumentCommand \nlsubst 		{mo}	
	{ \left\lbrace #1 \middle/ \IfValueTF{#2}{#2}{\varx} \right\rbrace }
\NewDocumentCommand \lsubst 		{mo}	
	{\!\left\langle #1 \middle/ \IfValueTF{#2}{#2}{\varx} \right\rangle }
\NewDocumentCommand \mlsubst 		{mo}	
	{ \left\langle #1 \middle/ \IfValueTF{#2}{#2}{\varx} \right\rangle }
\begin{document}

\title[Ohana trees, linear approximation and multi-types]
	{Ohana trees, linear approximation and multi-types
	\texorpdfstring{\\}{}%
	for the \MakeLowercase{λ}I-calculus:
	\texorpdfstring{\\}{}%
	No variable gets left behind or forgotten!
	}
\thanks{This article is an extended version of the conference paper \autocite{Cerda.Man.Sau.25}.}

\author[R.~Cerda]{Rémy Cerda\lmcsorcid{0000-0003-0731-6211}}[a,b]
\thanks{The first and third authors were partly funded by
	the ANR project RECIPROG (ANR-21-CE48-019).
	The second author is partly funded by the EMERGENCE project Bang! of the Université Paris Cité.}

\author[G.~Manzonetto]
	{Giulio Manzonetto\lmcsorcid{0000-0003-1448-9014}}[a]

\author[A.~Saurin]{Alexis Saurin\lmcsorcid{0009-0002-1304-5518}}[a,c]

\address{Université Paris Cité, CNRS, IRIF, F-75013, Paris, France}
\email{Remy.Cerda@math.cnrs.fr, Giulio.Manzonetto@irif.fr, Alexis.Saurin@irif.fr}
\address{Università di Bologna, Italy}
\address{INRIA Picube, Paris, France}

\begin{abstract}
	Although the \lamI-calculus is a natural fragment of the \lam-calculus, 
	obtained by forbidding the erasure of arguments, 
	its equational theories did not receive much attention. 
	The reason is that all proper denotational models studied in the literature equate all non-normalizable \lamI-terms, whence the associated theory is not very informative. 
	The goal of this paper is to introduce a previously unknown theory of the \lamI-calculus, induced by a notion of evaluation trees that we call \enquote{Ohana trees}.
	The Ohana tree of a \lamI-term is an annotated version of its Böhm tree, remembering all free variables that are hidden within its meaningless subtrees, or pushed into infinity along its infinite branches.
	
	We develop the associated theories of program approximation: 
	the first approach --- more classic --- is based on finite trees and continuity, the second adapts Ehrhard and Regnier's Taylor expansion. 
	We then prove a Commutation Theorem stating that the normal form of the Taylor expansion of a \lamI-term coincides with the Taylor expansion of its Ohana tree. 
	As a corollary, we obtain that the equality induced by Ohana trees is compatible with abstraction and application. 

	Subsequently, we introduce a denotational model designed to capture the equality induced by Ohana trees. 
	Although presented as a non-idempotent type system, our model is based on a suitably modified version of the relational semantics of the \lam-calculus, which is known to yield proper models of the \lamI-calculus when restricted to non-empty finite multisets. 
	To track variables occurring in subterms that are hidden or pushed to infinity in the evaluation trees, we generalize the system in two ways: first, we reintroduce annotated versions of the empty multiset indexed by sets of variables; second, we introduce a separate environment to account for the free variables present in these subterms.
	We show that this model retains the standard quantitative properties of relational semantics and that the induced theory precisely captures Ohana trees.
	
	We conclude by discussing the cases of Lévy-Longo and Berarducci trees, 
	and possible generalizations to the full \lam-calculus.
\end{abstract}

\maketitle
\tableofcontents


\section{Introduction}

In the pioneering article
\enquote{The calculi of lambda-conversion}~\cite{Church41}
Alonzo Church introduced the \lam-calculus 
together with its fragment without weakening, 
called the \lamI-calculus, where each abstraction 
must bind \emph{at least} one occurrence of a variable. 
Historically, the \lamI-calculus has proven to be a useful framework for establishing results such as the finiteness of developments and standardization, which were successfully proven for the full \lam-calculus only decades later.
Despite that, in the last forty years there have been
no groundbreaking advances in its study.

The study of models and theories of \lam-calculus flourished in the 1970s~\cite{Bare} and remained central in theoretical computer science for decades~\cite{BarendregtM22}.
Although the equational theories of \lam-calculus, called \emph{\lam-theories}, form a complete lattice of cardinality $2^{\aleph_0}$~\cite{LusinS04}, only a handful are of interest to computer scientists.
Most articles focus on \lam-theories generated through some notion of \enquote{evaluation tree} of a \lam-term, which means that two \lam-terms  are equated in the theory exactly when their evaluation trees coincide. 
The definitions of evaluation trees present in the literature follow the same pattern: they collect in a possibly infinite tree all stable portions of the output coming out from the computation, and replace the subterms that are considered \emph{meaningless} by a constant $\bot$, representing the lack of information. 
By modifying the notions of \enquote{stable} and \enquote{meaningless}, 
one obtains Böhm trees~\cite{Barendregt77}, 
possibly endowed with some form of 
extensionality~\cite{Nakajima.75,SeveriV17,IntrigilaMP19}, 
Lévy-Longo trees~\cite{Levy76,Longo83} 
and Berarducci trees~\cite{Berarducci96}.
In the modern language of infinitary term rewriting systems, these trees can be seen as the transfinite normal forms of infinitary \lam-calculi~\cite{EndrullisHK12,Czajka20}.
Another popular method for defining \lam-theories is via observational equivalences~\cite{MorrisTh}: two \lam-terms are equivalent if they display the same behaviour whenever plugged in any context. 
For instance, the theory of Scott's $\mathcal{D}_\infty$ model~\cite{Scott72} captures the maximal consistent observational equivalence \cite{Hyland75,Wadsworth76}.

Since the \lam-calculus is an equationally conservative extension of the \lamI-calculus, every theory of the former is also a theory of the latter, namely a \emph{\lamI-theory}, while the converse does not hold. 
A typical example is the \lamI-theory $\mathcal{H}_{\comb{I}}$ generated by equating all \lamI-terms that are not $\beta$-normalizing. 
The maximal consistent observational equivalence $\mathcal{H}_{\comb{I}}^\eta$ is similar, except for the fact that it is extensional.
Note that equating all \lam-terms without a $\beta$-normal form gives rise 
to an inconsistent \lam-theory, but in the context of \lamI-calculus it is 
consistent and arises naturally: all \enquote{proper} denotational models of 
the \lamI-calculus studied so far induce either $\mathcal{H}_{\comb{I}}$ or 
$\mathcal{H}_{\comb{I}}^\eta$~\cite{HonsellL93}.
This is somewhat disappointing, since models and theories are most valuable when they capture non-trivial operational properties of programs, and normalization is relatively elementary in this setting.
In this paper we introduce a new \lamI-theory, induced 
by a notion of evaluation trees inspired by B\"ohm trees, but explicitly 
designed for \lamI-terms.

\paragraph*{B\"ohm trees and Taylor expansion for the \lamI-calculus}

An important feature of Böhm trees is that, because of their coinductive nature, they are capable of pushing some subterms into infinity. 
Consider for instance a \lamI-term $M_{xf}$ containing free variables $x,f$ and satisfying 
\[
	M_{xf} \redd[\beta] f(M_{xf})\redd[\beta] f(f(M_{xf}))\redd[\beta] f(f(f(M_{xf})))\redd[\beta] f^n(M_{xf})\redd[\beta]\cdots
\]
then the Böhm tree of $M_{xf}$ is the limit of this infinite 
sequence, namely $f^{\omega}=f(f(f(\cdots)))$.
Observe that $x$ is \enquote{persistent} in the sense that it is never 
erased along any finite reduction, but disappears in the limit. 
We say that $x$ is \emph{pushed into infinity} or \emph{forgotten}
in the Böhm tree of $M_{xf}$.
To some extent, also the variable $f$ is pushed into infinity, but it occurs infinitely often in~$f^{\omega}$. 
A variable $x$ can also disappear because it is hidden behind a meaningless term like $\Om = (\lam y.yy)(\lam y.yy)$. 
For instance, the B\"ohm tree of the \lamI-term $N = \Om x$ is just $\bot$ and we say that $x$ is \emph{left behind}.
As a consequence there are \lamI-terms, like $\lam xf.M_{xf}$ and $\lam xy.y(\Om x)$, whose Böhm tree is not a \lamI-tree~\cite[Def.~10.1.26]{Bare} since the variable $x$ is abstracted, but does not appear in the tree. 
This shows that Böhm trees are not well-suited for the \lamI-calculus, and the question of  whether other notions of trees can model \lamI-terms in a more faithful way naturally arises.

In the first part of this paper (\cref{sec:MemTrees})
we introduce a notion of Böhm trees keeping tracks 
of the variables persistently occurring along each possibly infinite path, 
or left behind a meaningless subterm.
We call them \emph{Ohana trees} as they remember all the variables present 
in the terms generating them, even if these variables are never actually 
used in their evaluations:
no variable is \emph{left behind} or \emph{forgotten}.
This is obtained by simply annotating such variables on the branches of the B\"ohm tree, and on its $\bot$-nodes, but bares interesting consequences: since Ohana trees are invariant under $\beta$-conversion, showing that two terms have distinct Ohana trees becomes a way of separating them modulo $=_\beta$.
Once the coinductive definition of Ohana trees is given 
(\cref{def:memorytrees4LamI}), we develop the corresponding theory 
of continuous program approximation based on finite trees. 
Perhaps surprisingly, in the finite approximants, the only variable annotations that are actually required are those on the constant $\bot$, as all other labels can be reconstructed when taking their  supremum.
Our main result in this setting is the Continuous Approximation Theorem
(\cref{thm:MTbijApp}) stating that the Ohana tree of a \lamI-term is 
uniquely determined by the (directed) set of all its finite approximants.

Inspired by quantitative semantics of linear logics, Ehrhard and Regnier 
proposed a theory of linear program approximation for regular \lam-terms 
based on Taylor expansion~ \cite{EhrhardR03,Ehrhard.Reg.08}.
A~\lam-term is approximated by a possibly infinite power-series of programs living in a completely linear \emph{resource calculus}, where no resource can be duplicated or erased along the computation.
This theory is linked to Böhm trees by a Commutation Theorem~\cite{Ehrhard.Reg.06} stating that the normal form of the Taylor expansion of a \lam-term is equal to the Taylor expansion of its Böhm tree.

The second part of our paper (\cref{sec:resource,sec:TE4MTs}) 
is devoted to explore the question of whether it is possible 
to define a Taylor expansion for the \lamI-calculus
capturing the Ohana tree equality.
In \cref{sec:resource} we design a \lamI-resource calculus with memory, 
mixing linear and non-linear features.
Our terms are either applied to non-empty \emph{bags} (finite multisets) of 
non-duplicable resources, or to an empty bag $1_X$ annotated with a set $X$ 
of variables (the non-linear parts of the terms). 
The reduction of a resource term $t$ preserves its free variables $\fv(t)$: if the reduction is valid then $t$ consumes all its linear resources and each variable is recorded in some labels, otherwise an exception is thrown and $t$ reduces to an empty program $\esum X$, where $X=\fv(t)$.

First we prove that the resource calculus so-obtained is confluent and strongly normalizing, then we use it as the target language of a Taylor expansion with memory, capable of approximating both \lamI-terms and Ohana trees. 
Our main result is that the Commutation Theorem from~\cite{Ehrhard.Reg.06} 
extends to this setting: the normal form of the Taylor expansion with memory 
of a \lamI-term always exists and is equal to the Taylor expansion with 
memory of its Ohana tree (\cref{the:commutation}).
As consequences (\cref{cor:MTeqIffNFTeq,cor:MisALamITheory}), 
we obtain that: 
\begin{enumerate}
\item The equality induced on \lamI-terms by Ohana trees coincides with the equality induced by the normalized Taylor expansion with memory; 
\item  The equality induced by Ohana trees is  compatible with application and  abstraction, in the sense of the \lamI-calculus, and it is therefore a \lamI-theory.
\end{enumerate}

\paragraph*{Denotational semantics of Ohana trees}
In the last part of this paper (\cref{sec:types}), 
we construct a denotational model of Ohana trees 
based on the relational semantics of linear logic. 
As is common, we present this model as 
a multi-type (or non-idempotent intersection type) system, 
where the interpretation of a \lamI-term is given 
by the set of its typing judgments. 
Finite multisets $[\alpha_1,\dots,\alpha_n]$ 
occur on the left-hand side of an arrow type:
$\Gamma\vdash M : [\alpha_1,\dots,\alpha_n] \lto \beta$ 
holds if $M$ needs to use its argument $n$ times 
to produce an output of type $\beta$. 
In our setting, we also have $\Gamma\vdash \lam x.M : []_{\vec y} \lto \beta$, expressing that $M$ may receive an argument whose free variables are $\vec y$, but will not use it to produce the output; instead, the argument is hidden or pushed into infinity.
A distinctive feature of our system is the introduction of a separate environment $\Delta$, as in $\Gamma;\Delta\vdash M : [\alpha_1,\dots,\alpha_n] \lto \beta$, which ensures proper $\alpha$-conversion by tracking the correspondence between the variables $x$ occurring in $M$ and the variables $\vec y$ of the \lamI-terms that are semantically substituted for $x$. 
We show that this type system satisfies quantitative versions of subject reduction and subject expansion, and that it validates our Taylor expansion (\cref{lem:typing:approx-TE-typings}). 
As a corollary of the Commutation Theorem, two \lamI-terms with the same Ohana tree share the same denotation, while the converse is established by inductive techniques. 
Hence, the model captures exactly the equality induced by Ohana trees (\cref{thm:interp-characterises-O}).

We consider this work a first step in a broader line of research, which we discuss further in the conclusion of the paper.
In particular, we aim to extend our formalism to the full \lam-calculus and explore its applicability to other kinds of trees.

\paragraph*{Related works}
On the syntactic side, the \lamI-calculus can be translated 
into MELL proof-nets without weakening 
studied in~\cite{CarvalhoF12,CarvalhoF16}. 
Our Ohana trees can be presented as labelled Böhm trees, and therefore share 
similarities with Melliès's (infinitary) \lam-terms \enquote{with 
boundaries}, whose branches are annotated by booleans~\cite{Mellies17}. 
However, our labels are describing additional points occurring at transfinite positions, a phenomenon reminiscent of the non-monotonic pre-fixed point construction defined in \cite{Berardid17a}. 
They can also be seen as an instances of the transfinite terms considered in \cite{KetemaetAl09}, but simpler, because no reduction is performed at transfinite positions.
Thus, Ohana trees should correspond to normal forms of an infinitary labelled \lamI-calculus, just like Böhm trees are the normal forms of the infinitary \lam-calculus~\cite{KennawayKSV95}.
We wonder whether a clever translation of the \lamI-calculus into the process calculus in \cite{DufourM24} would allow us to retrieve the Ohana trees and the associated Taylor expansion, in which case our Commutation Theorem might become an instance of their general construction.

Concerning denotational semantics, certain \emph{relevant} intersection type systems \cite{Bakel92,Ghilezan96,AlvesF22} can be used to represent models of the \lamI-calculus, as they describe reflexive objects in the SMCC of cpos and strict functions~\cite{Jacobs93}. In \cite{HonsellR92}, Honsell and Ronchi Della Rocca constructed a non-semi-sensible filter model that never equates a \lamI-term with a proper \lam-term; however, the induced notion of equivalence does not appear to have an obvious tree-like representation. In~\cite{HonsellL93}, the authors show that the model introduced in \cite{EgidiHR92} is fully abstract for the \lamI-theory $\mathcal{H}_{\comb{I}}^\eta$. The results in~\cite{CarvalhoF16} suggest that the relational semantics, endowed with the comonad of finite \emph{non-empty} multisets, contains models whose theory is either $\mathcal{H}_{\comb{I}}$ or $\mathcal{H}_{\comb{I}}^\eta$. No relational graph model can induce the Ohana tree equality, since they equate all \lam-terms having the same Böhm tree~\cite{BreuvartMR18}. The relational model $\model{E}$ defined in \cite{CarraroES10}, using the comonad of finite multisets with possibly infinite coefficients, appeared more promising because it separates $M_{xf}$ from $M_{yf}$, and $\Om x$ from $\Om y$ when $x\neq y$; however, we will see that it still fails to capture the Ohana tree equality.


\section{Preliminaries}

We recall some basic notions and results concerning the \lam-calculus, and its fragment without weakening known as the \lamI-calculus. 
For more details, we refer to Barendregt's book \cite{Bare}.

\subsection{The \lam-calculus, in a nutshell} 
Let us fix an infinite set $\Var$ of variables. The set $\Lam$ of 
\emph{\lam-terms} is defined by induction
\bnf[\Lam]{M,N}{ x \mid \lam x.M \mid MN }[\text{for $x \in \Var$}]
We use parentheses when needed.
We assume that application associates on the left, 
and has higher precedence than abstraction.
\emph{E.g.}, $\lam x.\lam y.\lam z.xyz$ stands for $\lam x.(\lam y.(\lam z.(xy)z))$.
We write $\lam x_1 \ldots x_n.M$ or even $\lam\vec x.M$ 
for $\lam x_1.\ldots \lam x_n.M$, 
and $N^n(M)$ for $N(N(\cdots N(M)\cdots))$, where $N$ occurs $n$ times. 
In particular, when $n = 0$, we get $\lam x_1\dots x_n.M = N^n(M) = M$.

The set $\fv(M)$ of \emph{free variables of $M$} is defined as usual, and we say that $M$ is \emph{closed} when $\fv(M) = \emptyset$.
Hereafter, \lam-terms are considered modulo \emph{$\alpha$-conversion}
(see \cite[\S 2.1]{Bare}).

\begin{exa}\label{ex:combinators}
The \lam-terms below are used throughout the paper to construct examples:
\[
	\comb{I} \eqdef \lam x.x,\quad 
	 \comb{K} \eqdef \lam xy.x,\quad 
	 \comb{F} \eqdef \lam xy.y,\quad 
	 \comb{B} \eqdef \lam fgx.f(g(x)),\quad
 	 \comb{D} \eqdef \lam x.xx,\quad
	 \Om \eqdef \comb{DD}.
\]
The \lam-term $\comb{I}$ is the identity, $\comb{K,F}$ are the projections, $\comb{B}$ is the composition, $\comb{D}$ is the diagonal and $\Om$ a looping term.
The \emph{pairing of $M,N$} is encoded as $[M,N] \eqdef \lam x.xMN$, for $x$ fresh. 
\end{exa}

\begin{defi}
\bsub
\item The \emph{$\beta$-reduction} $\to_\beta$ is generated by the rule
\[
	(\lam x.M)N\to M\subst{N}
\] where $(\lam x.M)N$ is called a \emph{$\beta$-redex} and $M\subst{N}$ stands for the \emph{capture-free substitution} of $N$ for all free occurrences of $x$ in $M$. 
\item We let $\redd[\beta]$ (resp. $=_\beta$) be the reflexive, transitive (and symmetric) closure of $\to_\beta$.
\item We say that $M$ \emph{is in $\beta$-normal form ($\beta$-nf)} if $M$ does not contain any $\beta$-redex.
\item We say that $M$ \emph{has a $\beta$-nf} if $M\redd[\beta] N$, for some $N$ in $\beta$-nf.
Since $\red[\beta]$ is confluent, such $N$ must be unique.
\esub
\end{defi}

\begin{rem}\label{rem:hnfs}
Any \lam-term $M$ can be written in one of the following forms: either $M = \lam\vec{x}.yM_1\cdots M_k$, in which case $M$ is called a \emph{head normal form} (hnf) and $y$ its \emph{head variable}; or $M = \lam\vec{x}.(\lam y.P)QM_1\cdots M_k$, where $(\lam y.P)Q$ is referred to as its \emph{head redex}.
\end{rem}
The \emph{head reduction} $\red[h]$ is the restriction of $\red[\beta]$ obtained by contracting the head redex.

\begin{defi} A \lam-term $Y$ is a \emph{fixed-point combinator} (fpc) if it satisfies $Y x =_\beta x(Yx)$.
\end{defi}
Notice that we do not require that fixed-point combinators are closed \lam-terms.
\begin{exa}\label{ex:fpcs} 
We recall Curry's fixed-point combinator $\comb{Y}$ and define, for all $l\in\Var$, 
Polonsky's fpc $\Theta_l$ and Klop's \emph{biblic} fpc $\Bible$ introduced in \cite{ManzonettoPSS19}:
\begin{align*}
	\comb{Y} & \eqdef \lam f.\comb{D}_f\comb{D}_f,
		& \text{ where } \comb{D}_f \eqdef \lam x.f(xx), \\
	\Theta_l & \eqdef \comb{VV}l,
		& \text{ where } \comb{V} \eqdef \lam vlf.f(vvlf),\\	
	\Bible & \eqdef \lam e.\comb{BYB}el.
\end{align*}
It is easy to check that $\comb{Y}$ and $\Theta_l$ are fpcs.
Regarding the \emph{Bible} $\Bible$, which owes its name to its distinctive 
spelling, we get \[\Bible x =_\beta \comb{BYB}xl =_\beta 
\comb{Y}(\comb{B}x)l =_\beta \comb{B}x(\comb{Y}(\comb{B}x))l =_\beta 
x(\comb{Y}(\comb{B}x)l) =_\beta x(\Bible x).\]
Note that the variable $l$ remains in a passive position along the reduction of $\Bible$ (and of $\Theta_l$), therefore $\Bible[N]:= \Bible\subst[l]{N}$ (resp.\ $\Theta_N:= \Theta_l\subst[l]{N}$) remains a fpc, for all $N\in\Lambda$.
\end{exa}

Although fpcs $Y$ are not $\beta$-normalizable, they differ from $\Om$ since they produce increasing stable amounts of information along reduction: $Y\redd[\beta] \lam f.fY_1\redd[\beta] \lam f.f^n(Y_n)\redd[\beta]\cdots$.
Barendregt introduced a notion of \enquote{evaluation tree} to capture this infinitary behaviour~\cite{Barendregt77}.\par\addvspace{1.5ex}

\noindent\begin{minipage}{\textwidth}
\begin{defi}\label{def:BT} 
The \emph{Böhm tree} $\BT{M}$ of a \lam-term $M$ is coinductively defined as follows.
\begin{enumerate}
\item If $M$ has a hnf, that is $M \redd[h] \lam x_1\dots x_n.yM_1\cdots M_k$, then\\
\[
\begin{tikzpicture}
\node 		(BTM) at (-23pt,0)					{$\BT M \eqdef$};
\node[n] 	(y) 	at (1.1,0) 		{$\lam x_1\dots x_n.\,y$};
\node[n] 	(BTM1) 	at (.8,-.8) 	{$\BT{M_1}$};
\node 				at (2,-.8) 		{\dots};
\node[n] 	(BTM2) 	at (3.2,-.8) 	{$\BT{M_k}$};
\draw[b] ($(y.south east)+(-8pt,0)$) -- (BTM1.north);
\draw[b] (y.south east) -- (BTM2.north);
\end{tikzpicture}
\]
\item $\BT{M} \eqdef \bot$, otherwise. The constant $\bot$ represents the absolute lack of information.
\end{enumerate}
\end{defi}
\end{minipage}

\begin{exa}\label{ex:BTs} 
Using the \lam-terms from \cref{ex:combinators,ex:fpcs}, let us construct some Böhm trees.
\bsub
\item If $M$ has a $\beta$-normal form $N$ then, up to isomorphism, $\BT{M} = N$. \emph{E.g.}, $\BT{\comb{DI}} = \comb{I}$.
\item Since $\Om$ and $\comb{YK}$ have no hnf, we get $\BT{\Om} = \BT{\comb{YK}} = \bot$.
\item If $Y$ is an fpc, then $\BT{Y} = \lam f.f(f(f(\,\cdots)))$. Thus, $\BT{\comb{Y}} = \BT{\Bible} = \BT{\Theta_l}$.
\item\label{ex:BTs4} 
	Using the fpc $\comb{Y}$, one can define \lam-terms $\comb{E}_x,\comb{O}_x$ satisfying
	$\comb{E}_x =_\beta {[}\Om, [\Om x, \comb{E}_x]]$ and 
	$\comb{O}_x =_\beta {[}\Om x, [\Om, \comb{O}_x]]$. 
These two \lam-terms both construct a \enquote{stream} 
$[M_1,[M_2,[M_3,[\cdots]]]]$,
but the former places $\Om x$ at even positions and $\Om$ at odd ones, 
while the latter does the converse.
This difference is however forgotten in their Böhm trees:
\[
	\BT{\comb{E}_x} = \BT{\comb{O}_x} = [\bot,[\bot,[\bot,[\bot,[\,\cdots]]]]]\hfill\textrm{ (inline depiction of a tree)}
\]
\item Check that, given $\comb{R} = \lam ylx.x(y(xl))$, we have $\BT{\comb{YR}l} = \lam x_0.x_0(\lam x_1.x_1(\lam x_2.x_2(\,\cdots)))$.
\esub
\end{exa}

The equational theories of the \lam-calculus are called \lam-theories 
and are defined as follows.
\begin{defi} A \emph{\lam-theory} $\cT$ is given by an equivalence relation 
$=_\cT$ on \lam-terms containing $\beta$-conversion and \emph{compatible} 
with abstraction and application, \ie: 
\[
	\infer{\lam x.M =_\cT \lam x.M'}{M =_\cT M'}\qquad
	\infer{MN =_\cT M'N}{M =_\cT M'}\qquad
	\infer{NM =_\cT NM'}{M =_\cT M'}
\]
\end{defi}

The B\"ohm tree equality induces an equivalence on \lam-terms: $M =_\cB N$ if and only if $\BT{M} = \BT{N}$. 
As shown in \cite{Levy76,Wadsworth78}, $\cB$ contains $\beta$-conversion and is \emph{compatible} with abstraction and application, whence it is a \emph{\lam-theory}.

\subsection{The \lamI-calculus}\label{ssec:lamI}

The set $\LamI$ of \lamI-terms is the subset of $\Lam$ consisting of those \lam-terms 
in which every abstraction binds \emph{at least} one occurrence of the abstracted variable.
Note that  $\red[\beta]$ induces a reduction on $\LamI$ since $M\in\LamI$ and $M \red[\beta] N$ entail $N\in\LamI$. 
In other words, the property of being a \lamI-term is preserved by $\red[\beta]$.
Similarly, the \lamI-calculus inherits from the \lam-calculus 
all the notions previously defined like head reduction, Böhm trees, etc.
For convenience, we consider an alternative definition of $\LamI$ simultaneously defining $\lamI$-terms and their sets of free variables.

\begin{defi}\label{def:LamI}\mbox{}
\bsub
\item For all $X \subseteq \Var$, $\LamI(X)$ is defined as 
	the smallest subset of $\Lam$ such that:
\[
	\infer{x\in\LamI(\set{x})}{}\qquad
	\infer{\lam x.M\in\LamI(X\setminus\set{x})}{M\in\LamI(X)\quad x\in X}\qquad
	\infer{MN\in\LamI(X\cup Y)}{M\in\LamI(X) & N\in\LamI(Y)}
\]
The above definition differs from the usual definition of (regular) 
\lam-terms with their set of free variables only in the side condition of 
the second clause (namely, \enquote{$x\in X$}).
\item Finally, the set of all \lamI-terms is given by 
	the disjoint union
	$\LamI = \bigdunion_{X\subseteq\Var}\LamI(X)$.
\esub
\end{defi}

\begin{rem}\label{rem:aboutLambdaI}\mbox{}
\bsub
\item Note that $M \in \LamI(X)$ means that $M$ is a \lamI-term such that $\fv{(M)}=X$.
As a consequence, if $M\in\LamI(X)$ then the set $X$ must be finite.
\item\label{rem:aboutLambdaI2} 
	Each set $\LamI(X)$ is closed under $\beta$-conversion.
\item Our examples $\comb{I},\comb{B},\comb{D},\Om,\comb{Y},\Bible,\Theta_l,\comb{E}_x,\comb{O}_x,\comb{YR}l$ are all \lamI-terms.
\item On the contrary, $\comb{K},\comb{F},\comb{YK}\notin\LamI$.
\esub
\end{rem}

\begin{defi}
A \emph{\lamI-theory} $\cT$ is given by an equivalence $\mathord{=_\cT} \subseteq \LamI^2$ containing $\beta$-conversion and compatible with application and abstraction. 
In the context of \lamI-calculus, 
the compatibility with abstraction becomes: 
$M =_\cT M'$ and $x\in\fv(M)\cap\fv(M')$ 
imply $\lam x.M =_\cT \lam x.M'$.
\end{defi}
The only subtlety in the definition above stands in the side condition on the compatibility with abstraction.
However, this small difference has significant consequences: \eg, the theory axiomatized by equating all terms without a $\beta$-nf is consistent as a \lamI-theory, but inconsistent as a \lam-theory.
The following theorem collects the main properties of the \lamI-calculus.
\begin{thm}\label{thm:LamIMainProps}\mbox{}
\bsub
\item\label{thm:LamIMainProps1} 
	Every computable numerical function is \lamI-definable~\cite[Thm.~9.2.16]{Bare}.
\item\label{thm:LamIMainProps2} 
	Strong and weak $\beta$-normalization coincide for the \lamI-calculus~\cite[Thm.~11.3.4]{Bare}.
\item\label{thm:LamIMainProps3} 
	Every \lam-theory is also a \lamI-theory, while the converse does not hold. 
\esub
\end{thm}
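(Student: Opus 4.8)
All three items are classical properties of the \lamI-calculus, so the plan is to recall the standard arguments, pointing to \cite{Bare} for the details while isolating the \lamI-specific subtleties. I would begin with the third item. For the forward direction I argue by restriction: given a \lam-theory $\cT$, set $\mathord{=_{\cT'}} \eqdef \mathord{=_\cT} \cap \LamI^2$ and check the closure conditions. It is an equivalence, being the restriction of one; it contains $\beta$-conversion between \lamI-terms because $\LamI$ is closed under $\to_\beta$ (\cref{rem:aboutLambdaI}), so all such conversions already lie in $\mathord{=_\cT}$ and stay within $\LamI^2$; it is compatible with application because the application of two \lamI-terms is again a \lamI-term, so the unconditional rules of $\cT$ transfer verbatim; and it satisfies the \lamI-form of compatibility with abstraction because whenever $M =_{\cT'} M'$ and $x \in \fv(M) \cap \fv(M')$ we have $\lam x.M, \lam x.M' \in \LamI$ and $\lam x.M =_\cT \lam x.M'$, the side-conditioned rule being a special case of the \lam-theory one. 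For the converse it suffices to exhibit one \lamI-theory not of this form, and the canonical witness is $\mathcal{H}_{\comb{I}}$, the \lamI-theory generated by equating all \lamI-terms without a $\beta$-nf: it is consistent — it is induced, for instance, by a proper \lamI-model, \cite{HonsellL93} — whereas the same collapse is inconsistent as a \lam-theory, as recalled before the theorem. Hence a \lam-theory whose restriction to $\LamI$ contained $\mathcal{H}_{\comb{I}}$ would be inconsistent, so its restriction would be the full relation, contradicting the consistency of $\mathcal{H}_{\comb{I}}$; thus no \lam-theory restricts to $\mathcal{H}_{\comb{I}}$.

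For the second item, one implication is trivial (strong $\beta$-normalization implies weak $\beta$-normalization), so the content is the converse. The plan is to invoke the Conservation Theorem for the \lamI-calculus, whose crucial structural input is that every $\beta$-step $(\lam x.P)Q \to_\beta P\subst{Q}$ with $\lam x.P \in \LamI$ is \emph{non-erasing}: since $x$ occurs in $P$, the argument $Q$ is retained, with at least one copy, in $P\subst{Q}$. From this one proves that for $M \in \LamI$, if $M \to_\beta N$ and $N$ is strongly normalizing, then so is $M$ (the other direction being immediate). Granting this, if $M \in \LamI$ is weakly normalizing I would fix a finite reduction $M = M_0 \to_\beta \dots \to_\beta M_n$ to a $\beta$-normal form; then $M_n$ is strongly normalizing and, walking backwards along the path and applying the above repeatedly, so are $M_{n-1}, \dots, M_0 = M$. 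The Conservation Theorem itself is the real work, and is classically derived from the finiteness of developments together with the Standardization Theorem; see \cite[Thm.~11.3.4]{Bare}.

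For the first item, the plan is to reduce computability to the simulation of a Turing machine — equivalently, to Kleene's normal form theorem — inside $\LamI$. The two ingredients are available: $\LamI$ has fixed-point combinators (e.g.\ $\comb{Y}, \Theta_l, \Bible$ of \cref{ex:fpcs}), so primitive recursion and unbounded search are realised by iteration and fixed points, and $\LamI$ contains pairing $[M,N] = \lam x.xMN$ (\cref{ex:combinators}), hence finite tuples and lists with which to encode machine configurations. The subtle point is that a configuration carries a bound on all of its data, so the transition function can be written without any $\comb{K}$-style erasure, by always recombining the components that are not consulted back into the configuration; and the extraction of the output from a halting configuration, together with the choice of numeral system (note $\comb{c}_0 = \lam fx.x \notin \LamI$, so a \lamI-adapted representation of $\Nat$ and a correspondingly adapted notion of definability must be used), is arranged so that the unused part of the tape is absorbed rather than genuinely discarded. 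This faithful simulation of erasure in an erasure-free calculus is the conceptual heart of the theorem; the full construction is carried out in \cite[Thm.~9.2.16]{Bare}. Among the three items I expect this one, together with the Conservation Theorem underlying the second, to concentrate all the difficulty, the remainder being routine bookkeeping.
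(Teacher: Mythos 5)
Your proposal is correct and matches the paper's treatment: the paper establishes (i) and (ii) purely by citation to the same results of Barendregt whose standard proofs you sketch (λI-definability with an adapted numeral system, and the Conservation Theorem for non-erasing reductions), and (iii) by exactly the observation you spell out — restriction of a λ-theory satisfies the λI-closure conditions, while $\mathcal{H}_{\comb{I}}$, consistent as a λI-theory but generating an inconsistent λ-theory, witnesses the failure of the converse. No gaps; your elaboration of the restriction argument and of the $\mathcal{H}_{\comb{I}}$ counterexample is just a fleshed-out version of what the paper leaves implicit.
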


By \cref{thm:LamIMainProps3}, Böhm trees induce a \lamI-theory, 
but we argue that this theory does not respect the spirit of $\LamI$. 
\emph{E.g.}, the variable $l$ is never erased along the reduction of, say, 
$\Bible$, but it disappears in its B\"ohm tree. 
One says that $l$ is \emph{pushed to infinity} in $\BT{\Bible}$.
This shows that the Böhm tree of the \lamI-term $\lam l.\Bible\in\LamI$
is not a $\lamI$-tree (using the terminology of \cite{Bare}):
the outer \lam-abstraction \enquote{$\lam l$} does not bind 
any free occurrence of $l$ in $\BT{\Bible} = \lam f.f(f(f(\,\cdots)))$.


\section{Ohana trees}\label{sec:MemTrees}

We have seen that Böhm trees are not well-suited for representing 
\lamI-terms faithfully, because variables that are never erased along the 
reduction can be forgotten (\ie, pushed into infinity).
We now introduce a notion of evaluation tree that records all variables that are pushed along each path, 
 ensuring that none are forgotten, whether they remain in passive position or not.

\subsection{The coinductive definition}

For every finite set $X\subseteq\Var$, written $X\subf \Var$, we introduce a constant $\bot_X$ representing any looping $M\in\LamI(X)$. 
Intuitively, the \emph{Ohana tree} of a \lamI-term $M$ is obtained from $\BT{M}$ by annotating each subtree (also 
$\bot$) with the free variables of the \lam-term that generated it.

\begin{defi}\label{def:memorytrees4LamI}
The \emph{Ohana tree} of a term $M\in\LamI(X)$ is coinductively defined as follows:
\bsub
\item If $M \redd[h] \lam x_1\dots x_n.yM_1\cdots M_k$ with $y\in 
X\cup\set{\vec x}$, $M_i\in\LamI(X_i)$ (for $1\le i \le k$), then\smallskip
\[
\begin{tikzpicture}
\node 		(MTM) 	at (-22pt,0)				{$\MT M \eqdef$};
\node[n] 	(y) 	at (1.15,0) 		{$\lam x_1\dots x_n.y$};
\node[n] 	(MTM1) 	at (.6,-1) 	{$\MT{M_1}$};
\node 				at (2,-1) 		{\dots};
\node[n] 	(MTM2) 	at (3.4,-1) 	{$\MT{M_k}$};
\draw[b] ($(y.south east)+(-8pt,0)$) to[pos=0.5]node[m]{$X_1$} (MTM1.north);
\draw[b] (y.south east) to[pos=0.5]node[m]{$X_k$} (MTM2.north);
\end{tikzpicture}
\]
\item $\MT{M} \eqdef \bot_X$, otherwise. Recall that $X = \fv(M)$, 
since $M\in\LamI(X)$.
\esub
\end{defi}


\begin{figure}
	\centering
	\scalebox{.85}{%
	\begin{tikzpicture}[baseline]
		\node[n] 	(E-lx1) 					{$λx_1.x_1$};
		\node[left=0pt of E-lx1.west] {$\MT{\comb E_x} =$};
		\node[n] 	(E-bot1) 	at (-1,-1) 		{$\bot_\emptyset$};
		\node[n] 	(E-lx2) 	at (1.5,-1) 		{$λx_2.x_2$};
		\node[n] 	(E-bot2) 	at (0.5,-2) 		{$\bot_{\set x}$};
		\node[n] 	(E-lx3) 	at (3,-2) 		{$λx_3.x_3$};
		\node[n] 	(E-bot3) 	at (2,-3) 		{$\bot_\emptyset$};
		\node 		(E-inf) 	at (4.5,-3) 		{};
		\draw[b] ($(E-lx1.south)+(3pt,0)$) to[pos=.5] node[m]{$\emptyset$} (E-bot1);
		\draw[b] ($(E-lx1.south east)-(2pt,0)$) to[pos=.5] node[m]{$\set x$} (E-lx2);
		\draw[b] ($(E-lx2.south)+(3pt,0)$) to[pos=.5] node[m]{$\set x$} (E-bot2);
		\draw[b] ($(E-lx2.south east)-(2pt,0)$) to[pos=.5] node[m]{$\set x$} (E-lx3);
		\draw[b] ($(E-lx3.south)+(3pt,0)$) to[pos=.5] node[m]{$\emptyset$} (E-bot3);
		\draw[b,dotted] ($(E-lx3.south east)-(2pt,0)$) to[pos=.5] node[m]{$\set x$} (E-inf);
	\end{tikzpicture}
	\hspace{1.5cm}
	\begin{tikzpicture}[baseline,overlay]
			\node[n] 	(O-lx1) 					{$λx_1.x_1$};
		\node[left=0pt of O-lx1.west] {$\MT{\comb O_x} =$};
		\node[n] 	(O-bot1) 	at (-1,-1) 		{$\bot_{\set x}$};
		\node[n] 	(O-lx2) 	at (1.5,-1) 		{$λx_2.x_2$};
		\node[n] 	(O-bot2) 	at (0.5,-2) 		{$\bot_\emptyset$};
		\node[n] 	(O-lx3) 	at (3,-2) 		{$λx_3.x_3$};
		\node[n] 	(O-bot3) 	at (2,-3) 		{$\bot_{\set x}$};
		\node 		(O-inf) 	at (4.5,-3) 		{};
		\draw[b] ($(O-lx1.south)+(3pt,0)$) to[pos=.5] node[m]{$\set x$} (O-bot1);
		\draw[b] ($(O-lx1.south east)-(2pt,0)$) to[pos=.5] node[m]{$\set x$} (O-lx2);
		\draw[b] ($(O-lx2.south)+(3pt,0)$) to[pos=.5] node[m]{$\emptyset$} (O-bot2);
		\draw[b] ($(O-lx2.south east)-(2pt,0)$) to[pos=.5] node[m]{$\set x$} (O-lx3);
		\draw[b] ($(O-lx3.south)+(3pt,0)$) to[pos=.5] node[m]{$\set x$} (O-bot3);
		\draw[b,dotted] ($(O-lx3.south east)-(2pt,0)$) to[pos=.5] node[m]{$\set x$} (O-inf);
	\end{tikzpicture} 
	\hspace{5cm}
	\begin{tikzpicture}[baseline]
		\node[n] 	(YR-l0) 					{$λx_0.x_0$};
		\node[left=0pt of YR-l0.west] {\llap{$\MT{\comb Y\comb Rl} =$}};
		\node[n]	(YR-l1)		at (8pt,-1.1)		{$λx_1.x_1$};
		\node[n]	(YR-l2)		at (16pt,-2.2)		{$λx_2.x_2$};
		\node		(YR-inf)	at (24pt,-3.3)		{};
		\draw[b] ($(YR-l1)+(0,26pt)$) to[pos=.5] node[m]{$\set{l,x_0}$} (YR-l1);
		\draw[b] ($(YR-l2)+(0,26pt)$) to[pos=.5] node[m]{$\set{l,x_0,x_1}$} (YR-l2);
		\draw[b,dotted] ($(YR-inf)+(0,26pt)$) to[pos=.5] 
			node[m]{$\set{l,x_0,x_1,x_2}$} (YR-inf);
	\end{tikzpicture}%
	}
	\caption{Examples of Ohana trees of \lamI-terms.}\label{fig:MTrees}
\end{figure}

We sometimes use an inline presentation of Ohana trees, 
by introducing application symbols annotated with the sets $X_i$, as in 
$\lam x_1\ldots x_n.y\cdot^{X_1}T_1\cdots^{X_k}T_k$.
\begin{exa}\label{ex:MTs}\mbox{}
\bsub
\item $\MT{\comb{I}} = \lam x.x$, $\MT{\comb{D}} = \lam x.x\cdot^{\set{x}}x$, $\MT{\comb{B}} = \lam fgx.f\cdot^{\set{g,x}}(g\cdot^{\set{x}}x)$.
\item $\MT{\Om} = \MT{\lam y.\Om y} = \bot_{\emptyset}$, $\MT{\Om x} = \bot_{\set x}$. More generally: $\MT{\Om \vec x} = \bot_{\set{x_1,\dots,x_n}}$. 
\item  $\MT{\comb{Y}} = \lam f.f\cdot^{\set{f}}(f\cdot^{\set{f}}(f\cdot^{\set{f}}(\,\cdots)))$. 
	In fact, it is easy to check that for all \emph{closed} fpcs $Y\in\LamI$, we have $\MT{Y} = \MT{\comb{Y}} \cong \BT{\comb{Y}}$.
\item In $\MT{\Bible}$ the variable $l$ which is pushed into infinity becomes visible: $\MT{\Bible} =\MT{\Theta_l} = \lam f.f\cdot^{\set{f,l}}(f\cdot^{\set{f,l}}(f\cdot^{\set{f,l}}(\,\cdots)))$. Thus $\MT{\Bible} \neq \MT{\Theta_x}$ when $x\neq l$.
\esub
\end{exa}

Other examples of Ohana trees are depicted in \cref{fig:MTrees}.
Note that the Ohana trees of $\comb{E}_x$ and $\comb{O}_x$ are now distinguished.
The interest of ${\comb{YR}l}$ is that it pushes into infinity an increasing amount of variables: 
\[\comb{YR}l =_\beta \lam x_0.x_0((\comb{YR})(x_0l)) =_\beta \lam x_0.x_0(\lam x_1.x_1((\comb{YR})(x_1(x_0l))))\, {=_\beta} \cdots\]
So---at the limit---infinitely many variables are pushed into infinity (but only $l$ is free).

\begin{rem} If $M\in\LamI$ has a $\beta$-nf $N$, then there is an isomorphism $\MT{M}\cong N$.
In fact, since $N$ is finite, it is possible to reconstruct all the labels in $\MT{M}$.  
\end{rem}

We are going to show that Ohana trees are invariant under $\beta$-reduction 
(\cref{prop:InvariantUnderBeta} plus \cref{thm:MTbijApp}), and that the 
equality induced on \lamI-terms is a \lamI-theory 
(\cref{cor:MisALamITheory}).

\subsection{The associated continuous approximation theory}

We introduce a theory of program approximation that captures Ohana trees, by adapting the well-established approach originally developed for Böhm trees~\cite{Levy76,Wadsworth78} (see also \cite[Ch.~14]{Bare}).
We start by defining the approximants with memory, corresponding to finite Ohana trees.

\begin{defi}\label{def:approximants}\mbox{}
\bsub
\item \label{def:approximants1}
	The set of \emph{approximants with memory} is 
	$\Appset \eqdef \bigdunion_{X\subf \Var} \Appset(X)$, 
	where $\Appset(X)$ is defined by (for $A_i\in\Appset(X_i)$, $X_i\subseteq X$, 
	$y\in X\cup\set{\vec x}$ and $\vec x\in\bigcup_{i=1}^k X_i$):
	\bnf[\Appset(X)]{A}{ \bot_X \mid \lam x_1\ldots x_n.yA_1\cdots A_k}
	
\item We define $\mathord{\MTle} \subseteq \Appset^2$ as 
the least partial order closed under the following rules:
\[
	\infer{\bot_X\MTle A}{A\in\Appset(X)}\qquad
	\infer{\lam x_1\dots x_n.y\, A_1\cdots A_k \MTle \lam x_1\dots x_n.y\, A'_1\cdots A'_k}{A_i\MTle A'_i&A_i,A'_i\in\Appset(X_i)&\textrm{for all }1\le i \le k&\vec x\in\set{y}\cup\big(\bigcup_{i=1}^k X_i\big)}
\]
\esub
\end{defi}

Each $(\Appset(X), \MTle)$ is a pointed poset (partially ordered set) with bottom element $\bot_X$, while the poset $(\Appset,\MTle)$ is not pointed because it has countably many minimal elements.
Note that we do not annotate the application symbols 
in $A\in\Appset$ with the sets of variables $X_i$, 
since the labels in its $\bot$-nodes carry enough information 
to reconstruct the finite Ohana tree associated with $A$.

\begin{lem}
\label{lem:bijAppset_finmemtrees}\label{lem:bijAppsetfinmemtrees}
There is a bijection between $\Appset$ and the set of finite Ohana trees of \lamI-terms.
\end{lem}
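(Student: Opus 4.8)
The plan is to exhibit the bijection explicitly, as a pair of mutually inverse maps between $\Appset$ and the set of finite Ohana trees of \lamI-terms, obtained by adding, resp.\ forgetting, the labels $X_i$ decorating the application symbols. Write $\mathrm{ap}$ for the \emph{forgetful} map, sending an Ohana tree $T$ to the labelled tree obtained by erasing every label on an application symbol while keeping the node decorations $\lam\vec x.y$ and $\bot_X$ untouched; since we only ever apply it to finite trees, it is well defined by structural recursion. In the other direction, define the \emph{free variables} $\fv(A)$ of an approximant $A\in\Appset$ by $\fv(\bot_X)\eqdef X$ and $\fv(\lam x_1\dots x_n.yA_1\cdots A_k)\eqdef(\set y\cup\bigcup_i\fv(A_i))\setminus\set{\vec x}$, and let $\mathrm{tr}(A)$ be the tree with the same shape and node decorations as $A$, in which the application symbol leading to the subtree $\mathrm{tr}(A_i)$ is labelled by $\fv(A_i)$. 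Then $\mathrm{ap}(\mathrm{tr}(A)) = A$ for every $A\in\Appset$, so $\mathrm{tr}$ is injective; it remains to check that $\mathrm{tr}$ takes values in finite Ohana trees of \lamI-terms, and that every such tree lies in its image.

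For the first point, associate to each $A\in\Appset$ a \lamI-term $M_A$ by the standard $\Om$-substitution: replace, bottom-up, every leaf $\bot_X$ (say $X=\set{x_1,\dots,x_n}$) by $\Om x_1\cdots x_n$ and keep head normal forms, i.e.\ $M_{\bot_X}\eqdef\Om x_1\cdots x_n$ and $M_{\lam\vec x.yA_1\cdots A_k}\eqdef\lam\vec x.yM_{A_1}\cdots M_{A_k}$. A straightforward induction on $A$, using the side conditions of \cref{def:approximants} (which are exactly those making $\lam\vec x.yM_{A_1}\cdots M_{A_k}$ a \lamI-term) together with $\fv(\Om\vec x)=\set{\vec x}$, shows $M_A\in\LamI$ and $\fv(M_A)=\fv(A)$. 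Moreover $M_A$ is already in head normal form in the non-$\bot$ case, and $\MT{\Om\vec x}=\bot_X$ by \cref{ex:MTs}; so unfolding \cref{def:memorytrees4LamI} (with zero head-reduction steps) and using the induction hypotheses $\MT{M_{A_i}}=\mathrm{tr}(A_i)$ and $\fv(M_{A_i})=\fv(A_i)$ yields $\MT{M_A}=\mathrm{tr}(A)$. Hence $\mathrm{tr}(A)$ is indeed the (finite) Ohana tree of a \lamI-term.

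For surjectivity, let $T=\MT M$ be a finite Ohana tree of a term $M\in\LamI$; we claim $T=\mathrm{tr}(\mathrm{ap}(T))$, which also shows $\mathrm{ap}(T)\in\Appset$. By construction $\mathrm{ap}(T)$ and $T$ share the same shape and node decorations, so it suffices to prove that each subtree of $T$, which by \cref{def:memorytrees4LamI} is of the form $\MT N$ for a subterm $N\in\LamI$ reachable along head reductions, carries as its incoming application label exactly $\fv(\mathrm{ap}(\MT N))$; but that label is $\fv(N)$, so we are left to show $\fv(\mathrm{ap}(\MT N))=\fv(N)$. We prove this by induction on the finite tree $\MT N$: if $N$ has no hnf then $\MT N=\bot_{\fv(N)}$ and both sides equal $\fv(N)$; if $N\redd[h]\lam\vec x.yN_1\cdots N_k$ then, since $\LamI(X)$ is closed under $\beta$-conversion (\cref{rem:aboutLambdaI}), $\fv(N)=(\set y\cup\bigcup_i\fv(N_i))\setminus\set{\vec x}$, while $\fv(\mathrm{ap}(\MT N))=(\set y\cup\bigcup_i\fv(\mathrm{ap}(\MT{N_i})))\setminus\set{\vec x}$, and the two agree by the induction hypothesis. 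Putting this together with the first point, $\mathrm{tr}$ and $\mathrm{ap}$ restrict to mutually inverse bijections between $\Appset$ and the set of finite Ohana trees of \lamI-terms.

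The one delicate point is the last induction: it is exactly here that the \lamI-restriction is essential — a free variable is never erased, hence always resurfaces either as a head variable or inside a $\bot$-label, which is why the forgotten application labels are recoverable — and the finiteness of $\MT N$ is what makes the induction legitimate. Everything else is routine bookkeeping of the side conditions of \cref{def:approximants,def:memorytrees4LamI}.
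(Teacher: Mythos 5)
Your proposal is correct and takes essentially the same route as the paper: both directions go through the $\Om$-substitution term $M_A$ (your $\mathrm{tr}(A)$ coincides with the paper's $\iota(A)=\MT{M_A}$, and your forgetful map $\mathrm{ap}$ is its $\iota^-$). The only difference is that you spell out the round-trip verification $\mathrm{tr}(\mathrm{ap}(T))=T$ — namely that the application labels of a finite Ohana tree are recoverable because free variables are never erased in $\LamI$ — which the paper compresses into \enquote{a straightforward induction}.
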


	\begin{proof}
	Given $A\in\Appset$, define a \lamI-term $M_A$ 
	by structural induction on $A$:
	\begin{itemize}
	\item if $A = \bot_{\set{x_1,\dots,x_n}}$ then 
		$M_A\eqdef \Om x_1\cdots x_n$.
	\item if $A = \lam\vec x.yA_1\cdots A_k$ then 
		$M_A \eqdef \lam\vec x.y(M_{A_1})\cdots (M_{A_k})$.
	\end{itemize}
	Define a map $\iota$ by $\iota(A) \eqdef \MT{M_A}$.
	Conversely, let $M\in\LamI$ be such that $T\eqdef\MT{M}$ is finite. 
	Since $T$ is finite, we can construct an approximant $\iota^-(T)$ 
	by structural induction on~$T$. There are two cases:
	\begin{itemize}
	\item $T = \bot_X$, with $X = \fv(M)$. 
		In this case, define $\iota^-(T) \eqdef \bot_X$.
	\item If $T = \lam\vec x.y\cdot^{X_1}T_1\cdots^{X_k} T_k$ then 
		$M\redd[h] \lam\vec x.yM_1\cdots M_k$ with $T_i = \MT{M_i}$ 
		and $X_i = \fv(M_i)$, for all $i\,(1\le i\le k)$. 
		In this case, define $\iota^-(T) = 
		\lam\vec x.y(\iota^-(T_1))\cdots (\iota^-(T_k))$.
	\end{itemize}
	A straightforward induction shows $\iota(\iota^-(T)) = T$ and 
	$\iota^-(\iota(A)) = A$.
	\end{proof}

We now give the recipe to compute the set of approximants with memory of a \lamI-term.

\begin{defi}\mbox{}
\bsub
\item The \emph{direct approximant} of $M\in\LamI(X)$ is given by:
\[
	\dirapp{M} \eqdef 
	\begin{cases}
	 \lam x_1\ldots x_n.y\,\dirapp{M_1}\cdots \dirapp{M_k},&\textrm{if }M = \lam x_1\ldots x_n.yM_1\cdots M_k,\\	
	\bot_X,&\textrm{otherwise.}
	\end{cases}
\]
\item The set $\Appof{M}$ of \emph{approximants with memory of $M\in\LamI$} is defined by:
\[
	\Appof{M} \eqdef \set{ A\in\Appset \mid \exists N\in\LamI\,.\, M\redd[\beta]N \textrm{ and }A\MTle \dirapp{N}}
\]
\esub
\end{defi}

\begin{rem} \label{rem:appof-M-fv-M}
	For all $M \in \LamI(X)$, we have $\dirapp M \in \Appset(X)$ and $\Appof M \subseteq \Appset(X)$.
\end{rem}

\begin{exa}\mbox{}
\bsub
\item $\Appof{\comb{D}} = \set{\bot_\emptyset, \lam x.x\bot_{\set x}, 
		\comb{D}}$, 
	$\Appof{\Om} = \set{\bot_{\emptyset}}$%
	.
\item $\Appof{\comb{Y}} = \set{\bot_\emptyset} \cup 
		\set{ \lam f.f^n(\bot_{\set f}) \mid n\in\Nat} $.
\item $\Appof{\Bible} = \set{\bot_{\set{l}}} \cup 
		\set{ \lam f.f^n(\bot_{\set{f,l}}) \mid n\in\Nat} $.
\item $\Appof{\comb{YR}l} = \set{\bot_{\set{l}},
	\lam x_0.x_0(\bot_{\set{x_0,l}}),
	\lam x_0.x_0(\lam x_1.x_1(\bot_{\set{x_0,x_1,l}})), \dots}$.
\esub
\end{exa}

\begin{lem}\label{lem:DirApp}\mbox{}
\bsub
\item\label{lem:DirApp1} 
	For every $M,N\in\LamI$, $M\bred N$ entails  $\dirapp{M} \MTle \dirapp{N}$.
\item\label{lem:DirApp2} 
	For all $M\in\LamI(X)$, $\Appof{M}$ is an ideal of the poset $(\Appset(X),\sqsubseteq)$. More precisely:
	\begin{enumerate}
	\item
		non-emptiness: $\bot_X\in \Appof{M}$;
	\item
		downward closure: $A\in \Appof{M}$ and $A'\MTle A$ imply $A'\in\Appof{M}$;
	\item
		directedness: $\forall A_1,A_2\in\Appof{M},\exists A_3\in\Appof{M}$ such that $A_1\MTle A_3 \sqsupseteq A_2$.
	\end{enumerate}
\esub
\end{lem}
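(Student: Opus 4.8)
The plan is to establish \cref{lem:DirApp1} by structural induction on $M$, and then derive \cref{lem:DirApp2} from it, confluence of $\red[\beta]$ being the only ingredient needed for directedness.

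For \cref{lem:DirApp1}, fix a step $M\bred N$ and split cases according to \cref{rem:hnfs}. If $M$ has a head redex, then $\dirapp M=\bot_X$ with $X=\fv(M)$; since $\LamI(X)$ is closed under $\beta$-conversion (\cref{rem:aboutLambdaI2}) we get $N\in\LamI(X)$, hence $\dirapp N\in\Appset(X)$ by \cref{rem:appof-M-fv-M}, and the first defining rule of $\MTle$ gives $\bot_X\MTle\dirapp N$. Otherwise $M=\lam\vec x.y M_1\cdots M_k$ is a head normal form, and a $\beta$-redex in $M$ can only occur strictly inside one of the arguments, since the head spine $\lam\vec x.y(-)\cdots(-)$ contains none; so $N=\lam\vec x.y M_1\cdots M_i'\cdots M_k$ for a single index $i$ with $M_i\bred M_i'$. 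By the induction hypothesis $\dirapp{M_i}\MTle\dirapp{M_i'}$, while $\dirapp{M_j}\MTle\dirapp{M_j}$ for $j\ne i$ by reflexivity. Writing $X_j\eqdef\fv(M_j)$ we have $\fv(M_i')=X_i$ (again by $\beta$-closure), so all the approximants involved lie in the appropriate $\Appset(X_j)$ (\cref{rem:appof-M-fv-M}), and the side condition $\vec x\in\set y\cup\bigcup_j X_j$ of the second defining rule of $\MTle$ holds precisely because $M$ is a \lamI-term, so each of $x_1,\dots,x_n$ occurs free in $yM_1\cdots M_k$. That rule then yields $\dirapp M\MTle\dirapp N$. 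Finally, an easy induction on the length of a reduction, together with transitivity of $\MTle$, upgrades this to: $M\redd[\beta]N$ implies $\dirapp M\MTle\dirapp N$.

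For \cref{lem:DirApp2}, set $X=\fv(M)$ and recall from \cref{rem:appof-M-fv-M} that $\Appof M\subseteq\Appset(X)$. Non-emptiness: with $N\eqdef M$ we have $M\redd[\beta]N$ and $\bot_X\MTle\dirapp N$ by the first rule of $\MTle$, so $\bot_X\in\Appof M$. Downward closure: if $A\in\Appof M$ is witnessed by some $N$ (so $A\MTle\dirapp N$) and $A'\MTle A$, then $A'\MTle\dirapp N$ by transitivity, so $A'\in\Appof M$ via the same $N$. Directedness: if $A_1,A_2\in\Appof M$ are witnessed by $N_1,N_2$, confluence of $\red[\beta]$ gives $N_3$ with $N_1\redd[\beta]N_3$ and $N_2\redd[\beta]N_3$; the multi-step form of \cref{lem:DirApp1} then gives $A_1\MTle\dirapp{N_1}\MTle\dirapp{N_3}$ and $A_2\MTle\dirapp{N_2}\MTle\dirapp{N_3}$, and $\dirapp{N_3}\in\Appof M$ since $M\redd[\beta]N_3$ and $\dirapp{N_3}\MTle\dirapp{N_3}$; hence $A_3\eqdef\dirapp{N_3}$ is a common upper bound.

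The step I expect to be most delicate is the head-normal-form case of \cref{lem:DirApp1}: one has to check carefully that contracting a redex strictly inside an argument leaves both the head spine and the free-variable set of every argument unchanged, so that the structural rule for $\MTle$ applies verbatim, and to observe that the defining property of \lamI-terms is exactly what supplies that rule's side condition $\vec x\in\set y\cup\bigcup_j\fv(M_j)$. Everything else is routine bookkeeping about free variables, and directedness is the only place where confluence is genuinely used.
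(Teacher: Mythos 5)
Your proof is correct. Part~(i) follows the paper's own argument almost verbatim (case split via \cref{rem:hnfs}, invariance of free variables under $\beta$ in $\LamI$ for the head-redex case, induction hypothesis on the argument containing the contracted redex in the hnf case), and your extra care about the side condition $\vec x\in\set{y}\cup\bigcup_i X_i$ and the multi-step upgrade by transitivity are fine. Where you genuinely diverge is directedness in part~(ii): you apply confluence once, at the level of the witnessing reducts $N_1,N_2$ of $M$, and take $A_3\eqdef\dirapp{N_3}$ for a common reduct $N_3$, using the multi-step form of~(i); the paper instead proceeds by structural induction on $A_1$ and explicitly constructs the \emph{least} upper bound $A_1\MTlub A_2$, showing it lies in $\Appof{M}$ (confluence is used there too, but componentwise on the subterms). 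Your route is shorter and proves exactly what the statement asks; the paper's route buys the stronger fact that $\Appof{M}$ is closed under binary joins, which is silently relied upon later (e.g.\ the step $\bigMTlub_{j} A_i^j\in\Appof{N_i}$ in the proof of \cref{lem:normal-approximants}, and the agreement-of-approximants argument in \cref{thm:MTbijApp}). If one adopted your proof, that join-closure would still follow — bounded pairs in $\Appset(X)$ have a join, and your $\dirapp{N_3}$ bounds $A_1,A_2$, so downward closure gives $A_1\MTlub A_2\in\Appof{M}$ — but this small extra observation would need to be recorded somewhere.
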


	\begin{proof}
	\bsub
	\item By induction on a derivation of $M\bred N$. By \cref{rem:hnfs}, 
		either $M$ has a head redex or it has a head variable.
		If $M = \lam \vec x.(\lam y.M')M_0M_1\cdots M_k$ then 
		$\dirapp{M} = \bot_{\fv(M)}$. 
		By \cref{rem:aboutLambdaI}\ref{rem:aboutLambdaI2}, 
		$M\bred N$ entails $\fv(N) = \fv(M)$, 
		whence $\dirapp{N}\in\Appset(\fv(M))$.
		This case follows since $\bot_{\fv(M)}$ is the bottom of 
		$\Appset(\fv(M))$.
		
		If $M = \lam \vec x.yM_1\cdots M_k$ then $\dirapp{M} = \lam \vec 
		x.y\,\dirapp{M_1}\cdots \dirapp{M_k}$.
		The $\beta$-reduction must occur in some $M_i\bred N_i$, whence 
		$\dirapp{N} = \lam \vec x.y\,\dirapp{M_1}\cdots \dirapp{N_i}\cdots 
		\dirapp{M_k}$. We conclude since $\dirapp{M_i}\MTle\dirapp{N_i}$ 
		holds, from the induction hypothesis.
			
		\item To prove that $\Appof{M}$ is an ideal, we need to check three 
		conditions.
		\begin{enumerate}
		\item
			Since $\bot_{X}\sqsubseteq \dirapp{M}$ for all 
			$M\in\Appset(X)$, we get $\bot_{X} \in \Appof{M}$.
		\item
			By definition of $\Appof{M}$, this set is downward closed.
		\item
			Let us take $A_1,A_2\in\Appof{M}$ and prove that 
			$A_1\sqcup A_2\in\Appof{M}$.
			We proceed by structural induction on, say, $A_1$.
			
			Case $A_1 = \bot_X$. Then $A_2\in\Appset(X)$ and $A_1\sqcup A_2 
			= A_2\in\Appof{M}$.
			
			Case $A_1 = \lam x_1\ldots x_n.y\,A'_1\cdots A'_k$. If $A_2 = 
			\bot_X$ then proceed as before. Otherwise $A_1\in\Appof{M}$ 
			because there is a reduction $M\redd[\beta] \lam x_1\ldots 
			x_n.y\,M'_1\cdots M'_k=:M_1$ with $A'_i\in\Appof{M'_i}$.
			Similarly, $A_2\in\Appof{M}$ because there is a reduction 
			$M\redd[\beta] \lam x_1\ldots x_{n'}.y\,M''_1\cdots 
			M''_{k'}=:M_2$ with $A_2 = \lam x_1\ldots x_{n'}.y\,A''_1\cdots 
			A''_{k'} $ and $A''_j\in\Appof{M''_j}$.
			By confluence of $\bred$, $n = n'$ and $k = k'$, and every 
			$M'_i,M''_i$ have a common reduct $M'''_i$.
			By \cref{lem:DirApp}\ref{lem:DirApp1}, 
			$A'_i,A''_i\in\Appof{M'''_i}$, so by HI $A'_i\sqcup 
			A''_i\in\Appof{M'''_i}$.
			Conclude by taking $A_3 \eqdef \lam x_1\ldots x_n.y\,(A'_1\sqcup 
			A''_1)\cdots (A'_k\sqcup A''_k)$.\qedhere
		\end{enumerate}
	\esub
	\end{proof}

The next result shows that $\beta$-convertible terms share the same set of approximants.\par\addvspace{1.5ex}

\begin{prop}\label{prop:InvariantUnderBeta}
	Let $M,N\in\LamI(X)$. If $M\red[\beta] N$, then $\Appof{M} = \Appof{N}$.
\end{prop}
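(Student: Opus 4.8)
The plan is to prove the two inclusions $\Appof{M} \subseteq \Appof{N}$ and $\Appof{N} \subseteq \Appof{M}$ separately, reducing both to the single-step case $M \bred N$ and then closing under reflexive-transitive closure. The easy inclusion is $\Appof{N} \subseteq \Appof{M}$: if $A \in \Appof{N}$ then there is $P$ with $N \redd[\beta] P$ and $A \MTle \dirapp{P}$; since $M \redd[\beta] N \redd[\beta] P$, the very same witness $P$ shows $A \in \Appof{M}$. This direction needs no lemmas beyond the definition and works even for multi-step $M \redd[\beta] N$, so I would dispatch it first in one line.

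For the harder inclusion $\Appof{M} \subseteq \Appof{N}$, suppose $A \in \Appof{M}$, so there is $P$ with $M \redd[\beta] P$ and $A \MTle \dirapp{P}$. I need to find $Q$ with $N \redd[\beta] Q$ and $A \MTle \dirapp{Q}$. The natural move is to invoke confluence of $\bred$ (Church–Rosser): since $M \redd[\beta] P$ and $M \redd[\beta] N$, there is a common reduct $Q$ with $P \redd[\beta] Q$ and $N \redd[\beta] Q$. It remains to check $A \MTle \dirapp{Q}$. We have $A \MTle \dirapp{P}$ by hypothesis, and $\dirapp{P} \MTle \dirapp{Q}$ follows from $P \redd[\beta] Q$ by iterating \cref{lem:DirApp}\ref{lem:DirApp1} along the reduction $P \redd[\beta] Q$. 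Since $\MTle$ is transitive (it is a partial order by \cref{def:approximants}), we conclude $A \MTle \dirapp{Q}$, hence $A \in \Appof{N}$.

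Concretely the proof runs: take $A \in \Appof{M}$ with witness $P$, i.e.\ $M \redd[\beta] P$ and $A \MTle \dirapp{P}$; by confluence pick $Q$ with $N \redd[\beta] Q$ and $P \redd[\beta] Q$; by monotonicity of $\dirapp{\cdot}$ along $\bred$ (\cref{lem:DirApp}\ref{lem:DirApp1}) get $\dirapp P \MTle \dirapp Q$; conclude $A \MTle \dirapp Q$ by transitivity, so $Q$ witnesses $A \in \Appof{N}$. The reverse inclusion is immediate since $M \redd[\beta] N \redd[\beta] P$ whenever $N \redd[\beta] P$. Note the hypothesis could even be weakened to $M \redd[\beta] N$: confluence and the monotonicity lemma both apply verbatim to multi-step reductions, so the single-step assumption is only a convenience.

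The only genuine subtlety — and hence the main thing to be careful about rather than a real obstacle — is that $\dirapp{\cdot}$ is monotone for $\bred$ but one must not expect equality: head redexes can destroy or create structure, and indeed $\dirapp{M}$ may be a strict lower bound of $\dirapp{N}$ (e.g.\ $\dirapp{\Om x} = \bot_{\set x}$ while a head-normalizing reduct would have a richer direct approximant). This is exactly why the definition of $\Appof{\cdot}$ quantifies existentially over reducts and takes the downward closure under $\MTle$, and why the argument goes through $\dirapp P \MTle \dirapp Q$ rather than through any equality. Everything else is a routine unfolding of definitions together with the confluence of $\beta$-reduction, which holds for the \lamI-calculus just as for the \lam-calculus.
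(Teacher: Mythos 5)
Your proof is correct and relies on exactly the same ingredients as the paper's: confluence of $\beta$-reduction (which restricts to $\LamI$), the monotonicity of direct approximants from \cref{lem:DirApp}\ref{lem:DirApp1} (iterated along a multi-step reduction), and transitivity of $\MTle$, plus the trivial reverse inclusion. The only difference is organizational: the paper establishes $\Appof{M}\subseteq\Appof{N}$ by an induction on the structure of $A$, treating the $\bot_X$ case and the head-normal case separately and applying the lemma componentwise to the subterms of the common reduct, whereas you apply $\dirapp{P}\MTle\dirapp{Q}$ to the whole common reduct and conclude in one step by transitivity --- a slightly more direct phrasing of the same argument, and your observation that the hypothesis could be weakened to $M\redd[\beta]N$ is also accurate.
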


	\begin{proof}
	To prove $\Appof{M} = \Appof{N}$, we show the two inclusions.
	
	($\subseteq$)
	Take $A\in\Appof{M}$, then there exists a reduction $M\redd[\beta] M'$, 
	such that $A\MTle\dirapp{M'}$. We proceed by induction on $A$.
	
	Case $A = \bot_X$. By \cref{rem:appof-M-fv-M}, since $\bot_X\MTle 
	\dirapp{N}$, for all $N\in\LamI(X)$.
	
	Case $A = \lam\vec x.yA_1\cdots A_k$. 
	Then $M' = \lam\vec x.yM'_1\cdots M'_k$ with $A_i\MTle\dirapp{M'_i}$, 
	for all $1\le i \le k$.
	Since $M\red[\beta] N$, by confluence, $N$ and $M'$ have a common reduct 
	$N' = \lam\vec x.yN'_1\cdots N'_k$ with $M'_i\redd[\beta] N'_i$. By 
	\cref{lem:DirApp}\ref{lem:DirApp1} we have 
	$\dirapp{M'_i}\MTle\dirapp{N'_i}$ and, by transitivity, 
	$A_i\MTle\dirapp{N'_i}$.
	We found a reduction $N\redd[\beta] N'$ such that $A\MTle \dirapp{N'}$, 
	whence $A\in\Appof{N}$.
	
	($\supseteq$) Straightforward, since every reduct of $N$ is also a 
	reduct of $M$.
	\end{proof}

In order to relate Ohana trees with approximants (\cref{thm:MTbijApp}), we need to extend the order on approximants to Ohana trees and define the set of finite subtrees of an Ohana tree:

\begin{defi} Let ${\MTrees}=\set{\MT{M}}[M\in\LamI]$.
\bsub
\item Given an Ohana tree $T \in\MTrees$, the set of its \emph{free 
variables} is defined by
	\[
	\fv(T) \eqdef \begin{cases}
	\big(\bigcup_{i=1}^k X_i \cup \{y\}\big) \setminus \{x_1,\dots,x_n\},&\textrm{ if }T=\lam x_1\ldots x_n.y\cdot^{X_1}T_1\cdots^{X_k}T_k,\\
	X,&\textrm{ if }T=\bot_X.
	\end{cases}
	\]
	Notice that if $T = \MT{M}$ then $\fv(T) = \fv(M)$.
\item
	Define $\mathord{\MTle}\subseteq {\MTrees}^2$ as 
	the least partial order on Ohana trees closed under the rules:
	\[
	\infer{\bot_X\MTle T}{\fv(T)=X}\qquad
	\infer{\lam x_1\ldots x_n.y\cdot^{X_1}T_1\cdots^{X_k}T_k
		\MTle \lam x_1\ldots x_n.y\cdot^{X_1}T'_1\cdots^{X_k}T'_k}
		{T_i\MTle T'_i&\textrm{for all }1\le i \le k}
	\]
\item
	Given an Ohana tree $T \in\MTrees$, define the set of its \emph{finite 
	subtrees} by setting:
	\[
		T^\star \eqdef \set{\OT N} [N \in \LamI,\ \OT N\MTle T \text{ and $\OT N$ is finite}].
	\]
\esub
\end{defi}

\newcommand\pos[1]{{\sf Pos}(#1)}

\begin{defi} Let $A\in\Appset$ be an approximant.
\bsub
\item The set $\pos{A}$ of \emph{positions of $A$} is
	the subset of $\Nat^*$ inductively defined by:
	\begin{itemize}
		\item If $A=\bot_X$, then $\pos{A}=\set\epsilon$
			(the empty sequence);
		\item If $A= \lam x_1\ldots x_n.yA_1\cdots A_k$, then $\pos{A}=\set{i\cdot p}[1\leq i \leq k \text{ and } p\in\pos{A_i}]$.
	\end{itemize}
\item Given a position $p\in \pos{A}$, define $A(p)$
	by induction on the length of $p$ as follows:
	\begin{itemize}
		\item Base case $p=\epsilon$. If $A=\bot_X$ then ${A}(p)=\bot_X$,
		otherwise $A= \lam x_1\ldots x_n.yA_1\cdots A_k$ and  $A(p)=\lam x_1\ldots x_n.y$;
		\item Case $p=i\cdot p'$. Then $A= \lam x_1\ldots x_n.yA_1\cdots A_k$, 
		with $1\leq i \leq k$ and $A(p)=A_i(p')$.
	\end{itemize}

\esub
\end{defi}

Sets of positions are lifted to sets of approximants as expected and one can similarly define $\pos{M} = \pos{\Appof{M}}$. 
We can now prove the main theorem of this section, generalizing the 
Continuous Approximation Theorem for Böhm trees 
\autocite{Levy76,Hyland75,Wadsworth78}.

\begin{thm}[Continous Approximation]\label{thm:MTbijApp} 
	There is a bijection $\mtapp{\cdot}$ between
	the set $\MTrees$ of Ohana trees 
	and the set $\Appsets = \set{\Appof{M}}[M\in\LamI]$
	of approximants of \lamI-terms,
	such that: 
	\[ 
		\mtapp{\MT{M}} = \Appof{M}, \textrm{ for all $M\in\LamI$}.
	\]
\end{thm}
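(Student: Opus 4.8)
The plan is to establish the bijection $\mtapp{\cdot}$ by exhibiting it as the map $T \mapsto T^\star$, the set of finite subtrees of $T$, and then proving that for every $M\in\LamI$ we have $\MT{M}^\star = \Appof M$. Since by \cref{lem:bijAppset_finmemtrees} finite Ohana trees are in bijection with approximants, a finite subtree $\OT N \MTle T$ can be identified with the approximant $\iota^-(\OT N) \in \Appset$; so $T^\star$ is canonically a subset of $\Appset$, and the content of the theorem is that this subset equals $\Appof M$ when $T = \MT M$. Once the equality $\MT{M}^\star = \Appof M$ is known, surjectivity of $\mtapp{\cdot}$ onto $\Appsets$ is immediate by definition of $\Appsets$, and well-definedness on $\MTrees$ is clear; injectivity will follow from the fact that an Ohana tree is the supremum of its finite subtrees (a tree is determined by its finite approximants, since distinct trees differ at some finite position, hence are separated by some finite subtree), so $\MT M = \MT N$ whenever $\MT M^\star = \MT N^\star$, and conversely distinct Ohana trees yield distinct sets of finite subtrees.

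The core of the argument is therefore the two inclusions of $\MT M^\star = \Appof M$. For $(\supseteq)$, take $A \in \Appof M$: there is $N$ with $M \redd[\beta] N$ and $A \MTle \dirapp N$. One shows by a routine induction on $A$ (equivalently, on the structure of $N$'s head normal form data) that $\dirapp N$, viewed as an Ohana tree via $M_{\dirapp N}$, satisfies $\OT{M_{\dirapp N}} \MTle \MT N = \MT M$ — the equality $\MT N = \MT M$ being exactly \cref{prop:InvariantUnderBeta} together with the fact that $\Appof{-}$ determines $\MT{-}$, or more directly the invariance of head reduction data along $\beta$. Since $A \MTle \dirapp N$ and $\dirapp N$ is finite, $A$ is a finite subtree of $\MT M$, i.e. $A \in \MT M^\star$. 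For $(\subseteq)$, take a finite $\OT N \MTle \MT M$; we must produce a reduct $M'$ of $M$ with $\OT N \MTle \dirapp{M'}$. Here one argues by induction on the (finite) structure of $\OT N$: if $\OT N = \bot_X$ then $\bot_X \in \Appof M$ by non-emptiness (\cref{lem:DirApp}\ref{lem:DirApp2}); if $\OT N = \lam\vec x.y\cdot^{X_1}T_1\cdots^{X_k}T_k$, then because $\OT N \MTle \MT M$ the tree $\MT M$ has the same head $\lam\vec x.y$ with $k$ children, so $M \redd[h] \lam\vec x.yM_1\cdots M_k$ with $T_i \MTle \MT{M_i}$; apply the induction hypothesis to each $T_i$ to get reducts $M_i'$ with $T_i \MTle \dirapp{M_i'}$, reduce inside $M$ to $M' = \lam\vec x.yM_1'\cdots M_k'$, and use monotonicity of $\dirapp{-}$ (\cref{lem:DirApp}\ref{lem:DirApp1}, giving $\dirapp{M_i} \MTle \dirapp{M_i'}$, but here we want $T_i \MTle \dirapp{M_i'}$ directly from the IH) to conclude $\OT N \MTle \dirapp{M'}$, hence $\OT N \in \Appof M$.

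The main obstacle I anticipate is the careful bookkeeping of the variable annotations — the sets $X_i$ on application edges and the sets $X$ on $\bot$-leaves — through the correspondence $\Appset \cong \{\text{finite Ohana trees}\}$ of \cref{lem:bijAppset_finmemtrees}. One must check that the order $\MTle$ on $\Appset$ transports to the order $\MTle$ on $\MTrees$ restricted to finite trees (the two definitions of $\MTle$ in the excerpt must be shown to agree under $\iota$/$\iota^-$, in particular that the side conditions $\vec x \in \{y\}\cup\bigcup_i X_i$ match the reconstructed labels), and that "$\OT N$ finite and $\OT N \MTle \MT M$" is genuinely the same data as "$A \in \Appset$ with $A \MTle \dirapp{M'}$ for some reduct $M'$". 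The positions machinery ($\pos{A}$, $A(p)$, $\pos M = \pos{\Appof M}$) is presumably introduced precisely to make this identification — and the induction on finite trees above — rigorous by reasoning position-by-position. Everything else is a direct assembly of \cref{lem:DirApp}, \cref{prop:InvariantUnderBeta}, \cref{lem:bijAppset_finmemtrees}, and confluence, following the classical Böhm-tree template.
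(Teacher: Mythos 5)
Your overall architecture is the same as the paper's: the map is $T \mapsto \iota^-(T^\star)$, the heart is the equality $\iota^-(\MT{M}^\star) = \Appof{M}$ proved by two inclusions, and the bijectivity rests on the fact that an Ohana tree can be recovered from its finite approximants. Your inclusion $\MT{M}^\star \subseteq \Appof{M}$ (level-by-level induction on the finite subtree) is essentially the paper's argument, there packaged via normal multi-hole contexts; and your injectivity sketch ("distinct trees differ at a finite position") is the content of the paper's explicit inverse $\appmt{\cdot}$, built from $\pos{\Appof M}$ by labelling vertices and edges and shown to satisfy $\appmt{\mtapp{\MT M}} = \MT M$ by coinduction — so that part needs the "tree is the sup of its truncations" step spelled out, but it is the intended route.

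The step that does not go through as you justify it is the other inclusion, $\Appof{M} \subseteq \MT{M}^\star$. You take $M \redd[\beta] N$ with $A \MTle \dirapp{N}$, embed $\dirapp{N}$ as a finite subtree of $\MT{N}$, and then conclude via $\MT{N} = \MT{M}$. But $\beta$-invariance of Ohana trees is not available at this point: in the paper it is explicitly obtained as a \emph{consequence} of \cref{prop:InvariantUnderBeta} together with \cref{thm:MTbijApp}, and your primary justification ("$\Appof{-}$ determines $\MT{-}$") is precisely the theorem being proved, hence circular; your fallback ("invariance of head reduction data along $\beta$") is true but is an unproven auxiliary claim that would itself require a coinductive argument using confluence and preservation of free variables in $\LamI$ (\cref{rem:aboutLambdaI}). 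The paper avoids this entirely: it proves $\Appof{M} \subseteq \mtapp{\MT M}$ by induction on $A$, observing that $A = \lam\vec x.yA_1\cdots A_k \in \Appof{M}$ forces $M$ to head-reduce to $\lam\vec x.yM_1\cdots M_k$ with $A_i \in \Appof{M_i}$ (confluence plus \cref{prop:InvariantUnderBeta}), and then feeds the induction hypothesis into a single unfolding of the coinductive definition of $\MT{M}$ — so only the already-established invariance of $\Appof{\cdot}$ is used. Either repair works: prove tree-level $\beta$-invariance separately first, or switch this inclusion to the paper's induction. The bookkeeping issues you flag (monotonicity of $\iota$, $\iota^-$ and agreement of the two orders $\MTle$) are real but routine, and are indeed what the positions machinery is for.
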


\begin{proof}
	Let us build the requested bijection
	${\mathcal A}: \MTrees \rightarrow  \Appsets$.
	Let $M\in\LamI$ and $T$ its Ohana tree. 
	Since $T^\star$ contains only finite \enquote{subtrees} of $T$, 
	which are all Ohana trees of $\lamI$-terms, 
	one can consider a set of approximants $\mtapp T$ obtained 
	as the direct image of $T^\star$ by the mapping $\iota^-$
	defined in the proof of \cref{lem:bijAppsetfinmemtrees}. 
	In other words, we define $\mtapp{T} \eqdef \iota^-(T^\star)$. 
	We now prove that $\mtapp T = \Appof{M}$.
	
	To describe a finite portion of $T$,
	it is convenient to introduce \enquote{normal} contexts 
	with multiple holes denoted $[]_i$ (for $i\in\Nat$), 
	that are defined inductively by the following grammar:
	\[
		{\sf N} \eqbnf []_i \mid 
		\lam x_1\ldots x_n.y{\sf N}_1\cdots {\sf N}_k\qquad(i,n,k\in\Nat)
	\]
	Given a normal context $\sf N$ with $n$ holes 
	and terms $M_1,\ldots, M_n$, 
	we denote by ${\sf N}[M_1,\dots,M_n]$ 
	the term obtained from ${\sf N}$ by simultaneously substituting 
	each $M_i$ for  all the occurrences of the $i$-th hole, 
	possibly with capture of free variables.

	\begin{itemize}
	\item Let us prove that $\mtapp T \subseteq \Appof{M}$.
		Let $T'\in T^\star$, then there exist 
		$ M_1\in\LamI(X_1), \dots, M_n\in \LamI(X_n)$ and 
		a normal context ${\sf N}$ such that 
		${\sf N}[M_1,\dots, M_n] \in\LamI$ 
		with $M\redd[\beta] {\sf N}[M_1,\dots, M_n]$ 
		and $T = \MT{{\sf N}[\Om\vec{X_1},\dots, \Om\vec{X_n}]}$.
		Therefore $\iota^-(T') = {\sf N}[\bot_{X_1}, \dots, \bot_{X_n}]$ 
		and since $\bot_{X_i}\MTle \dirapp{M_i}$, $\iota^-(T') \in\Appof{M}$ 
		which proves the inclusion. 
	\item For the converse inclusion, 
		we prove by induction on the structure of $A$ that,
		for any $M\in\LamI$ and $A\in\Appof{M}$, 
		we have $A\in\mtapp{\MT M}$:
		\begin{itemize}
			\item If $A= \bot_X$, then $X = \fv(M)$ and 
			$A\in \iota^-(\MT{M}^\star)$;
			\item Otherwise, $A = \lam x_1\ldots x_n.yA_1\cdots A_k$. 
			In this case, we have 
			$M\red[h] \lam x_1\ldots x_n.yM_1\cdots M_k$ 
			with $M_i\in\LamI(X_i)$ and $A_i\in\Appof{M_i}$ 
			for $1\leq i \leq k$. 
			By induction hypothesis we have some $T_i \in\MT{M_i}^\star$ 
			such that $A_i\in \iota^-(T_i)$ for all $i\,(1\leq i \leq k)$. 
			Since $\lam x_1\ldots x_n.y \cdot^{X_1} T_1 \cdots^{X_k} T_k 
			\in\MT{M}^\star$, we indeed have that $A \in \mtapp{\MT{M}}$.
		\end{itemize}
	\end{itemize}
	\medskip
	
	Let us now define ${\mathcal T}$, the inverse of ${\mathcal A}$.
	Let $A, B\in \Appof{M}$. 
	By directedness of $\Appof{M}$ 
	(\cref{lem:DirApp}\ref{lem:DirApp2}), 
	there exists $C\in\Appof{M}$ such  that $A,B\MTle C$.
	From this, it follows that at any position where 
	two approximants of $M$ are defined and different from $\bot_X$, 
	they agree. 
	We  define a (potentially) infinite tree $\appmt{\Appof{M}}$ 
	from $\Appof{M}$ as a pair of partial functions $(v,e)$, 
	respectively labelling the edges and vertices of the tree, 
	defined on $\pos{\Appof{M}}$ as follows.
	\begin{itemize}
		\item For every position $p \in \pos{\Appof{M}}$, 
		either for every $A\in \Appof{M}$ such that $p\in\pos{A}$, 
		$A(p)=\bot_X$, in which case $v(p) = \bot_X$, 
		or there is some $A\in \Appof{M}$ such that $p\in\pos{A}$, 
		$A(p)=\lam x_1\ldots x_n.y$ in which case 
		$v(p)\eqdef\lam x_1\ldots x_n.y$.
		\item For every non-empty position $p \in \pos{\Appof{M}}$,
		let $A\in \Appof{M}$ be such that $A(p)=\bot_X$ and 
		set $e(p)\eqdef X$.
	\end{itemize}
	A simple coinduction gives us 
	$\appmt{\mtapp{\MT{M}}}=\MT{M}$ for any $M\in\LamI$.
\end{proof}

\subsection{Digression: Ohana trees for the full \lam-calculus.}
\label{sec:Ohana-lamK}

It is natural to wonder whether it makes sense to 
extend the definition of Ohana trees to arbitrary \lam-terms.
The idea is to consider the set of 
\emph{permanent free variables} of $M\in\Lam$:
\[
	\pfv(M) \eqdef \set{ x \mid \forall N\in\Lam\,.\, M\redd[\beta] N 
	\textrm{ implies }x\in\fv(N)}
\]
Intuitively, $\pfv(M)$ is the set of free variables of $M$ that are never 
erased along any reduction.

\begin{defi}\label{def:memtrees4Lambda}
The \emph{Ohana tree} of a \lam-term $M$ can be then defined by setting:
\begin{itemize}
\item $\OT{M} \eqdef
	\lam\vec x.y\cdot^{\pfv(M_1)}\MT{M_1}\cdots^{\pfv(M_k)}\MT{M_k}$, 
	if $M \redd[h] \lam x_1\dots x_n.yM_1\cdots M_k$;
\item $\OT M \eqdef \bot_{\pfv(M)}$, otherwise.
\end{itemize}
\end{defi}

For $M\in\LamI$ the definition above is equivalent to 
\cref{def:memorytrees4LamI}, since $\pfv(M) = \fv(M)$.
For an arbitrary $M\in\Lam$, one needs an oracle to compute $\pfv(M)$, 
but an oracle is already needed to determine whether $M$ has an hnf. 

It is easy to check that the above notion of Ohana tree remains invariant by $\red[\beta]$. The key property which is exploited to prove this result is that $\pfv(\lam\vec x.yM_1\cdots M_k) = \set{y}\cup\pfv(M_1)\cup\cdots \cup \pfv(M_k)$. 
The main problem of \cref{def:memtrees4Lambda} is that the equality induced 
on \lam-terms is not a \lam-theory because it is not compatible with 
application.
The following counterexample exploits the fact that, in general, $\pfv(MN) \neq \pfv(M)\cup\pfv(N)$.
Define: 
\[
	\comb{W} = \lam xy.yxy,\qquad 
	M = \lam z.\comb{W}(zv_1v_2)\comb{W},\qquad 
	N = \lam z.\comb{W}(zv_2v_1)\comb{W},\qquad\textrm{ for }v_1,v_2\in\Var.
\]
We have $M\red[h] \lam z.(\lam y.y(zv_1v_2)y)\comb{W} \red[h] M$ and $N\red[h] \lam z.(\lam y.y(zv_2v_1)y)\comb{W} \red[h] N$, whence $\MT{M} = \MT{N} = \bot_{\set{v_1,v_2}}$. 
By applying these terms to $\comb{K}$, we obtain $M\comb{K} \redd[\beta] \comb{W}v_1\comb{W}$ and $N\comb{K} \redd[\beta] \comb{W}v_2\comb{W}$, therefore $\MT{M\comb{K}} = \bot_{\set{v_1}} \neq \bot_{\set{v_2}} = \MT{N\comb{K}}$. 
We conclude that the equality induced on $\Lam$ is not a \lam-theory, because it is not compatible with application.
This counterexample is particularly strong because it is independent from the fact that Ohana trees are generalizations of B\"ohm trees and not, say, Berarducci trees. 
Indeed, the $M,N$ constructed above have the same 
Böhm tree, Lévy-Longo tree and Berarducci tree.


\section{A resource calculus with memory}
\label{sec:resource}


We now introduce the \emph{\lamI-resource calculus} refining the (finite) resource calculus~\cite{EhrhardR03}, best known as the target language of Ehrhard and Regnier's Taylor expansion~\cite{Ehrhard.Reg.08}. 
So, the resource calculus is not meant to be a stand-alone language, but rather another theory of approximations for the \lam-calculus.
Before going further, we recall its main properties.


We consider here the promotion-free fragment of the resource calculus introduced in \cite{PaganiT09}.
Its syntax is similar to the \lam-calculus, except for the applications that are of shape $\rt{s}\rb t$, where $\rb t = [\rt{t}_1,\dots,\rt{t}_n]$ is a multiset of resources called \emph{bag}.
The resources populating $\rb t$ are linear as they cannot be erased or copied by $\rt s$, they must be used \emph{exactly once} along the reduction. 
When contracting a term of the form $s = (\lam x.\rt{s'})[\rt{t}_1,\dots,\rt{t}_n]$ there are two possibilities.
\begin{enumerate}
\item If the number of occurrences of $x$ in $\rt{s'}$ is exactly $n$, then each occurrence is substituted by a different $\rt{t}_i$. 
Since the elements in the bag are unordered, there is no canonical bijection between the resources and the occurrences of $x$. 
The solution consists in collecting all possibilities in a formal sum of terms, the sum representing an inner-choice operator.
\item If there is a mismatch between the number of occurrences and the amount of resources, then $s$ reduces to the empty-sum, $\esum{}$. From a programming-language perspective, this can be thought of as a program terminating abruptly after throwing an uncaught exception.
\end{enumerate}
The first-class citizens of the resource calculus are therefore \emph{finite sums} of resource terms, that are needed to ensure the (strong) confluence of reductions. Another important property is strong normalization, that follows from the fact that no resource can be duplicated.

\subsection{\lamI-resource expressions and \lamI-resource sums}

Our version of the resource calculus is extended with labels representing the memory of free variables that were present in the \lamI-term they approximate.
Just like in \cref{def:approximants}\ref{def:approximants1} 
we endowed the constant $\bot$ with a finite set $X$ of variables, 
here we annotate:
\begin{itemize}
\item the empty bag $1$ of resource terms, since an empty bag of approximants of $M$ should remember the free variables of $M$;
\item the empty sum $\esum{}$ of resource terms. Indeed, if a resource term vanishes during reduction because of the mismatch described above, its free variables should be remembered.
\end{itemize}

\begin{defi} \label{def:rterms}\mbox{}
\bsub\item
	For all $X \subf \Var$, the sets $\rterms(X)$ and $\rterms*(X)$
	are defined by induction:
	\begin{gather*}
		\infer{ x \in \rterms(\set{x}) }{} \qquad
		\infer{ \lam x.\rt{s} \in \rterms(X \setminus \set{x}) }
			{ \rt{s} \in \rterms(X) & x \in X } \qquad
		\infer{ \rt{s}\rb{t} \in \rterms(X \cup Y) }
			{ \rt{s} \in \rterms(X) & \rb{t} \in \rterms*(Y) } \\
		\infer{ 1_X \in \rterms*(X) }{} \qquad
		\infer{ [\rt{t}_0,\dots,\rt{t}_n] \in \rterms*(X)}
			{ \rt{t}_0 \in \rterms(X) & \cdots & \rt{t}_n \in \rterms(X) }
	\end{gather*}
\item
	The set $\rterms$ of \emph{\lamI-resource terms} and
	the set $\rterms*$ of \emph{bags} are given by:
	\[	\rterms \eqdef \bigdunion_{X \subf \Var} \rterms(X)
		\quad \text{and}\quad
		\rterms* \eqdef \bigdunion_{X \subf \Var} \rterms*(X). \]
\esub
\end{defi}
As a matter of notation, we let $\rterms**$ denote either $\rterms$ or $\rterms*$, indistinctly but coherently.
We call \emph{resource expressions} generic elements $s,t\in\rterms**$. 
We denote the union of two bags $\rb{t}, \rb{u}\in\rterms*(X)$ multiplicatively by $\rb{t} \cdot \rb{u}$,
whose neutral element is the empty bag $1_X$.

\begin{rem}\label{rem:aboutresourceterms}\mbox{}
\bsub
\item
In every \lamI-resource term of the form $\lam x.s$, the variable $x$ must occur freely in $s$: it may appear in the undecorated underlying term, or in the \enquote{memory} $X$ decorating $1_X$.
Therefore, it makes sense to define $\fv(\rt s) \eqdef X$ whenever $\rt{s} \in \rterms**(X)$.
\item\label{rem:aboutresourceterms2}
Each set $\rterms*(X)$ is isomorphic to the monoid of
multisets of elements of $\rterms(X)$.
Notice that $\rterms*$ is \emph{not} the set of all bags
of elements of $\rterms$, just its subset of bags
whose elements \emph{have the same free variables}
(and so inductively in the subterms).
\esub
\end{rem}

\begin{exa}\label{ex:resourceterms}\mbox{}
\bsub
\item The identity $\comb{I}$ belongs to $\rterms$, whereas the projections do not: $\comb{K},\comb{F}\notin\rterms$.
\item Note that $\lam xy.x1_{\emptyset}\notin\rterms$ since $y\notin\fv(x1_{\emptyset})$, but $\lam xy.x1_{\set{y}}\in\rterms$ because $y\in\set{y}$.
\item The terms $\comb{D}_0 = \lam x.x1_{\set{x}}$ and $\comb{D}_{n+1} = 
\lam x.x[x,\ldots,x]$, where the bag contains $n+1$ occurrences of $x$, are 
\lamI-resource terms. By 
\cref{rem:aboutresourceterms}\ref{rem:aboutresourceterms2}, we obtain:
\item $[x,y] \notin \rterms*$, while $[x, x[x], x[\lam y.y,\lam y.y[y]], (\lam y.y)[x],  \lam y.y1_{\set x}] \in \rterms*(\set{x})\subseteq\rterms*$.
\esub
\end{exa}

We consider resource expressions up to $\alpha$-equivalence, under the proviso that abstractions bind linear occurrences of variables as well as occurrences in the memory of empty bags. For instance, $\lam xyz.x1_{\set{x}}1_{\set{x,y,z}}$ and  $\lam x'y'z.x'1_{\set{x'}}1_{\set{x',y',z}}$ are considered $\alpha$-equivalent. 

\begin{defi}\label{def:frsums}
	For all $X \subf \Var$, the set $\frsums[X]**$
	of \emph{sums of \lamI-resource terms} (\enquote{\mbox{resource sums}}, for short)
	is defined as the $\Nat$-semimodule
	of finitely supported formal sums of expressions in $\rterms**(X)$,
	with coefficients in $\Nat$.
	Explicitly, it can be presented as:
	\bnf[ \frsums[X]** ]{ \rs{s},\rs{t} }
		{ \esum X \mid \rt r\mid \rt{r} + \rs{s} }
		[ \text{for $\rt{r} \in \rterms**(X)$} ]
	quotiented by associativity and commutativity of $+$,
	as well as neutrality of $\esum X$.
\end{defi}

Note that $\rterms**(X)\subseteq\frsums[X]**$, that is, resource expressions are assimilated to the corresponding
one-element sum.
The constructors of the calculus are extended
to resource sums by (bi)linearity, \ie
for $\rt{s} \in \rterms(X)$, $\rs{s} \in \frsums[X]$, 
$\rb{t} \in \rterms*(Y)$, $\rbs{t} \in \frsums[Y]*$,
$\rt{u} \in \rterms(Y)$ and $\rs{u} \in \frsums[Y]$, we have:
\[
	\begin{array}{llll}
	(\rt{s}+\rs{s}) \rb{t} \eqdef \rt{s}\rb{t} + \rs{s}\rb{t},&
	\esum X \rb{t} \eqdef \esum {X \cup Y},&
	\lam x.(\rt{s}+\rs{s}) \eqdef \lam x.\rt{s} + \lam x.\rs{s},&
	\lam x.\esum X \eqdef \esum {X \setminus \set{x}},\\[1ex]
	\rt{s} (\rb{t}+\rbs{t}) \eqdef \rt{s}\rb{t} + \rt{s}\rbs{t},&
	\rt{s}\esum Y \eqdef \esum {X \cup Y},&
	[\rt{u} + \rs{u}] \cdot \rb{t} \eqdef [\rt{u}]\cdot\rb{t}
		+ [\rs{u}]\cdot\rb{t},&	
	[\esum Y] \cdot \rb{t} \eqdef \esum Y.
	\end{array}
\]
Therefore, if $\esum X$ occurs in $\rs s$
not as a summand but as a proper subterm,
then $\rs s = \esum{\fv(\rs s)}$.

\subsection{Memory substitution and resource substitution}

While the usual finite resource calculus is completely linear,
the variables we store in the index of $1_X$ and $\esum X$ are not:
the \enquote{memory} $X$ remembers
the variables present in $X$, not their amounts, 
this is the reason why it is modelled as a set, not as a multiset. 
This consideration leads us to define two kinds of substitutions.

The (non-linear) \emph{memory substitution} of a set $Y\subf\Var$ for a variable $x$ in a \lamI-resource term $s$ does not interact with the linear occurrences of $x$ in $s$ (\ie, they remain unchanged), it just replaces the \enquote{memory} of $x$ in the empty bags with the content of $Y$.

\begin{defi} \label{def:msubst}\mbox{}
\bsub
\item For all $X,Y \subf \Var$ and $x \in \Var$, define
	\[
	X \msubst Y \eqdef \begin{cases}
		X \setminus \{x\} \cup Y, 
			& \text{if $x \in X$,} \\
		X, & \text{otherwise.}\\
	\end{cases}
	\]
	\item Given $\rt{s} \in \rterms**(X)$, the \emph{memory substitution} of  $x$ by $Y\subf\Var$ in $\rt{s}$ is the resource term $\rt{s} \msubst Y$ defined as follows (for $x \neq y$ and, in the abstraction case, $y \notin Y$):
	\begin{align*}
		x \msubst Y & \eqdef x, &
		(\rt{s}\rb{t}) \msubst Y & \eqdef 
			(\rt{s}\msubst Y)(\rb{t}\msubst Y),  \\
		y  \msubst Y & \eqdef y,  &
		1_X \msubst Y & \eqdef 1_{X\msubst Y}, \\
		(\lam y.\rt{s}) \msubst Y & \eqdef \lam y.\rt{s} \msubst Y,	&
		([\rt s] \cdot \rb t) \msubst Y & \eqdef
			[\rt s \msubst Y] \cdot (\rb t \msubst Y).
	\end{align*}
\esub			
\end{defi}

We now define the \emph{resource substitution} of a bag $\rb u$ for $x$ in 
$\rt s$, whose effect is twofold:
(1)~it non-deterministically replaces each linear occurrence of $x$ with a 
resource from the bag (as usual), and
(2)~it applies the memory substitution of $x$ by $\fv(\rb u)$ to the 
resulting sum of terms.

\begin{defi} \label{def:rsubst}
	Given $\rt s \in \rterms**(X)$, $x \in \Var$ and $\rb u \in \rterms*(Y)$,
	the \emph{resource substitution} of $x$ by $\rb u$ in $\rt s$
	is the resource sum $\rt s \rsubst u \in \frsums[X \msubst Y]**$
	defined as follows:
	\begin{align*}
		x \rsubst u & \eqdef 
			\begin{cases}
			\rt u,& \text{if $\rb u = [\rt u]$},\\
			\esum Y, & \text{otherwise},
			\end{cases} &
		(\rt{s}\rb{t}) \rsubst u & \eqdef 
			\sum_{\rb u = \rb v \cdot \rb w}
			(\rt{s}\rsubst v)( \rb{t}\rsubst w),\\
		y  \rsubst u & \eqdef 
			\begin{cases}
			y, & \text{if $\rb u = 1_Y$,} \\
			\esum {\set y}, & \text{otherwise},
			\end{cases}&
		1_X \rsubst u & \eqdef 
			\begin{cases}
			1_{X \msubst Y}, & \text{if $\rb u = 1_Y$}, \\
			\esum {X \msubst Y}, & \text{otherwise,} \\
			\end{cases}\\
		(\lam y.\rt{s}) \rsubst u & \eqdef 
			\lam y.\rt{s} \rsubst u, &
		([\rt s] \cdot \rb t) \rsubst u & \eqdef
			\sum_{\rb u = \rb v \cdot \rb w}
			[\rt s \rsubst v] \cdot (\rb t \rsubst w).
	\end{align*}
	with $x \neq y$ and in the abstraction case $y \notin Y$.
	We extend it to sums in $\frsums[X]**$ by setting:
	\begin{align*}
		\esum X \rsubst u & \eqdef \esum {X\msubst Y} &
		(\rt{s} + \rs{s}) \rsubst u & \eqdef 
			\rt{s} \rsubst u + \rs{s} \rsubst u \\
		\rt s \rsubst* {\esum Y} & \eqdef \esum {X \msubst Y} &
		\rt s \rsubst* {(\rb u + \rbs u)} & \eqdef
			\rt s \rsubst u + \rt s \rsubst* {\rbs u}.
	\end{align*}
\end{defi}

It is easy to verify that the definition above does indeed define a resource 
sum in $\frsums[X \msubst Y]**$, and that it is stable under the quotients 
of \cref{def:frsums}.

Let us now state a lemma that is not strictly needed,
but helps understanding resource substitution
and corresponds to the way it is usually
\enquote{packaged} in the standard resource calculus.

\begin{defi}The \emph{linear degree} $\deg_x(\rt s)\in\Nat$ of $x \in \Var$
	in some $\rt s \in \rterms**$ is defined by:
	\begin{align*}
	\deg_x(x) & \coloneqq 1, &
	\deg_x(\lam y.\rt s) & \coloneqq \deg_x(\rt s)
		\text{, wlog. $x\neq y$,} &
	\deg_x(1_X) & \coloneqq 0, \\
	\deg_x(y) & \coloneqq 0 ,&
	\deg_x(\rt s \rb t) & \coloneqq \deg_x(\rt s) + \deg_x(\rb t), &
	\deg_x([\rt s] \cdot \rb t) & \coloneqq \deg_x(\rt s) + \deg_x(\rb t).
	\end{align*}

\end{defi}

\begin{lem}\label{lem:aboutressubst}
	Consider $\rt s \in \rterms**(X)$, $x \in \Var$ and
	$\rb u \in \rterms*(Y)$,
	and write $n \coloneqq \deg_x(\rt s)$. Then
	\[	\rt s \rsubst u = 
		\begin{cases}
		\sum\limits_{\sigma \in \mathfrak{S}(n)}
			\rt s\left[ \rt u_{\sigma(1)}/x^{(1)}, \dots, 
			\rt u_{\sigma(n)}/x^{(n)} \right] \msubst Y,
			& \text{if the cardinality of $\rb u$ is $n$,} \\
		\esum {X \msubst Y}, & \text{otherwise,}\\
		\end{cases} 
	\]
	where: $\mathfrak{S}(n)$ is the set of permutations of $\set{1,\dots,n}$;
	$\rt u_1,\dots,\rt u_n$ is any enumeration of the elements in $\rb u$;
	$x^{(1)},\dots,x^{(n)}$ enumerate the occurrences of $x$ in $\rt s$; and
	$\rt s\left[ \rt u_{\sigma(1)}/x^{(1)}, \dots, \rt u_{\sigma(n)}/x^{(n)} \right]$
	is the \lamI-resource term obtained by substituting each element $\rt u_{\sigma(i)}$ for the occurrence $x^{(i)}$.
\end{lem}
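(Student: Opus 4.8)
The plan is to prove the identity by structural induction on the \lamI-resource expression $\rt s$, handling the mutually-defined families $\rterms(X)$ and $\rterms*(X)$ simultaneously and, in each case, comparing the relevant clause of \cref{def:rsubst} with the claimed closed form. Throughout I would work up to $\alpha$-conversion (binding both linear and memory occurrences) to assume that bound variables are distinct from $x$ and lie outside $Y$, so that the abstraction clauses of \cref{def:msubst,def:rsubst} apply verbatim. It is also convenient to fix, for the given $\rt s$, the left-to-right enumeration $x^{(1)},\dots,x^{(n)}$ of the occurrences of $x$ in $\rt s$; then in the application and bag cases the occurrences of $x$ in $\rt s$ are exactly those of the left component followed by those of the right component.

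First I would dispatch the base cases $\rt s=x$, $\rt s=y$ with $y\neq x$, and $\rt s=1_X$, together with the abstraction case $\rt s=\lam y.\rt{s'}$: in all of these the two sides coincide after unfolding the appropriate clause of \cref{def:rsubst} and doing the elementary arithmetic of $\cdot\msubst Y$ on finite sets of variables — using $y\neq x$ and $y\notin Y$ in the abstraction case, and $\deg_x(\lam y.\rt{s'})=\deg_x(\rt{s'})$. In the mismatch branch, where the cardinality of $\rb u$ is not $n$, one additionally uses that a resource sum all of whose summands are empty collapses to a single $\esum{}$ carrying the common free-variable label (\cref{def:frsums}), which here is forced to be $X\msubst Y$ because $\rt s\rsubst u\in\frsums[X \msubst Y]**$.

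The substantial case is application $\rt s=\rt{s_1}\rb t$ (the bag-cons case $\rt s=[\rt r]\cdot\rb t$ goes through verbatim). Here \cref{def:rsubst} gives $\rt s\rsubst u=\sum_{\rb u=\rb v\cdot\rb w}(\rt{s_1}\rsubst v)(\rb t\rsubst w)$; put $n_1=\deg_x(\rt{s_1})$ and $n_2=\deg_x(\rb t)$, so $n=n_1+n_2$. By \cref{rem:aboutresourceterms}, every decomposition $\rb u=\rb v\cdot\rb w$ of $\rb u\in\rterms*(Y)$ has $\rb v,\rb w\in\rterms*(Y)$, hence $\fv(\rb v)=\fv(\rb w)=Y$, and so by the induction hypothesis a summand is non-empty only when $|\rb v|=n_1$ and $|\rb w|=n_2$, which is possible precisely when $|\rb u|=n$. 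If $|\rb u|\neq n$, then all summands are empty and, using the set identity $(X_1\cup X_2)\msubst Y=X_1\msubst Y\cup X_2\msubst Y$, they all equal $\esum{X\msubst Y}$. If $|\rb u|=n$, substituting the closed forms supplied by the induction hypothesis into the surviving summands and using that $\rt s[\rt r_1/x^{(1)},\dots,\rt r_n/x^{(n)}]=\rt{s_1}[\rt r_1/x^{(1)},\dots,\rt r_{n_1}/x^{(n_1)}]\,\rb t[\rt r_{n_1+1}/x^{(n_1+1)},\dots,\rt r_n/x^{(n)}]$ (immediate from the left-to-right enumeration) reduces the goal to a purely combinatorial identity of the form
\[
\sum_{\substack{\rb u=\rb v\cdot\rb w\\ |\rb v|=n_1}}\ \sum_{\sigma_1\in\mathfrak S(n_1)}\ \sum_{\sigma_2\in\mathfrak S(n_2)}\ \Phi(\rb v,\sigma_1,\rb w,\sigma_2)\ =\ \sum_{\sigma\in\mathfrak S(n)}\Psi(\sigma),
\]
where $\Phi$ and $\Psi$ abbreviate the corresponding substituted-and-relabelled resource terms: one checks that the datum of an ordered splitting of $\rb u$ into sub-bags together with independent permutations of the two halves is the same as the datum of a single permutation of the $n$ elements of $\rb u$. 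I expect this counting step to be the main obstacle — it is exactly the one already present in the plain resource calculus, and the delicate point is that the multiplicities contributed by repeated elements of $\rb u$ must line up on both sides; by contrast the memory relabelling $\msubst Y$ plays no active role, since $\fv(\rb v)=\fv(\rb w)=\fv(\rb u)=Y$ throughout.
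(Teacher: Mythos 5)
Your proposal is correct and matches the paper's proof, which is exactly the same structural induction on $\rt s$ (the paper records it in one line as \enquote{straightforward induction on $\rt s$}); your write-up simply fills in the case analysis and the standard splitting-versus-permutation counting argument, which is indeed the only non-trivial point and is handled as in the plain resource calculus.
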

	
	\begin{proof}
	Straightforward induction on $\rt s$.
	\end{proof}

The next Substitution Lemma concerns the commutation of resource substitutions, and is the analogous of \cite[Lemma~2.1.16]{Bare}.
Notice that the assumptions on the substituted variables
are stronger than in the usual resource calculus
\cite[Lemma~2]{Ehrhard.Reg.08},
where only $x \notin Z$ is required:
here we also require that the substituted variables
actually occur in the term,
in accordance with the \enquote{\lamI} setting.

\begin{lem}[Substitution Lemma]
\label{lem:resource-substitution}
	Given $\rt s \in \rterms**(X)$,
	$\rb u \in \rterms*(Y)$,
	$\rb v \in \rterms*(Z)$,
	and
	$x \in X \setminus Z$, $y \in X \cup Y \setminus \set x$,
	\[	\rt s \rsubst u \rsubst v [y] =
		\sum_{\rb v = \rb v' \cdot \rb v''} \rt s \rsubst* {\rb v'} [y]
		\rsubst*{ \rb u \rsubst* {\rb v''}[y] }. \]
\end{lem}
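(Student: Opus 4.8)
The plan is to prove the identity by structural induction on the resource expression $\rt s \in \rterms**(X)$, carrying the induction out simultaneously for resource terms and for bags, since these are defined by mutual recursion and the statement is uniform over $\rterms**$. Before starting, I would record two auxiliary facts, whose only purpose is to isolate the bureaucracy specific to the memory annotations and to the \lamI-style side conditions. The first is a \emph{memory commutation} fact: whenever $x \neq y$ and $x \notin Z$, the iterated memory operations on indices commute, $(X \msubst Y)\msubst{Z}[y] = \bigl(X\msubst{Z}[y]\bigr)\msubst{Y\msubst{Z}[y]}$, and more generally $(\rt r \msubst Y)\msubst{Z}[y] = \bigl(\rt r\msubst{Z}[y]\bigr)\msubst{Y\msubst{Z}[y]}$ for every resource expression $\rt r$; this is a routine induction straight from \cref{def:msubst}, and it guarantees that the memory annotations decorating the two sides of the equation genuinely coincide. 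The second is an \emph{irrelevance} fact: if a variable $z$ is not free in $\rt r$, then $\rt r\rsubst*{1_W}[z] = \rt r$ while $\rt r\rsubst*{\rb w}[z] = \esum{\fv(\rt r)}$ for every non-empty bag $\rb w$; this is immediate from \cref{lem:aboutressubst}, since $\deg_z(\rt r) = 0$ makes the linear part trivial and the bundled memory substitution act vacuously. This second fact is what lets the argument go through despite the fact that the inductive hypothesis carries the side conditions ``$x$ (resp.\ $y$) occurs free'', which are not inherited by both immediate subterms of an application or a bag-cons.

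The base and unary cases are easy. Since $\rt s \in \rterms**(X)$ with $x \in X$, if $\rt s$ is a single variable then $X$ is a singleton, so the only possibility is $\rt s = x$, the other variable cases being vacuous under the hypothesis $y \neq x$. For $\rt s = x$ one computes both sides directly from \cref{def:rsubst}, distinguishing whether $\rb u$ is a singleton $[\rt u]$, a larger bag, or the empty bag $1_Y$, and likewise for $\rb v$; each of the finitely many combinations is a one-line check. The case $\rt s = 1_X$ is of the same nature, driven entirely by the cardinalities of $\rb u$ and $\rb v$ together with the memory commutation fact. For $\rt s = \lam z.\rt s'$ one first $\alpha$-renames so that $z \notin \set{x,y}\cup Y\cup Z$; then both substitutions commute with $\lam z$ by definition, the side conditions lift to $\rt s'$ (using $z\in\fv(\rt s')$ and $\fv(\rt s') = X\cup\set z$), and the claim follows from the induction hypothesis.

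The heart of the proof is the \emph{application} case $\rt s = \rt s_1\rb t$, and, word for word, the \emph{bag-cons} case $\rt s = [\rt s_1]\cdot\rb t$. Expanding the left-hand side by the application clause of \cref{def:rsubst} — first for the substitution of $\rb u$ for $x$, then for that of $\rb v$ for $y$ — and using bilinearity, one reaches a double sum over bag-partitions $\rb u = \rb u'\cdot\rb u''$ and $\rb v = \rb v'\cdot\rb v''$ whose generic summand is $\bigl(\rt s_1\rsubst*{\rb u'}\rsubst*{\rb v'}[y]\bigr)\bigl(\rb t\rsubst*{\rb u''}\rsubst*{\rb v''}[y]\bigr)$; expanding the right-hand side symmetrically produces a sum over a partition $\rb v = \rb v'\cdot\rb v''$, a further splitting of $\rb v'$, and a splitting of the bag-sum $\rb u\rsubst*{\rb v''}[y]$ across the two factors. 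On every factor in which both $x$ and $y$ occur free one applies the induction hypothesis — here one checks the side conditions, noting that $\fv(\rb u') = Y$ and $\fv(\rb v') = Z$ regardless of whether these pieces are empty, and that $x\notin Z$, $y\neq x$ hold by assumption — while on every factor from which $x$ or $y$ is absent one invokes the irrelevance fact to collapse the corresponding substitution, checking (using that $\rsubst*{\cdot}$ can only introduce the free variables of the substituted bag) that the collapse occurs symmetrically on both sides. What then remains is the purely combinatorial identity that reindexes the two families of bag-partitions into one another; it is exactly the identity appearing in the standard resource calculus \cite[Lemma~2]{Ehrhard.Reg.08} and rests only on associativity and commutativity of bag union. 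I expect this last step — more precisely, the bookkeeping of the many free-variable sub-cases for $\rt s_1$ and $\rb t$, and matching each one against either the induction hypothesis or the irrelevance fact — to be the only genuinely delicate point; everything else is a mechanical unfolding of the definitions, the sole novelty over the classical statement being the memory annotations, whose behaviour has already been isolated in the two auxiliary facts.
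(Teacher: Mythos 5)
Your proposal follows essentially the same route as the paper's proof: the same structural induction on $\rt s$ (with $x \in X$ forcing the only variable case to be $\rt s = x$), the same key auxiliary fact — your ``irrelevance'' statement is exactly the paper's \cref{lem:rsubst-x-notin-fv} — and the same treatment of the application and bag-cons cases via the double sum over bag partitions, the induction hypothesis combined with irrelevance for the factors not containing the substituted variable, and a final re-indexing of partitions using associativity and commutativity of bag union. The only differences are presentational: you isolate the memory-annotation bookkeeping as an explicit commutation fact and derive irrelevance from \cref{lem:aboutressubst}, whereas the paper proves it by a direct induction and leaves the index computations implicit in the base cases.
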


Before proving the lemma, let us state the following intermediary result.

\begin{lem} \label{lem:rsubst-x-notin-fv}
	Given $\rt s \in \rterms**(X)$, $\rb u \in \rterms*(Y)$
	and $x \notin X$,
	\[	\rt s \rsubst u = \left\{ \begin{array}{ll}
		\rt s, & \text{if $\rb u = 1_Y$,} \\
		\esum X, & \text{otherwise.}
		\end{array} \right. \]
\end{lem}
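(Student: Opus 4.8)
The plan is to proceed by straightforward structural induction on $\rt s \in \rterms**(X)$, following \cref{def:rsubst}. The hypothesis $x \notin X = \fv(\rt s)$ means that $x$ occurs neither as a linear occurrence in $\rt s$ nor in any memory label $1_Z$ or in the index of any $\esum Z$ appearing inside $\rt s$; this is exactly what makes the resource substitution behave as claimed. First I would observe that the statement factors through two cases on $\rb u$: either $\rb u = 1_Y$ is the (annotated) empty bag, in which case resource substitution should act as the identity (no linear occurrence of $x$ to feed, and the memory substitution $\msubst Y$ is trivial since $x \notin X$ entails $X \msubst Y = X$, and likewise in every subterm), or $\rb u$ is non-empty, in which case there are \enquote{too few} linear occurrences of $x$ to consume the resources, so the term collapses to $\esum X$.

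The induction cases are all mechanical. For a variable $y$: since $x \notin \{y\}$ we are in the second clause of the variable case of \cref{def:rsubst}, giving $y \rsubst u = y$ if $\rb u = 1_Y$ and $\esum{\set y}$ otherwise, which matches since here $X = \set y$. For an abstraction $\lam y.\rt s$ with $y \neq x$ (wlog $y \notin Y$): we have $x \notin \fv(\rt s) \setminus \set y$, and since $x \neq y$ also $x \notin \fv(\rt s)$, so the induction hypothesis applies to $\rt s$; then $(\lam y.\rt s) \rsubst u = \lam y.(\rt s \rsubst u)$, and we push the abstraction through (using $\lam y.\esum{\fv(\rt s)} = \esum{\fv(\rt s) \setminus \set y} = \esum{X}$ in the non-empty case). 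For an application $\rt s \rb t$: we use that $\fv(\rt s) \cup \fv(\rt t) = X$ does not contain $x$, hence $x$ is absent from both $\rt s$ and $\rt t$; in the sum $(\rt s \rb t)\rsubst u = \sum_{\rb u = \rb v \cdot \rb w} (\rt s \rsubst v)(\rb t \rsubst w)$, if $\rb u = 1_Y$ the only splitting is $\rb v = \rb w = 1_Y$ and by the induction hypothesis (applied to both components) we recover $\rt s \rb t$; if $\rb u$ is non-empty, then in every splitting at least one of $\rb v, \rb w$ is non-empty, so by the induction hypothesis that factor is an empty sum $\esum{(\cdot)}$, and by the (bi)linearity rules $\esum Z \rb t = \esum{Z \cup \fv(\rb t)}$ etc.\ each summand equals $\esum X$, so the whole sum is $\esum X$. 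The bag cases $1_X$ and $[\rt s] \cdot \rb t$ are handled identically, using $x \notin X$ so $X \msubst Y = X$ for the first, and the same splitting argument for the second.

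There is no real obstacle here; the only point requiring a little care is bookkeeping the free-variable sets across the (bi)linear extension — i.e.\ checking that when a subterm collapses to an empty sum, the accumulated index is exactly $X = \fv(\rt s)$ and not something smaller or larger. This follows from the systematic way the indices of $\esum{}$ are combined in the definitions after \cref{def:frsums} (union on application, removal of the bound variable on abstraction), together with the inductive definition of $\rterms**(X)$ in \cref{def:rterms}, which guarantees that every subterm of an element of $\rterms**(X)$ knows its own free-variable set. I would therefore simply write \enquote{Straightforward induction on $\rt s$, using that $x \notin X$ implies $X \msubst Y = X$ and, more generally, $Z \msubst Y = Z$ for every memory label $Z$ occurring in $\rt s$.}
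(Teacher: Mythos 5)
Your proof is correct and follows exactly the route the paper takes: the paper's own proof is just ``straightforward induction on $\rt s$'', and your case analysis (variable, abstraction, application, $1_X$, non-empty bag, with the bag-splitting argument and the observation that $x \notin X$ makes every memory substitution $\msubst Y$ trivial) is precisely the expansion of that induction.
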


	\begin{proof}
	Straightforward induction on $\rt s$.
	\end{proof}

	\begin{proof}[Proof of \cref{lem:resource-substitution}]
	By induction on $\rt s$.
	\begin{itemize}
	\item Case $\rt s = x$.
		(It is the only possible case of a variable
		because of the hypothesis $x \in X$.)
		We have
		\[	\sum_{\rb v = \rb v' \cdot \rb v''} x \rsubst* {\rb v'} [y]
				\rsubst*{ \rb u \rsubst* {\rb v''}[y] }
			= x \rsubst* {1_Z}[y] \rsubst*{\rb u \rsubst v[y]}
			= x \rsubst*{\rb u \rsubst v[y]}, \]
		hence if $\rb u = [\rt u]$, then
		both sides of the equality are equal to $\rt u \rsubst v[y]$,
		otherwise they are equal to $\esum{Y \msubst Z [y]}$.
	\item Case $\rt s = λy.\rt s'$.
		Immediate by the induction hypothesis on $s'$.
	\item Case $\rt s = \rt s' \rb s''$.
		We have $X = \fv(\rt s) = \fv(\rt s') \cup \fv(\rb s'')$.
		There are several possible cases.
		\begin{itemize}
		\item If $x \in \fv(\rt s')$ and $x \in \fv(\rb s'')$,
			then it is possible to apply the induction hypothesis
			in both subterms.
			By the definition of resource substitution,
			\begin{align*}
			&\rt s \rsubst u \rsubst v [y] = \\
			& = \adjustlimits \sum_{\rb u = \rb u_l \cdot \rb u_r}
				\sum_{\rb v = \rb v_l \cdot \rb v_r}
				\left(\rt s' \rsubst* {\rb u_l} \rsubst* {\rb v_l} [y]\right)
				\left(\rb s'' \rsubst* {\rb u_r} \rsubst* {\rb v_r} [y]\right)\\
			& = \adjustlimits \sum_{\rb u = \rb u_l \cdot \rb u_r}
				\sum_{\rb v = (\rb v'_l \cdot \rb v''_l) 
					\cdot (\rb v'_r \cdot \rb v''_r)}
				\left(\rt s' \rsubst* {\rb v'_l} [y]
					\rsubst*{ \rb u_l \rsubst* {\rb v''_l}[y]}\right)
				\left(\rb s'' \rsubst* {\rb v'_r} [y]
					\rsubst*{ \rb u_r \rsubst* {\rb v''_r}[y]}\right) \\
			\intertext{by the induction hypotheses 
			on $\rt s'$ and $\rb s''$,}
			& = \sum_{\rb v = \rb v' \cdot \rb v''} \rt s 
				\rsubst* {\rb v'} [y]
				\rsubst*{ \rb u \rsubst* {\rb v''}[y] }
			\end{align*}
			by permuting the sums and re-arranging the indices
			(using associativity and commutativity of the multiset union).
		\item If $x \in \fv(\rt s)$ and $x \notin \fv(\rb s'')$, then again
			\begin{align*}
			\rt s \rsubst u \rsubst v [y]
			& = \adjustlimits \sum_{\rb u = \rb u_l \cdot \rb u_r}
				\sum_{\rb v = \rb v_l \cdot \rb v_r}
				(\rt s' \rsubst* {\rb u_l} \rsubst* {\rb v_l} [y])
				(\rb s'' \rsubst* {\rb u_r} \rsubst* {\rb v_r} [y]) \\
			& = \sum_{\rb v = \rb v_l \cdot \rb v_r}
				(\rt s' \rsubst* {\rb u} \rsubst* {\rb v_l} [y])
				(\rb s'' \rsubst* {1_Y} \rsubst* {\rb v_r} [y]) \\
				& \qquad + \sum_{\substack{
					\rb u = \rb u_l \cdot \rb u_r \\ \rb u_r \neq 1_Y }}
				\sum_{\rb v = \rb v_l \cdot \rb v_r}
				(\rt s' \rsubst* {\rb u_l} \rsubst* {\rb v_l} [y])
				(\rb s'' \rsubst* {\rb u_r} \rsubst* {\rb v_r} [y]) \\
			& = \sum_{\rb v = \rb v_l \cdot \rb v_r}
				(\rt s' \rsubst* {\rb u} \rsubst* {\rb v_l} [y])
				(\rb s'' \rsubst* {\rb v_r} [y])
				+ \esum{X \msubst Y \msubst Z[y]} \\
			\shortintertext{by \cref{lem:rsubst-x-notin-fv},}
			& = \sum_{\rb v = (\rb v'_l \cdot \rb v''_l) \cdot \rb v_r} 
				\left(\rt s' \rsubst* {\rb v'_l} [y]
				\rsubst*{ \rb u \rsubst* {\rb v''_l}[y] }\right)
				\left(\rb s'' \rsubst* {\rb v_r} [y]\right) \\
			\shortintertext{by the induction hypothesis on $\rt s'$,}
			& = \sum_{\rb v = (\rb v'_l \cdot \rb v_r) \cdot \rb v''_l}
				\rt s \rsubst* {\rb v'_l \cdot \rb v_r} [y]
				\rsubst* { \rb u \rsubst* {\rb v''_l}[y] }
			\end{align*}
			by re-arranging the indices,
			and using again \cref{lem:rsubst-x-notin-fv}
			together with the hypotheses
			$x \notin Z$ and $x \notin \fv(\rb s'')$.
			This is exactly the desired result.
		\item The case $x \notin \fv(\rt s)$ and $x \in \fv(\rb s'')$
			is symmetric.
		\end{itemize}
	\item Case $\rt s = 1_X$.
		If $\rb u = 1_Y$ and $\rb v = 1_Z$,
		both sides of the equality are equal to
		$1_{X \msubst Y \msubst Z [y]}$.
		Otherwise, they are equal to
		$\esum{X \msubst Y \msubst Z [y]}$.
	\item Case $\rt s = \rt s' \cdot \rb s''$
		is similar to the case of application above.
	\qedhere
	\end{itemize}
	\end{proof}

\subsection{The operational semantics}

We now equip resource expressions and resource sums with a notion of reduction, denoted $\rred$. We first define reduction on a single \lamI-resource term and then extend it to resource sums in the usual way. Note that contracting a single redex in a term $t \in \rterms**(X)$ already produces a finite sum of terms in $\frsums[X]**$, due to the way substitution is defined (\cref{lem:aboutressubst}).

\begin{defi} \label{def:rred}\mbox{}
\bsub
\item
	For each $X \subf \Var$, define the \emph{resource reduction} as a relation between \lamI-resource terms and resource sums, \ie
	$\mathord{\rred} \subseteq \rterms**(X) \times \frsums[X]**$:\\[1ex]
	$\infer{ (\lam x.\rt s)\rb t \rred \rt s \rsubst t }{} \qquad
		\infer{ \lam x.\rt s \rred \lam x.\rs s' }{ \rt s \rred \rs s' } \qquad
		\infer{ \rt s \rb t \rred \rs s' \rb t }{ \rt s \rred \rs s' } \qquad
		\infer{ \rt s \rb t \rred \rt s \rbs t' }
			{ \rb t \rred \rbs t' } \qquad
		\infer{ [\rt s] \cdot \rb t \rred {[\rs s']} \cdot \rb t }
			{ \rt s \rred \rs s' } $\smallskip
\item\label{def:rred3} We extend the reduction relation $\rred$ to resource sums $\frsums[X]** \times \frsums[X]**$, and simultaneously introduce its reflexive closure $\rredr$, as follows: 
\[
	\infer{\rt s + \rs t \rred \rs s' + \rs t' }{
		\rt s \rred \rs s'
		&
		\rs t \rredr \rs t'
		}
	\qquad
	\infer{\rs t \rredr \rs t'}{\rs t \rred \rs t'}	
	\qquad
	\infer{\rs t \rredr \rs t}{}
\]
\item
	We denote by $\redd[\resource]$ the reflexive and transitive closure of 
	the relation $\rred$ given in \ref{def:rred3}. 
\esub
\end{defi}

Observe that $\rred$ is well-defined because $\fv((\lam x.\rt s)\rb t) = \fv(\rt s \rsubst t)$.
Indeed $\fv((\lam x.\rt s)\rb t) = 
\fv(\rt s) \setminus \set x \cup \fv(\rb t)$
with $x \in \fv(\rt s)$, whence $\fv(\rt s \rsubst t) = \fv(\rt s) \msubst {\fv(\rb t)}
= \fv(\rt s) \setminus \set x \cup \fv(\rb t)$.

\begin{exa} We use the resource \lamI-terms $\comb{I}$ and $\comb{D}_n$ from 
\cref{ex:resourceterms}.
\bsub
\item $\comb{D}_0[z]\rred (x1_{\set{x}})\langle[z]/x\rangle = 
	(x\langle[z]/x\rangle)(1_{\set{z}}) + (x\langle1_{\set{z}}/x\rangle)(1_{\set x}\langle [z]/x\rangle) = z1_{\set z}+\esum{\set{z}} = z1_{\set z}$. 
	Similarly, $\comb{D}_0[\comb{I}] \rred \comb{I}1_{\emptyset} 
	\red \esum{\emptyset}$.
\item $(\lam x.\comb{D_0}[x])[z]$ has two redexes. Contracting the outermost first gives $(\lam x.\comb{D_0}[x])[z]\rred \comb{D_0}[z]\rred z1_{\set z}$.
Contracting the innermost $(\lam x.\comb{D_0}[x])[z]\rred (\lam x.x1_{\set x})[z]\rred z1_{\set z}$.
\item $\comb{D}_1[\comb{I},\comb{I}]\rred \comb{I}[\comb{I}] + \comb{I}[\comb{I}]\rred \comb{I}+ \comb{I}[\comb{I}]\rred \comb{I}+\comb{I} = 2.\comb{I}$. 
Thus, sums can arise from single terms.
\esub
\end{exa}

We now show that $\rred$ enjoys the properties of 
strong normalization and strong confluence,
which are the key features of such a resource calculus.

\begin{thm} \label{the:rred-norm-confl}\mbox{}
\bsub
\item The reduction $\rred$ is strongly normalizing.
	\label{the:rred-norm-confl:norm}
\item $\rred$ is strongly confluent in the following sense:
	for all $\rs s, \rs t_1, \rs t_2 \in \frsums[X]**$
	such that $\rs s \rred \rs t_1$ and $\rs s \rred \rs t_2$,
	there exists $\rs u \in \frsums[X]**$ such that
	$\rs t_1 \rredr \rs u$ and $\rs t_2 \rredr \rs u$.
	\label{the:rred-norm-confl:confl}
\esub 
\end{thm}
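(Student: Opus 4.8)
The plan is to treat the two statements in the order given, since strong confluence is easiest to establish once one already knows the reduct sums are finite and well-behaved. For strong normalization~(\ref{the:rred-norm-confl:norm}), I would argue by a decreasing measure. The essential observation, inherited from the ordinary resource calculus, is that firing a redex $(\lam x.\rt s)\rb t$ can only \emph{decrease} the total size of a resource term: each linear occurrence of $x$ in $\rt s$ is replaced by exactly one element of $\rb t$ and no element is ever duplicated; the memory substitution $\msubst{\fv(\rb t)}$ does not create new linear constructors, it only rewrites the index sets decorating the $1_X$'s. Moreover, in the mismatch case the term collapses to an empty sum $\esum{X}$, which strictly decreases the size. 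Hence I would define $\size{\cdot}$ on $\rterms**$ (counting variable occurrences, abstractions, applications, and bag brackets, ignoring the memory labels), extend it to sums by taking a maximum or a multiset of sizes, and check that $\rt s \rred \rs s'$ implies every summand of $\rs s'$ has size strictly smaller than $\size{\rt s}$. This is a straightforward induction on the reduction derivation using \cref{lem:aboutressubst}. The lifting to sums in \cref{def:rred}\ref{def:rred3} is then handled with a well-founded multiset ordering $\msetlt$ on the sizes of the summands: each single-summand step strictly decreases the multiset, so there is no infinite $\rred$-sequence.

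For strong confluence~(\ref{the:rred-norm-confl:confl}), I would follow the classical pattern for the resource calculus: prove the \emph{diamond property at the level of single terms} and then lift it to sums. Concretely, if $\rt s \rred \rs t_1$ and $\rt s \rred \rs t_2$ by contracting redexes $R_1$ and $R_2$ in the term $\rt s$, I would distinguish whether $R_1 = R_2$ (then $\rs t_1 = \rs t_2$ and we are done with empty closures), whether $R_1$ and $R_2$ are disjoint (then the two contractions commute and both reduce in one step to a common sum), or whether one is nested inside the argument bag of the other. The genuinely linear nesting case is where $R_2$ lies inside $\rb t$ for $R_1 = (\lam x.\rt s')\rb t$: here contracting $R_1$ first produces $\rt s'\rsubst t$, and the key point is that the residuals of $R_2$ in $\rt s'\rsubst t$ can be contracted to recover $\rt s'\rsubst*{\rbs t'}$ — this is exactly a one-step-to-one-step reconciliation thanks to the fact that each element of the bag is used \emph{exactly once}, so there is at most one residual of $R_2$ per summand. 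The case where $R_2$ is inside the function part $\rt s'$ of $R_1$'s redex is handled by the Substitution Lemma (\cref{lem:resource-substitution}), which is precisely what guarantees that substituting before or after contracting $R_2$ yields the same sum. Because no resource is ever copied, the local confluence diagrams close in \emph{at most one} step on each side, which is strong confluence in the stated sense; I would then lift this to arbitrary sums summand-by-summand using \cref{def:rred}\ref{def:rred3}, noting that the reflexive closure $\rredr$ is exactly what absorbs the summands left untouched by a given step.

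The main obstacle is the linear nesting case in the confluence argument, where a redex is being fired while another redex sits inside its argument bag: one must check carefully that duplicating the substitution across a formal sum (as forced by the non-deterministic choice of which bag element fills which occurrence of $x$) does not break the one-step reconciliation, i.e.\ that the residual structure is still exactly matched. This is the point where the \enquote{\lamI} constraints actually help rather than hinder — the free-variable bookkeeping of \cref{def:rterms} and the hypotheses of \cref{lem:resource-substitution} ensure that empty bags $1_Y$ and empty sums $\esum Y$ always carry the right memory, so no spurious mismatch is introduced by the memory substitution — but it still requires a somewhat delicate manipulation of the double sums indexed by bag decompositions, entirely parallel to the computation already carried out in the proof of \cref{lem:resource-substitution}. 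The normalization part, by contrast, I expect to be routine.
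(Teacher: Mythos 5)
Your proposal is correct and follows essentially the same route as the paper: strong normalization via a size measure (ignoring memory labels) lifted to sums by the well-founded multiset ordering, and strong confluence via a one-step diamond at the level of single terms — with the nested cases resolved by commutation of reduction and resource substitution, ultimately resting on the Substitution Lemma (\cref{lem:resource-substitution}) — then lifted to sums with the reflexive closure $\rredr$ absorbing untouched or vanishing summands. (Just note that of your two options for extending the measure to sums, only the multiset version works, which is indeed what you settle on.)
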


	\begin{proof}[Proof of \cref{the:rred-norm-confl}, \cref{the:rred-norm-confl:norm}]
	The size $\size s\in\Nat$ of a resource expression $s\in\rterms**$ 
	is defined by structural induction as usual, 
	with base cases $\size{x} = 1$ and $\size{1_X} = 0$, 
	so that $\size{1_X\cdot \rb t} = \size{\rb t}$.
	The \emph{sum-size} of a resource sum $\rs s$ is 
	the finite multiset defined by $\sumsize{\esum{X}} = []$ and 
	$\sumsize{s+\rs t} = [\size{s}]\cdot \sumsize{\rs t}$.
	Sum-sizes are well-ordered by the usual well-founded ordering 
	$\msetlt$ on finite multisets over $\Nat$ 
	(see \cite[\S A.6]{Terese}).
	
	Now, assume that $\rs s \rred\rs s'$. 
	Since the contraction of a redex suppresses an abstraction 
	and there is no duplication, 
	we get $\sumsize{\rs s}\msetlt \sumsize{\rs s'}$. 
	Conclude since $\msetlt$ is well-founded.
	\end{proof}

We now prove \cref{the:rred-norm-confl:confl} of the theorem.
The proof follows a well-trodden path of lemmas \cite{Vaux.19,Cerda.24}.

\begin{lem} \label{lem:rred-and-subst-1}
	If $\rt s \rred \rs s'$ then 
	$\rt s \rsubst t \rredr \rs s' \rsubst t$.
\end{lem}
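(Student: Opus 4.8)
The plan is to proceed by induction on the derivation of $\rt s \rred \rs s'$ which, by the shape of the rules in \cref{def:rred}, amounts to an induction on the structure of $\rt s$ locating the contracted redex; there is no variable base case, since a variable contains no redex. Before starting the induction I would record a uniform description of the relation $\rredr$ on sums: from the rule deriving $\rt a + \rs b \rred \rs a' + \rs b'$ out of $\rt a \rred \rs a'$ and $\rs b \rredr \rs b'$, a short induction on the number of summands shows that $\rs s \rredr \rs s'$ holds exactly when $\rs s'$ is obtained by choosing a sub-multiset of the summands of $\rs s$ and contracting one redex in each. In particular $\rredr$ is stable under finite sums, and a sum all of whose summands are redexes $\rredr$-reduces in a single step to the sum of the corresponding contracta. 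I would also dispatch the degenerate case $x \notin \fv(\rt s)$ right away: since $\rred$ preserves free variables, $x \notin \fv(\rs s')$ as well, so \cref{lem:rsubst-x-notin-fv} gives $\rt s \rsubst t = \rt s \rred \rs s' = \rs s' \rsubst t$ when $\rb t$ is an empty bag, and $\rt s \rsubst t = \esum{\fv(\rt s)} = \rs s' \rsubst t$ otherwise. Hence from now on $x \in \fv(\rt s)$.

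The base case is the interesting one: $\rt s = (\lam y.\rt r)\rb u$ and $\rs s' = \rt r \rsubst u[y]$, where by $\alpha$-renaming $y \notin \set x \cup \fv(\rb t)$. Pushing the substitution through the application and the abstraction with \cref{def:rsubst} and distributing the bilinear constructors presents $\rt s \rsubst t$ as a finite sum of redexes, namely $\sum_{\rb t = \rb v \cdot \rb w} (\lam y.(\rt r \rsubst v))(\rb u \rsubst w)$ once $\rt r \rsubst v$ and $\rb u \rsubst w$ are themselves expanded into sums. Contracting every one of these redexes in a single $\rredr$-step yields $\sum_{\rb t = \rb v \cdot \rb w} (\rt r \rsubst v) \rsubst*{\rb u \rsubst w}[y]$, and I would then identify this with $\rt r \rsubst u[y] \rsubst t = \rs s' \rsubst t$ by the Substitution Lemma (\cref{lem:resource-substitution}), applied to $\rt r$ with the roles of $x$ and $y$ exchanged: its side conditions hold because $y \in \fv(\rt r)$, $y \notin \fv(\rb t)$, $x \neq y$, and $x \in \fv(\rt r) \cup \fv(\rb u)$ since $x \in \fv(\rt s)$. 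When the cardinalities do not match and $\rt s \rsubst t = \esum{}$, a direct check (using \cref{lem:aboutressubst}) gives $\rs s' \rsubst t = \esum{}$ too, so $\rredr$ holds reflexively.

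The inductive cases — $\rt s = \lam y.\rt r$; $\rt s = \rt r\rb p$ with the redex in $\rt r$ or in $\rb p$; $\rt s = [\rt r]\cdot\rb p$ with the redex in $\rt r$ — are routine. In each I would push $\rsubst t$ inside using \cref{def:rsubst} (splitting $\rb t = \rb v \cdot \rb w$ in the application and bag cases), apply the induction hypothesis to the active sub-expression once per split, and conclude by observing that $\rredr$ is a congruence for every constructor of the calculus (this follows from the rules of \cref{def:rred}, their bilinear extension to sums, and the $\esum{}$-absorption equations), together with the stability of $\rredr$ under finite sums noted above. That last point is precisely what keeps the whole reduction within a single $\rredr$-step even though the induction hypothesis is invoked once per bag split.

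I expect the main obstacle to be the base case, and specifically three things: matching the bag decomposition $\rb t = \rb v \cdot \rb w$ produced by resource substitution crossing an application with the decomposition appearing in the statement of the Substitution Lemma; checking carefully its side conditions (which, in the \lamI-setting, also require the substituted variables to occur in the term); and, above all, making sure that contracting the redexes in all summands simultaneously is genuinely one $\rred$-step, so that the conclusion is $\rredr$ (the reflexive closure) rather than merely $\rreds$. The remaining $\esum{}$ corner cases (a resource substitution collapsing through a cardinality mismatch) each require a line of inspection via the absorption equations but present no real difficulty.
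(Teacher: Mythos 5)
Your proof is correct and takes essentially the same route as the paper's: induction on the derivation of $\rt s \rred \rs s'$, with the base case handled by pushing the substitution through the redex, contracting all resulting redexes in a single (possibly reflexive) $\rredr$-step, and invoking the Substitution Lemma (\cref{lem:resource-substitution}) with the roles of $x$ and $y$ exchanged, the reflexive closure covering the empty-sum situations; the remaining cases follow from the induction hypotheses and congruence. Your explicit dispatch of the case $x \notin \fv(\rt s)$ and your preliminary characterization of $\rredr$ on sums are just careful elaborations of what the paper leaves implicit.
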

	
	\begin{proof}
	By induction on a derivation of $\rt s \rred \rs s'$.
	For the base case $(λy.\rt u)\rb v \rred \rt u \rsubst v [y]$,
	\begin{align*}
		\left( (λy.\rt u)\rb v \right) \rsubst t
		& = \sum_{\rb t = \rb t' \cdot \rb t''}
			\left( λy.\rt u \rsubst* {\rb t'} \right)
			\left(\rb v \rsubst* {\rb t''} [y]\right) \\
		& \rredr \sum_{\rb t = \rb t' \cdot \rb t''}
			\rt u \rsubst* {\rb t'}
			\rsubst*{ \rb v \rsubst* {\rb t''} } [y] \\
		& = u \rsubst v [y] \rsubst t,
			& \text{by \cref{lem:resource-substitution}.}
	\end{align*}
	Notice that the reflexive closure $\rredr$ of $\rred$
	is necessary to handle the case where the two sums
	are empty.
	The other cases are immediate applications 
	of the induction hypotheses.
	\end{proof}

\begin{lem} \label{lem:rred-and-subst-2}
	If $\rb t \rred \rbs t'$ then 
	$\rt s \rsubst t \rredr \rt s \rsubst* {\rbs t'}$.
\end{lem}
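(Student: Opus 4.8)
The plan is to argue by structural induction on the \lamI-resource term $\rt s$, after normalising the hypothesis. Since a bag can only reduce by reducing one of its elements, we may write $\rb t = [\rt u] \cdot \rb r$ and $\rbs t' = [\rs u'] \cdot \rb r$ for some $\rt u \rred \rs u'$. Throughout we use that $\rred$ preserves free variables (the observation right after \cref{def:rred}): writing $Y = \fv(\rb t)$, we then have $\rb t \neq 1_Y$ (it contains a redex), and every summand of $\rbs t'$ is either a bag $[\rt v] \cdot \rb r$ with $\fv(\rt v) = \fv(\rt u)$ of the same cardinality as $\rb t$, or an empty sum $\esum Y$.

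The base cases and the abstraction case are quick. If $\rt s = x$, we split on the cardinality of $\rb t$: when $\rb t = [\rt u]$ is a singleton, $x \rsubst t = \rt u \rred \rs u' = x \rsubst* {\rbs t'}$, the last equality by bilinearity of resource substitution in its bag argument; when $|\rb t| \geq 2$, \cref{lem:aboutressubst} makes both sides equal to $\esum Y$. If $\rt s = y$ with $y \neq x$, or $\rt s = 1_X$, then unfolding \cref{def:rsubst} and using $\rb t \neq 1_Y$ together with the shape of $\rbs t'$ recalled above, both sides are the same empty sum. If $\rt s = \lam y.\rt s'$, the result follows at once from the induction hypothesis on $\rt s'$, since resource substitution and $\rred$ both commute with $\lam y.(-)$ and $y \notin Y = \fv(\rbs t')$.

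The core is the application case $\rt s = \rt s' \rb s''$ (the bag-concatenation case is entirely analogous). From $\rt s \rsubst t = \sum_{\rb t = \rb a \cdot \rb b}(\rt s' \rsubst a)(\rb s'' \rsubst b)$, split the sum according to whether the distinguished element $\rt u$ of $\rb t = [\rt u] \cdot \rb r$ falls into $\rb a$ or into $\rb b$, obtaining two groups indexed respectively by the decompositions $\rb r = \rb a_0 \cdot \rb b$ and $\rb r = \rb a \cdot \rb b_0$. To the first group apply the induction hypothesis to $\rt s'$ against the bag reduction $[\rt u] \cdot \rb a_0 \rred [\rs u'] \cdot \rb a_0$, then transport the result through application to the fixed resource sum $\rb s'' \rsubst b$; treat the second group symmetrically. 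Each summand is thereby rewritten by at most one $\rred$-step, and since the multi-summand rule of \cref{def:rred}\ref{def:rred3} lets one contract any non-empty family of summands of a sum in a single $\rred$-step (its reflexive closure $\rredr$ absorbing the case where nothing is rewritten), the whole sum $\rt s \rsubst t$ is rewritten by one $\rredr$-step. It then remains to re-index the two groups — using associativity and commutativity of multiset union, bilinearity of resource substitution and of bag concatenation, and distributivity of application over sums — so as to recognise them, according to whether a summand of $\rs u'$ lands on the left or the right of a bag decomposition of $\rbs t'$, as the two halves of $\sum_{\rbs t' = \rb a \cdot \rb b}(\rt s' \rsubst a)(\rb s'' \rsubst b) = \rt s \rsubst* {\rbs t'}$.

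I expect the main obstacle to be exactly this last bookkeeping: confirming that after partitioning the sum over bag decompositions and applying the induction hypotheses, the two groups recombine \emph{precisely} into the sum defining $\rt s \rsubst* {\rbs t'}$, and --- more delicately --- that the whole reshuffling is realised by a single $\rredr$-step and not by an uncontrolled number of reductions. This step-count constraint is what forces the statement to be phrased with $\rredr$ rather than $\rred$, and it is discharged by combining the multi-summand reduction rule of \cref{def:rred}\ref{def:rred3} with closure of $\rred$ under the constructors of the calculus and under application to a fixed resource sum.
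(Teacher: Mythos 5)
Your proof is correct and follows exactly the route the paper intends: the paper's own argument is a one-liner (\enquote{straightforward induction on the different cases defining $\rt s \rsubst t$}), and your structural induction on $\rt s$ --- with the hypothesis normalised to $\rb t = [\rt u]\cdot\rb r \rred [\rs u']\cdot\rb r$, the corner cases via preservation of free variables, and the single-step bookkeeping handled by the multi-summand rule of \cref{def:rred}\ref{def:rred3} and the reflexive closure $\rredr$ --- is precisely the elaboration of that induction. The only implicit point is that your split of $\sum_{\rb t = \rb a\cdot\rb b}$ according to where the distinguished occurrence of $\rt u$ lands presupposes that decompositions are counted with multiplicities, which is indeed the paper's convention (it is forced by \cref{lem:aboutressubst}), so no gap results.
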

	
	\begin{proof}
	By a straightforward induction on the different cases
	defining $\rt s \rsubst t$.
	\end{proof}

\begin{lem} \label{lem:rred-confl-one-term}
	For all $\rt s \in \rterms**(X)$ and 
	$\rs s', \rs s'' \in \frsums[X]**$ such that 
	\[ \rt s \rred \rs s' \quad \text{and} \quad \rt s \rred \rs s'', \]
	there exists $\rs t \in \frsums[X]**$ such that
	\[ \rs s' \rredr \rs t \quad \text{and} \quad \rs s'' \rredr \rs t. \]
\end{lem}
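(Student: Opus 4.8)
The plan is to prove \cref{lem:rred-confl-one-term} by induction on the structure of the resource expression $\rt s$, distinguishing cases according to the positions of the two contracted redexes. Beforehand I would record an elementary observation: since every constructor of the calculus distributes over sums and the reduction of a sum may contract redexes in several summands at once (the sum rule of \cref{def:rred}), both $\rred$ and $\rredr$ are closed under each constructor applied to sums, namely from $\rs a \rred \rs b$ one gets $\lam x.\rs a \rred \lam x.\rs b$, $\rs a\,\rb t \rred \rs b\,\rb t$, $\rt u\,\rbs a \rred \rt u\,\rbs b$, $[\rs a]\cdot\rb t \rred [\rs b]\cdot\rb t$, and likewise with $\rredr$ in place of $\rred$ (the degenerate case where a sum is empty being absorbed by reflexivity). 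Recall also that $\fv$ is preserved by $\rred$, so every sum appearing below lies in the appropriate $\frsums[X]**$.

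In the induction step, a variable and an empty bag $1_X$ carry no redex, so the statement holds vacuously, and if the two contracted redexes coincide the two reducts are equal. If $\rt s = \lam x.\rt s_0$, both redexes lie in $\rt s_0$; applying the induction hypothesis to the two one-step reducts of $\rt s_0$ yields a common reduct, and closing under abstraction concludes. If $\rt s$ is a bag $[\rt s_0]\cdot\rb t$, or an application $\rt s_0\,\rb t$ in which neither contracted redex is the topmost $\beta$-redex — using associativity and commutativity of bag union to place at the head the element carrying one of the redexes — then a trichotomy applies: either both redexes lie in $\rt s_0$ (apply the induction hypothesis to $\rt s_0$), or both lie in $\rb t$ (apply it to $\rb t$, a strictly smaller resource expression), or one is in $\rt s_0$ and the other in $\rb t$, in which case, if $\rt s_0 \rred \rs a$ and $\rb t \rred \rbs b$, the expression $[\rs a]\cdot\rbs b$ (resp. $\rs a\,\rbs b$) is a common reduct, obtained by context closure.

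The crux is the case $\rt s = (\lam x.\rt s_0)\,\rb t$ where at least one of the two redexes is the topmost $\beta$-redex. If both are that redex, the two reducts agree. If the other redex lies inside $\rt s_0$, say $\rt s_0 \rred \rs a$, then the reducts of $\rt s$ are $\rt s_0\rsubst t$ and $(\lam x.\rs a)\,\rb t$, and I claim that $\rs a\rsubst t$ is a common reduct: on the one hand $\rt s_0\rsubst t \rredr \rs a\rsubst t$ by \cref{lem:rred-and-subst-1}; on the other hand, writing $\rs a = \rt r_1 + \cdots + \rt r_m$, one has $(\lam x.\rs a)\,\rb t = \sum_i (\lam x.\rt r_i)\,\rb t \rred \sum_i \rt r_i\rsubst t = \rs a\rsubst t$ in a \emph{single} $\rred$-step, contracting the $\beta$-redex in every summand simultaneously (and if $m=0$ both sides are already an empty sum). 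Symmetrically, if the other redex lies inside $\rb t$, say $\rb t \rred \rbs b$, the reducts are $\rt s_0\rsubst t$ and $(\lam x.\rt s_0)\,\rbs b$, with common reduct $\rt s_0\rsubst*{\rbs b}$, one side by \cref{lem:rred-and-subst-2} and the other by the same one-step argument. This is where the Substitution Lemma is genuinely used, through \cref{lem:rred-and-subst-1,lem:rred-and-subst-2}.

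I expect the main obstacle to be precisely this critical-pair analysis: since the conclusion only allows $\rredr$, the reflexive closure of $\rred$ — that is, \emph{at most one} step — one must verify that each reduction closing the diagram is a single $\rred$-step, which is exactly why the summand-wise formulation of $\rred$ on sums is essential and why the degenerate situations in which a term reduces to an empty sum must be checked explicitly. Everything else is routine bookkeeping, carried out modulo associativity and commutativity of bag union.
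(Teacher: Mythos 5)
Your proof is correct and takes essentially the same approach as the paper: the critical pairs (top $\beta$-redex against a redex in the body or in the bag) are closed exactly as there, using \cref{lem:rred-and-subst-1,lem:rred-and-subst-2} on one side and a single parallel sum-step (or reflexivity when the sum is empty) on the other. Organizing the argument as a structural induction on $\rt s$ rather than the paper's induction on the two reduction derivations, and spelling out the disjoint-redex and context-closure cases, is only a cosmetic difference.
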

	
	\begin{proof}
	Take $\rt s$, $\rs s'$ and $\rs s''$ as above.
	We proceed by induction on both reductions
	$\rt s \rred \rs s'$ and $\rt s \rred \rs s''$.
	The base case is when the first reduction
	comes from the first rule in \cref{def:rred},
	\ie $\rt s = (λx.\rt u)\rb v$
	and $\rs s' = \rt u \rsubst v$:
	\begin{itemize}
	\item If the second reduction comes from the same rule,
		then $\rs s'' = \rs s'$ and we set $\rs t \eqdef \rs s'$.
	\item If the second reduction comes from the rule for left application,
		\ie $\rs s'' = (λx.\rs u'')\rb v$
		with $\rt u \rred \rs u''$, then
		by \cref{lem:rred-and-subst-1}
		$\rt s' = \rt u \rsubst v \rredr \rs u'' \rsubst v$
		and by \cref{def:rred} $\rt s'' \rredr \rs u'' \rsubst v$.
		(The reflexive closure is needed in the latter case
		as $\rs u''$ may be empty.)
	\item If the second reduction comes from the rule for right application,
		then the proof is analogous to the previous case,
		using \cref{lem:rred-and-subst-2}
		instead of \cref{lem:rred-and-subst-1}.
	\end{itemize}
	In all other cases, the result immediately follows from
	the induction hypotheses.
	\end{proof}

	\begin{proof}[Proof of \cref{the:rred-norm-confl}, 
	\cref{the:rred-norm-confl:confl}]
	We prove the result under the slightly more general hypothesis
	that $\rs s \rredr \rs s'$ and $\rs s \rredr \rs s''$.
	We proceed by induction on $\sumsize{\rs s}$,
	as defined in the proof of \cref{the:rred-norm-confl},
	\cref{the:rred-norm-confl:norm} above.
	If $\sumsize{\rs s} = []$, \ie $\rs s$ is empty,
	then both reductions must be equalities
	and the conclusion is trivial.
	Otherwise, suppose $\sumsize{\rs s} \msetgt []$.
	Again, if any of the two reductions
	turns out to be an equality, the result is immediate.
	Otherwise $\rs s \rred \rs s'$ and $\rs s \rred \rs s''$, \ie
	\begin{itemize}
	\item $\rs s = \rt s_1 + \rs s_{\neq 1}$
		and $\rs s' = \rs s'_1 + \rs s'_{\neq 1}$,
		together with $\rt s_1 \rred \rs s'_1$
		and $\rs s_{\neq 1} \rredr \rs s'_{\neq 1}$,
	\item $\rs s = \rt s_2 + \rs s_{\neq 2}$
		and $\rs s'' = \rs s''_2 + \rs s''_{\neq 2}$,
		together with $\rt s_2 \rred \rs s''_2$
		and $\rs s_{\neq 2} \rredr \rs s''_{\neq 2}$.
	\end{itemize}
	There are two possible cases.
	\begin{itemize}
	\item If $\rt s_1 = \rt s_2$ and $\rs s_{\neq 1} = \rs s_{\neq 2}$,
		then by \cref{lem:rred-confl-one-term} applied to
		$\rt s_1 \rred \rs s'_1$ and $\rt s_1 \rred \rs s''_2$,
		we obtain $\rs t_1$ such that
		$\rs s'_1 \rred \rs t_1$ and $\rs s''_2 \rred \rs t_1$.
		By the induction hypothesis on $\rs s_{\neq 1}$,
		we obtain $\rs t_{\neq 1}$ such that
		$\rs s'_{\neq 1} \rred \rs t_{\neq 1}$ 
		and $\rs s''_{\neq 2} \rred \rs t_{\neq 1}$.
		Therefore we can set $\rs t \eqdef \rs t_1 + \rs t_{\neq 1}$.
	\item If $\rt s_1 \neq \rt s_2$, then we can write
		$\rs s = \rt s_1 + \rt s_2 + \rs s_{\neq 12}$, with:
		\begin{align*}
			\rs s_{\neq 1} & = \rt s_2 + \rs s_{\neq 12}, &
			\rs s'_{\neq 1} & = \rs s'_2 + \rs s'_{\neq 12}, &
			\rt s_2 & \rredr \rs s'_2, &
			\rs s_{\neq 12} & \rredr \rs s'_{\neq 12}, \\
			\rs s_{\neq 2} & = \rt s_1 + \rs s_{\neq 12}, &
			\rs s''_{\neq 2} & = \rs s''_1 + \rs s''_{\neq 12}, &
			\rt s_1 & \rredr \rs s''_1, &
			\rs s_{\neq 12} & \rredr \rs s''_{\neq 12}.
		\end{align*}
		By \cref{lem:rred-confl-one-term} applied to $\rt s_1$
		we obtain $\rs t_1$ such that
		$\rs s'_1 \rredr \rs t_1$ and $\rs s''_1 \rredr \rs t_1$.
		By the same lemma applied to $\rt s_2$
		we obtain $\rs t_2$ such that
		$\rs s'_2 \rredr \rs t_2$ and $\rs s''_2 \rredr \rs t_2$.
		By the induction hypothesis on $\rs s_{\neq 12}$
		we obtain $\rs t_{\neq 12}$ such that
		$\rs s'_{\neq 12} \rredr \rs t_{\neq 12}$
		and $\rs s''_{\neq 12} \rredr \rs t_{\neq 12}$.
		Finally, we can set $\rs t \eqdef 
		\rs t_1 + \rs t_2 + \rs t_{\neq 12}$.
	\qedhere
	\end{itemize}
	\end{proof}

By \cref{the:rred-norm-confl:confl} above every $\rs s \in \frsums[X]**$ has 
a unique $\mathrm{r}$-normal form which is denoted $\nf(\rs s)$. 


\section{Taylor approximation for Ohana trees}\label{sec:TE4MTs}

In its original formulation, Ehrhard and Regnier's Taylor expansion 
translates a \lam-term as a power series of iterated 
derivatives~\cite{Ehrhard.Reg.08}.
We now introduce a \emph{qualitative} Taylor expansion~\cite{ManzonettoP11} specifically designed for the \lamI-calculus, having as target the \lamI-resource calculus.

\subsection{The Taylor approximation}

Intuitively, a qualitative Taylor expansion should associate each term $M \in \LamI(X)$ with a \emph{set} of resource approximants $\TE M \subseteq \rterms(X)$.
Therefore the codomain of $\TE -$ should be $\bigdunion_{X \subf \Var} \parts{\rterms(X)}$, and what we actually define is $\TE M \eqdef (X, \TE* M)$ with $\TE* M \subseteq \rterms(X)$.
Notice that, whereas all previous constructions of disjoint unions
$\bigdunion_{X \subf \Var}$ were formed from genuinely disjoint sets,
this is no longer the case here, as $\emptyset \in \parts{\rterms(X)}$, for all $X$.
\begin{nota}\mbox{}
\bsub
\item For $X \subf \Var$ and $\cX \in \parts{\rterms(X)}$,
	we write $\fv(X,\cX) \eqdef X$.
\item We let $\mathord{\TElt}$ denote the order relation
	on $\bigdunion_{X \subf \Var} \parts{\rterms(X)}$
	such that $(X,\cX) \TElt (Y,\cY)$
	whenever $X = Y$ and $\cX \subseteq \cY$.
	We write $\TElub$ for the corresponding least upper bounds.
\item We write $\rt s \TEin (X,\cX)$ to mean that $\rt s \in \cX$.
\item Given $\cX \in \parts{\rterms(X)}$, 
	$\cY \in \parts{\rterms*(Y)} \setminus \set\emptyset $
	and $x \in X$, we write:
	\begin{gather*}
		\lam x.(X,\cX) \eqdef (X \setminus \set x, \set{\lam x.\rt s}[\rt 
		s\in\cX]), \\
		(X,\cX) \cY \eqdef (X \cup Y, \set{\rt s\rb t}[\rt s \in \cX,\ \rb t \in \cY]). 
	\end{gather*}
\esub
\end{nota}

\begin{defi}\mbox{} \label{def:taylor}
\bsub
\item The \defemph{Taylor expansion} $\TE{M} \in \bigdunion_{X \subf \Var} \parts{\rterms(X)}$ of a \lamI-term $M$, is defined together with $\TE** M \subset \rterms*(\fv(M))$ by mutual induction:
	\begin{gather*}
		\begin{aligned}
		\TE{x} & \eqdef (\set{x}, \set{x}), &
		\TE{\lam x.M} & \eqdef \lam x.\TE M, &
		\TE{MN} &\eqdef  \TE M \TE** N, 
		\end{aligned} \\
		\TE**{M} \eqdef \set{1_{\fv(M)}} \cup 
			\set{ [t_0,\dots,t_n] } [ n\in\Nat,\ t_0,\dots,t_n\TEin\TE M ]. 
	\end{gather*}
\item The above definition is extended to Ohana trees by setting,
	for all $M \in \LamI$,
	\[	\TMT M \eqdef \bigTElub_{A \in \Appof{M}} \TE{A},
		\qquad \text{together with }
		\left\{ \begin{array}{r!{\eqdef}l}
			\TE{\bot_X} & (X,\emptyset), \\ 
			\TE**{\bot_X} & \set{1_X}.\\
		\end{array} \right.
	\]
\esub
\end{defi}

\begin{rem} \label{rem:fv-of-taylor}
	For all $M$, we have $\fv(\TE M) = \fv(M)$.
	Similarly, $\fv(\TMT M) = \fv(M)$,
	due to \cref{rem:appof-M-fv-M} and the way
	we ordered $\bigdunion_{X \subf \Var} \parts{\rterms(X)}$.
\end{rem}

\begin{exa}\mbox{}
	\bsub
	\item $\TE{\comb I} = (\emptyset, \set{\comb I})$,
		$\TE{\comb D} = (\emptyset, \set{\comb D_n}[n \in \Nat])$.
	\item $\TMT{\comb Yf} = (\set f, \cX_{\comb Yf})$ and
		$\TMT{\Bible[l]f} = (\set{f,l}, \cX_{\Bible[l]f})$,
		where the sets of approximants can be described 
		as the smallest (in fact, unique) subsets of $\rterms$ such that:
		\begin{align*}
			\cX_{\comb Yf} & = \set{ f1_{\set f} } \cup 
				\set{ f[\rt t_0,\dots,\rt t_n] }
				[n\in\Nat,\ \rt t_0,\dots,\rt t_n \in \cX_{\comb Yf} ], \\
			\cX_{\Bible[l]f} & = \set{ f1_{\set{f,l}} } \cup 
				\set{ f[\rt t_0,\dots,\rt t_n] }
				[n\in\Nat,\ \rt t_0,\dots,\rt t_n \in \cX_{\Bible[l]f} ].
		\end{align*}
	\esub
\end{exa}

We now describe how resource reduction acts on Taylor expansions, \ie on potentially infinite sets of resource expressions.

\begin{nota} \label{not:support}\mbox{}
\bsub
\item Given $\rs s \in \frsums[X]**$,
	we denote by $\supp s \in \bigdunion_{X \subf \Var} \parts{\rterms(X)}$ 
	its \defemph{support}, defined by
	$\supp*{\esum X} \eqdef (X,\emptyset)$ and
	$\supp*{\rt s + \rs t} \eqdef (X,\set{\rt s}) \TElub \supp t$.
	Notice that $\rt s \TEin \supp t$ whenever $\rt s$ 
	bears a non-zero coefficient in $\rs t$ (\ie$\rs t = s + \rs t'$, for some $\rs t'$).
\item Given $\cX \in \parts{\rterms**(X)}$, we write 
	$\nf(X, \cX) \eqdef \bigTElub_{\rt s \in \cX} \supp*{ \nf(\rt s) }$.
	Notice that $\fv(\nf(X, \cX)) = X$, and that
	$\nf(X, \cX) = (X, \emptyset)$ if and only if
	$\rt s \rreds \esum X$, for all $\rt s \in \cX$.
\esub
\end{nota}

This allows us to state the following theorem,
adapting \cite[Theorem~2]{Ehrhard.Reg.06}.

\begin{thm}[Commutation] \label{the:commutation}
	For all $M \in \LamI$, $\NFT{M} = \TMT{M}$.
\end{thm}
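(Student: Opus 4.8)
The plan is to establish the equality $\NFT M = \TMT M$ by proving the two inclusions of $\TElt$; the free-variable components already coincide, $\fv(\NFT M) = \fv(M) = \fv(\TMT M)$, by \cref{rem:fv-of-taylor,rem:appof-M-fv-M}. Unfolding the definitions, $\NFT M = \bigTElub_{\rt s \,\TEin\, \TE M} \supp*{\nf(\rt s)}$ and $\TMT M = \bigTElub_{A \in \Appof M} \TE A$. A routine monotonicity check, namely that $A \MTle A'$ implies $\TE A \TElt \TE{A'}$, lets us rewrite the latter as $\TMT M = \bigTElub_{M \redd[\beta] N} \TE{\dirapp N}$, since $\dirapp N$ itself realizes the supremum among the approximants $A \MTle \dirapp N$. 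So the goal becomes $\bigTElub_{\rt s \,\TEin\, \TE M} \supp*{\nf(\rt s)} = \bigTElub_{M \redd[\beta] N} \TE{\dirapp N}$. The two ingredients are: (a)~a \emph{simulation lemma}, namely that a $\beta$-step $M \to_\beta N$ lifts to a suitable (confluent, strongly normalizing) reduction on sets of \lamI-resource terms, so that by \cref{the:rred-norm-confl} the map $\nf \circ \TE{-}$ is invariant under $\beta$-conversion; and (b)~the comparison of $\TE{\dirapp N}$ with $\nf(\TE N)$ via a head-normal-form analysis.

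For the inclusion $\TMT M \TElt \NFT M$, fix a reduct $N$ of $M$. By~(a), $\nf(\TE M) = \nf(\TE N)$, so it suffices to prove $\TE{\dirapp N} \TElt \nf(\TE N)$, which we do by induction on the finite approximant $\dirapp N$. If $\dirapp N = \bot_X$ then $\TE{\dirapp N} = (X,\emptyset)$ and there is nothing to prove. If $\dirapp N = \lam x_1\ldots x_n.y\,\dirapp{N_1}\cdots\dirapp{N_k}$, then $N = \lam x_1\ldots x_n.y\,N_1\cdots N_k$ is a head normal form, hence $\TE N = \lam x_1\ldots x_n.y\,\TE**{N_1}\cdots\TE**{N_k}$. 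A resource term of $\TE{\dirapp N}$ has shape $\lam x_1\ldots x_n.y\,\rb t_1\cdots\rb t_k$, where each $\rb t_i$ is either the empty bag $1_{\fv(N_i)}$ or a nonempty bag of terms of $\TE{\dirapp{N_i}}$. Since the head variable $y$ blocks any reduction at the head, $\nf$ commutes with this head context; moreover $1_{\fv(N_i)} \TEin \TE**{N_i}$ is its own $\mathrm r$-normal form, while by the induction hypothesis every term of $\TE{\dirapp{N_i}}$ lies in $\nf(\TE{N_i})$. Bundling preimages into bags of $\TE**{N_i}$ therefore exhibits every element of $\TE{\dirapp N}$ as an element of $\nf(\TE N)$.

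For the reverse inclusion $\NFT M \TElt \TMT M$ --- the crux --- fix a \lamI-resource term $\rt t$ in $\mathrm r$-normal form with $\rt t \TEin \supp*{\nf(\rt s)}$ for some $\rt s \TEin \TE M$; we must exhibit a reduct $N$ of $M$ with $\rt t \TEin \TE{\dirapp N}$. The idea is to \emph{track} the resource reduction $\rt s \rreds \rs u$, where $\rt t$ occurs in the support of the normal sum $\rs u$, by a \emph{finite} $\beta$-reduction $M \redd[\beta] N$: each resource redex $(\lam x.\rt p)\rb q$ contracted along $\rt s \rreds \rs u$ sits at a position that, under Taylor expansion, is the image of a $\beta$-redex $(\lam x.P)Q$ of the corresponding reduct of $M$, and contracting those $\beta$-redexes produces~$N$. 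Since $\rt s$ is a finite resource term, its reduction is finite (\cref{the:rred-norm-confl}) and touches only finitely many positions, so finitely many $\beta$-steps suffice. It then remains to see that $\rt t \TEin \TE{\dirapp N}$: a normal \lamI-resource term is a head normal resource term $\lam\vec x.z\,\rb t_1\cdots\rb t_k$, which forces the subterm of $N$ it originates from to be in head normal form with head variable $z$ (any residual head redexes being absorbed into $N$ by the same tracking), and one recurses into the bags $\rb t_i$ and the matching subterms --- the case where a subterm has no head normal form being precisely the case $\rb t_i = 1_X$, matched by $\dirapp{(-)} = \bot_X$.

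The main obstacle is to make this \enquote{tracking} rigorous: a single $\beta$-redex of $M$ is, in general, the image of infinitely many resource redexes scattered across $\TE M$, and contracting one resource redex in $\rt s$ does not amount to contracting \enquote{the} $\beta$-redex of $M$ once and for all --- while in the other direction a bag of $k$ resources facing $k$ linear occurrences unfolds into a permutative sum of $k!$ summands. The standard remedy, as in \cite{Ehrhard.Reg.06} and the elementary reformulations in the style of \cite{Cerda.24}, is to phrase the simulation in terms of simultaneous (parallel) reduction of sets of resource terms and to bound the length of the required $\beta$-reduction by a measure on $\rt s$ (for instance its size), which also makes the tracking induction terminate. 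The \lamI-specific bookkeeping is comparatively mild: \cref{the:rred-norm-confl} together with the well-definedness of $\rred$ guarantees that every intermediate resource sum and every $\beta$-reduct stays in the component indexed by $\fv(M)$, so the memory annotations $1_X$ and $\esum X$ are carried along coherently, and the empty bags $1_X$ supply exactly the placeholders needed in the first inclusion for subterms without head normal form.
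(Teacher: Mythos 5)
Your proposal is correct and takes essentially the same route as the paper: both inclusions rest on a conservation/tracking lemma lifting the normalization of a single resource approximant to a finite $\beta$-reduction (the paper's \cref{lem:rred-conserv-bred}), a simulation lemma for $\beta$-steps on Taylor supports (\cref{lem:rred-simul-bred}), and structural comparisons between $\resource$-normal resource terms, direct approximants and their Taylor expansions (\cref{lem:normal-approximants,lem:TE-monotone,lem:TE-of-A-is-normal}), your packaging via $\dirapp{N}$ and the $\beta$-invariance of $\nf(\TE{-})$ being only a minor rearrangement of these ingredients. The level of detail at which you leave the two key lemmas (parallel treatment of the sum, induction on the length of the longest reduction) matches the paper's own sketches, so I see no genuine gap.
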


We outline a proof in the following sequence of lemmas,
similar to the one in \cite{Barbarossa.Man.19};
alternatively, the variants of \cite{Ehrhard.Reg.06,Ehrhard.Reg.08},
\cite{OlimpieriVaux22} and \cite{Cerda.24,CerdaVaux23}
could also be adapted.

\newcounter{lem:rred-conserv-bred}
\setcounter{lem:rred-conserv-bred}{\arabic{thm}}
\begin{lem}
\label{lem:rred-conserv-bred}
	For all resource sums $\rs s$ such that $\supp s \TElt \TE{M}$,
	there exists a reduction $M \breds N$
	such that $\supp*{\nf(\rs s)} \TElt \TE{N}$.
\end{lem}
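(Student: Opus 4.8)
The plan is to simulate the resource reduction $\rs s \rreds \nf(\rs s)$ --- which is finite by strong normalization of $\rred$ (\cref{the:rred-norm-confl}) --- by a $\beta$-reduction of $M$, keeping the support inside the Taylor expansion of a reduct of $M$ at every stage. Two auxiliary facts are needed. The first is a \emph{Taylor substitution lemma}: if $x \in \fv(P)$, $\rt u \TEin \TE{P}$ and $\rb v$ is a bag in $\TE**{Q}$, then $\supp*{\rt u\rsubst{v}} \TElt \TE{P[Q/x]}$. Indeed, by \cref{lem:aboutressubst}, either the linear degree of $x$ in $\rt u$ fails to match the cardinality of $\rb v$, in which case $\rt u\rsubst{v} = \esum{}$ and the inclusion is trivial since its support is empty; or the degrees match, and then every summand of $\rt u\rsubst{v}$ is obtained by plugging the elements of $\rb v$ --- which all lie in $\TE{Q}$ --- into the occurrences of $x$ in $\rt u$ and applying the memory substitution by $\fv(\rb v)$ to the empty bags, hence it is by construction an element of $\TE{P[Q/x]}$. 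This is proven by a routine induction on $\rt u$, using the Substitution Lemma (\cref{lem:resource-substitution}) in the application case.

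The second fact is a \emph{one-step simulation lemma}: if $M \bred M'$ contracts a $\beta$-redex $R$, then for every $\rt t \TEin \TE{M}$ there is a resource reduction $\rt t \rreds \rs t'$, namely the one contracting exactly the resource redexes of $\rt t$ that project onto $R$, such that $\supp*{\rs t'} \TElt \TE{M'}$. This relies on the structural observation that any $\rt t \TEin \TE{M}$ mirrors the syntax tree of $M$: an easy induction on $M$ shows that $\rt t$ decomposes as $\rt u\rb v$ precisely where $M$ decomposes as $PQ$ (with $\rt u \TEin \TE{P}$ and $\rb v$ in $\TE**{Q}$), and as $\lam x.\rt u$ precisely where $M = \lam x.P$; hence a resource redex of $\rt t$ can occur only where $M$ has a $\beta$-redex $(\lam x.P')Q$ --- which in the $\LamI$-setting always binds an occurrence of $x$, so that its contraction $P'[Q/x]$ is legitimate. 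Note that a single $\beta$-redex $R$ of $M$ may give rise to \emph{several} resource redexes in $\rt t$, one in each copy produced by a surrounding bag, or to \emph{none} when that surrounding bag is empty; in the latter case the corresponding summand is unchanged and still lies in $\TE{M'}$, because $\bred$ preserves free variables in $\LamI$ (\cref{rem:aboutLambdaI}), so the annotation of the empty bag remains valid. In the other cases, contracting all the copies --- which one may do successively, as they occupy pairwise disjoint positions --- and pushing the resulting resource substitutions through the sums produced along the way by repeated use of the Taylor substitution lemma keeps every summand inside $\TE{M'}$.

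With these in hand, the statement follows by induction on the length $\ell(\rs s) \in \Nat$ of the longest $\rred$-reduction issued from $\rs s$, which is finite by \cref{the:rred-norm-confl}. If $\ell(\rs s) = 0$, then $\rs s$ is normal and $N \eqdef M$ works, since $\supp*{\nf(\rs s)} = \supp s \TElt \TE{M}$ by hypothesis. Otherwise some summand of $\rs s$ carries a resource redex, which by the structural correspondence projects onto a $\beta$-redex $R$ of $M$; let $M \bred M'$ be its contraction. Applying the one-step simulation lemma to every summand of $\rs s$ --- each lies in $\TE{M}$ since $\supp s \TElt \TE{M}$ --- and reducing in $\rs s$ all resource redexes projecting onto $R$, we obtain a non-empty reduction $\rs s \rreds \rs s'$ with $\supp*{\rs s'} \TElt \TE{M'}$. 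Being non-empty, this reduction can be prepended to any reduction issued from $\rs s'$, so $\ell(\rs s') < \ell(\rs s)$; by the induction hypothesis there is $M' \breds N$ with $\supp*{\nf(\rs s')} \TElt \TE{N}$, and $\nf(\rs s') = \nf(\rs s)$ by confluence and normalization of $\rred$, whence $M \breds N$ and $\supp*{\nf(\rs s)} \TElt \TE{N}$. The equality of free-variable annotations required by $\TElt$ is preserved throughout, since $\fv$ is invariant under both $\bred$ and $\rred$.

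The main obstacle is the one-step simulation lemma, and more precisely the fact that one $\beta$-redex of $M$ corresponds to several resource redexes in a given $\rt t \TEin \TE{M}$: the simulation is inherently \enquote{parallel}, and contracting only one of these copies would place $\rs t'$ outside $\TE{M''}$ for every reduct $M''$ of $M$. Making rigorous the notion of \enquote{copies of $R$ in $\rt t$}, tracking them through the sums generated by the resource substitutions, and checking the empty-bag and memory-substitution cases against the $\LamI$ bookkeeping, is where the technical effort concentrates.
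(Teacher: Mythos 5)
Your proof is correct and follows essentially the same route as the paper's: induction on the length of the longest $\rred$-reduction from $\rs s$, simulating the contraction of a chosen resource redex by firing the corresponding $\beta$-redex of $M$ while contracting all the matching resource redexes across the sum, and concluding by confluence and the induction hypothesis. You are in fact somewhat more explicit than the paper's terse argument on the point that one $\beta$-redex of $M$ may have several copies inside a single summand, all of which must be fired to land in $\TE{M'}$ --- a detail the paper compresses into \enquote{firing all redexes at the same position}.
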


	\begin{proof}
	By induction on the length of the longest reduction 
	$\rs s \rreds \nf(\rs s)$.
	If it is $0$, take $N \eqdef M$.
	Otherwise, the longest reduction is
	$\rs s = \rt t + \rs u \rred \rs t' + \rs u \rreds \nf(\rs s)$.
	By firing all redexes in $M$ and $\rs u$ at the same position
	as the redex fired in $\rt t \rred \rs t'$
	(formally we do this by induction on this reduction),
	we obtain $M \bred M'$ and $\rs u \rreds \rs u'$ such that
	$\supp*{\rs t' + \rs u'} \TElt \TE{M'}$.
	By confluence, $\rs t' + \rs u' \rreds \nf(\rs s)$.
	We conclude by induction on this (shorter) reduction.
	\end{proof}

\newcounter{lem:normal-approximants}
\setcounter{lem:normal-approximants}{\arabic{thm}}
\begin{lem}
\label{lem:normal-approximants}
	If $\rt t \TEin \TE N$ is in $\resource$-nf, then
	there exists $A \in \Appof{N}$ such that $\rt t \TEin \TE A$.
\end{lem}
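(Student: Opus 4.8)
The plan is to argue by structural induction on $\rt t$ (equivalently, on its size), using that a resource term in $\resource$-nf has the shape $\lam\vec x.\,y\,\rb t_1\cdots\rb t_m$ where every element of every bag $\rb t_j$ is again in $\resource$-nf, and that $\TE{-}$ is defined by structural recursion on $\LamI$-terms, with $N$ universally quantified in the induction hypothesis.

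First I would show that $N$ must be a head normal form. By \cref{rem:hnfs}, either $N=\lam\vec x.\,yN_1\cdots N_m$ is a hnf, or $N=\lam\vec x.\,(\lam y.P)QN_1\cdots N_m$ has a head redex; in the latter case every element of $\TE N=\lam\vec x.\,\big((\lam y.\TE P)\TE**{Q}\big)\TE**{N_1}\cdots\TE**{N_m}$ contains the subterm $(\lam y.\rt p)\rb q$, which is a $\resource$-redex (the left-hand side of the first rule of \cref{def:rred}) for any bag $\rb q$, contradicting that $\rt t\TEin\TE N$ is in $\resource$-nf. Hence $N=\lam\vec x.\,yN_1\cdots N_m$, so $\TE N=\lam\vec x.\,(\set y)\TE**{N_1}\cdots\TE**{N_m}$, and therefore $\rt t=\lam\vec x.\,y\,\rb t_1\cdots\rb t_m$ with $\rb t_j\TEin\TE**{N_j}$ for each $j$.

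Next I would treat each bag separately. For every $j$ and every resource term $\rt u$ occurring in $\rb t_j$, the term $\rt u$ is a strict subterm of $\rt t$, hence in $\resource$-nf, and $\rt u\TEin\TE{N_j}$; by the induction hypothesis there is $A_{j,\rt u}\in\Appof{N_j}$ with $\rt u\TEin\TE{A_{j,\rt u}}$. By the directedness of $\Appof{N_j}$ (\cref{lem:DirApp}\ref{lem:DirApp2}), the finitely many $A_{j,\rt u}$ merge into a single $A_j\in\Appof{N_j}$ with $A_{j,\rt u}\MTle A_j$ for all $\rt u\in\rb t_j$; in the edge case where $\rb t_j$ is the empty bag, take $A_j\eqdef\bot_{\fv(N_j)}$, which lies in $\Appof{N_j}$ by non-emptiness. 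Since $\TE{-}$ is monotone with respect to $\MTle$ on approximants — a routine induction on $\MTle$ — and since $1_{\fv(N_j)}\TEin\TE**{B}$ for every approximant $B$ with $\fv(B)=\fv(N_j)$, this gives $\rb t_j\TEin\TE**{A_j}$. I then set $A\eqdef\lam\vec x.\,y\,A_1\cdots A_m$. It is a well-formed element of $\Appset(\fv N)$ because $N\in\LamI$ (each abstracted variable occurs free in $yN_1\cdots N_m$, and $\fv(A_j)=\fv(N_j)$ by \cref{rem:appof-M-fv-M}); we have $\rt t\TEin\TE A$ directly from the $\rb t_j\TEin\TE**{A_j}$; and $A\in\Appof N$ since, choosing for each $j$ a reduct $N_j\redd[\beta]N_j'$ witnessing $A_j\in\Appof{N_j}$, we obtain $N\redd[\beta]\lam\vec x.\,yN_1'\cdots N_m'$ whose direct approximant $\lam\vec x.\,y\,\dirapp{N_1'}\cdots\dirapp{N_m'}$ satisfies $A\MTle\lam\vec x.\,y\,\dirapp{N_1'}\cdots\dirapp{N_m'}$.

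The main obstacle I expect is the first step: making rigorous that a head redex of $N$ is reflected as a head $\resource$-redex in every element of $\TE N$, so that normality of $\rt t$ forces $N$ into a head normal form, and then carefully tracking how the abstractions, head variable and argument positions of $\rt t$ mirror those of $N$ through the defining clauses of $\TE{-}$. The side conditions ensuring that $A$ is a genuine $\lamI$-style approximant, and the handling of empty bags $1_X$ (whose natural sub-approximant is $\bot_X$), are the other points requiring attention, but they are routine.
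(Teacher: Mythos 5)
Your proposal is correct and follows essentially the same route as the paper's proof: induction on the structure of $\rt t$, observing that $r$-normality forces the head-normal shape $\rt t = \lam\vec x.y\,\rb t_1\cdots\rb t_k$ matching $N = \lam\vec x.yN_1\cdots N_k$, applying the induction hypothesis to the elements of each bag, taking $\bot_{\fv(N_i)}$ for empty bags and an upper bound in $\Appof{N_i}$ (the paper takes the lub $\bigMTlub_j A_i^j$, you invoke directedness plus monotonicity of $\TE-$, which is the same thing) and assembling $A = \lam\vec x.yA_1\cdots A_k \in \Appof N$. You simply spell out some steps the paper leaves implicit (why $N$ is a hnf, why $A\in\Appof N$ via a common reduct), so no substantive difference.
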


	\begin{proof}
	By induction on $t$. An $\resource$-nf is also a hnf, hence
	$\rt t = \lam \vec{x}.y\rb t_1\cdots\rb t_k$ and
	$N = \lam \vec{x}.yN_1\cdots N_k$.
	For $1 \leq i \leq k$, $\rb t_i \in \TE**{N_i}$.
	If $\rb t_i = 1_{\fv(N_i)}$ define $A_i \eqdef \bot_{\fv(N_i)}$.
	If $\rb t_i = [t_i^1,\dots,t_i^n]$,
	each $t_i^j$ is in $\resource$-nf, by induction
	there is $A_i^j \in \Appof{N_i}$, $t_i^j \TEin \TE{A_i^j}$.
	Define $A_i \eqdef \bigMTlub_{j=1}^n A_i^j \in \Appof{N_i}$.
	Finally, $A \eqdef \lam \vec{x}.yA_1\cdots A_k \in \Appof{N}$
	and $t \TEin \TE A$.
	\end{proof}

\newcounter{lem:TE-monotone}
\setcounter{lem:TE-monotone}{\arabic{thm}}
\begin{lem}[Monotonicity of $\TE{-}$]\mbox{}
\label{lem:TE-monotone}
	\bsub
	\item $\TE{A} \TElt \TE{A'}$ if and only if $A \MTle A'$.
	\item For all $N \in \LamI$, we have $\TE{\dirapp N} \TElt \TE N$.
	\esub
\end{lem}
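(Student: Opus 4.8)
The plan is to prove the two items independently; both reduce to the fact that the three set-level operations entering the definition of $\TE{-}$ — prefixing an abstraction $\lam x.(-)$, applying a bag, and forming $\TE**{M}$ from $\TE{M}$ via $\set{1_{\fv(M)}}\cup\set{[t_0,\dots,t_n]}[n\in\Nat,\ t_0,\dots,t_n\TEin\TE{M}]$ — are monotone for the relevant orders. I would first record two preliminary observations: that all these operations preserve free-variable sets, so that $\TE{M}\TElt\TE{M'}$ entails $\TE**{M}\subseteq\TE**{M'}$ (with equal indices); and, conversely, that $\fv(M)$ is recoverable from $\TE**{M}$ as the label of its unique empty bag $1_{\fv(M)}$.

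For item~(1), direction~($\Leftarrow$) goes by induction on the derivation of $A\MTle A'$. The base case $\bot_X\MTle A'$ (with $A'\in\Appset(X)$) is immediate since $\TE{\bot_X}=(X,\emptyset)$ and $\fv(\TE{A'})=X$. In the step $\lam\vec x.y\,A_1\cdots A_k\MTle\lam\vec x.y\,A'_1\cdots A'_k$ with $A_i\MTle A'_i$ and $A_i,A'_i\in\Appset(X_i)$, the induction hypothesis and the first observation give $\TE**{A_i}\subseteq\TE**{A'_i}$, whence every element $\lam\vec x.y\,\rb t_1\cdots\rb t_k$ of $\TE*{A}$ (with $\rb t_i\in\TE**{A_i}$) lies in $\TE*{A'}$, and the free-variable components agree, so $\TE{A}\TElt\TE{A'}$. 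For direction~($\Rightarrow$) I would do a case analysis on $A$: if $A=\bot_X$ there is nothing to prove beyond the matching of free variables, which $\TE{A}\TElt\TE{A'}$ forces; if $A=\lam\vec x.y\,A_1\cdots A_k$, I would probe $A'$ with the resource term $\lam\vec x.y\,1_{\fv(A_1)}\cdots1_{\fv(A_k)}$, which belongs to $\TE*{A}$ hence to $\TE*{A'}$, thereby forcing $A'=\lam\vec x.y\,A'_1\cdots A'_k$ with $\fv(A'_i)=\fv(A_i)$ (here $\TE*{\bot_X}=\emptyset$ excludes $A'=\bot_X$, and uniqueness of the empty bag pins down each $\fv(A'_i)$). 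Probing again with the term obtained by placing the singleton bag $[t]$ in position $i$ and $1_{\fv(A_j)}$ elsewhere, for $t\TEin\TE{A_i}$, yields $\TE{A_i}\TElt\TE{A'_i}$; the induction hypothesis then gives $A_i\MTle A'_i$, and the defining rule of $\MTle$ on approximants closes the case (its side condition on $\vec x$ is inherited from the well-formedness of $A\in\Appset$).

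For item~(2) I would argue by structural induction on $N$, case-splitting on whether $N$ is a head normal form (\cref{rem:hnfs}). If $N$ has a head redex, then $\dirapp N=\bot_{\fv(N)}$ and $\TE{\dirapp N}=(\fv(N),\emptyset)\TElt\TE{N}$ by \cref{rem:fv-of-taylor}. If $N=\lam\vec x.y\,N_1\cdots N_k$, then $\dirapp N=\lam\vec x.y\,\dirapp{N_1}\cdots\dirapp{N_k}$; the induction hypothesis on each proper subterm $N_i$ gives $\TE{\dirapp{N_i}}\TElt\TE{N_i}$, hence $\TE**{\dirapp{N_i}}\subseteq\TE**{N_i}$, and unfolding $\TE{-}$ on these two head normal forms — which share the same abstraction prefix, head variable and free variables (the last by \cref{rem:appof-M-fv-M}) — gives $\TE*{\dirapp N}\subseteq\TE*{N}$, \ie $\TE{\dirapp N}\TElt\TE{N}$.

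The step requiring the most care is direction~($\Rightarrow$) of item~(1): one has to check that a single well-chosen resource term of $\TE{A}$ genuinely fixes the outer shape of $A'$ — the number of leading abstractions, the head variable, and the number of applied bags — modulo $\alpha$-conversion, and that the empty bags it contains correctly recover the free-variable labels of the immediate subtrees of $A'$. Everything else is a routine structural induction.
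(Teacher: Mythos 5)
Your proof is correct and follows essentially the same route as the paper, whose entire argument is \enquote{immediate inductions (i) on $A$ and $A'$, and (ii) on the head structure of $N$}; your write-up simply spells out those inductions, with the probing by terms of the form $\lam\vec x.y\,1_{\fv(A_1)}\cdots[t]\cdots 1_{\fv(A_k)}$ being the natural way to make the ($\Rightarrow$) direction of item~(i) explicit. No gap to report.
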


	\begin{proof}
	Immediate inductions (i) on $A$ and $A'$,
	and (ii) on the head structure of $N$.
	\end{proof}

\newcounter{lem:rred-simul-bred}
\setcounter{lem:rred-simul-bred}{\arabic{thm}}
\begin{lem}[Simulation of $\bred$]
\label{lem:rred-simul-bred}
	If $M \bred N$, then $\TE{N} = 
	\bigTElub_{\rt s \TEin \TE M} \supp*{\rs t_{\rt s}}$
	for resource sums $\rs t_{\rt s} \in \frsums[\fv(M)]$
	such that $\forall \rt s \TEin \TE M,\ \rt s \rreds \rs t_{\rt s}$.
\end{lem}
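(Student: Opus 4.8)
The plan is to trace the $\beta$-step through the Taylor expansion by a substitution argument. Write the step as the contraction of a redex inside a context, $M = \mathsf{C}[(\lam x.P)Q]$ and $N = \mathsf{C}[P\subst{Q}]$, with $x \in \fv(P)$ and $P,Q \in \LamI$, and argue by induction on the context $\mathsf C$ (equivalently, on the derivation of $M \bred N$). The congruence cases propagate the induction hypothesis through the constructors. For instance, if $\mathsf C = \mathsf{C}'\,R$, so that $M = M_0 R$ with $M_0 \bred N_0$ and $N = N_0 R$, then by the induction hypothesis there are sums $\rs t_{\rt s_0} \in \frsums[\fv(M_0)]$ with $\rt s_0 \rreds \rs t_{\rt s_0}$ and $\TE{N_0} = \bigTElub_{\rt s_0 \TEin \TE{M_0}} \supp*{\rs t_{\rt s_0}}$; one sets $\rs t_{\rt s} \eqdef \rs t_{\rt s_0}\rb r$ for $\rt s = \rt s_0 \rb r \TEin \TE M = \TE{M_0}\,\TE**{R}$, uses that $\rred$ is closed under the constructors (extended bilinearly to sums) to get $\rt s = \rt s_0 \rb r \rreds \rs t_{\rt s_0}\rb r$, and checks that $\supp*{\cdot}$ distributes over application and bag formation, so that $\bigTElub_{\rt s}\supp*{\rs t_{\rt s}} = \big(\bigTElub_{\rt s_0}\supp*{\rs t_{\rt s_0}}\big)\TE**{R} = \TE{N_0}\,\TE**{R} = \TE{N}$. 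The cases $\mathsf C = \lam y.\mathsf C'$ and $\mathsf C = R\,\mathsf C'$ (where one reduces each component of the bags in $\TE**{R}$, using the bilinear extension of the bag product) are entirely analogous.

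The base case $\mathsf C = []$ is where the real work lies. Here $M = (\lam x.P)Q$, the elements of $\TE M$ are exactly the $\rt s = (\lam x.\rt p)\rb q$ with $\rt p \TEin \TE P$ and $\rb q \in \TE**{Q}$, and each such $\rt s$ reduces in one step, $(\lam x.\rt p)\rb q \rred \rt p \rsubst q$, so we set $\rs t_{\rt s} \eqdef \rt p \rsubst q$ (see \cref{def:rred,def:rsubst}). It then remains to prove the \emph{substitution lemma for the Taylor expansion}: for all $P,Q \in \LamI$ and $x \in \Var$ (with the convention $P\subst{Q} = P$ when $x \notin \fv(P)$),
\[
	\TE{P\subst{Q}} = \bigTElub_{\substack{\rt p \TEin \TE P \\ \rb q \in \TE**{Q}}} \supp*{\rt p \rsubst q},
\]
together with the analogous identity for $\TE**{-}$, which follows from the displayed one and the definition of $\TE**{-}$. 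Granting this, in the base case $\bigTElub_{\rt s \TEin \TE M}\supp*{\rs t_{\rt s}} = \bigTElub_{\rt p,\rb q}\supp*{\rt p \rsubst q} = \TE{P\subst{Q}} = \TE N$, which is the claim, and the induction on $\mathsf C$ then concludes.

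The substitution lemma is proved by induction on $P$ (deriving the $\TE**{-}$ version on the fly when needed). The case $P = x$ matches $\TE{x\subst{Q}} = \TE Q$ against the clause $x\rsubst q \in \set{\rt u, \esum{\fv Q}}$ of \cref{def:rsubst}, choosing $\rb q = [\rt u]$ to recover each $\rt u \TEin \TE Q$; the case $P = y \neq x$ (hence $x \notin \fv(P)$) uses \cref{lem:rsubst-x-notin-fv}; the abstraction case is immediate; and in the application case $P = P_1 P_2$ one expands $(\rt{p_1}\rb{p_2})\rsubst q = \sum_{\rb q = \rb{q'}\cdot\rb{q''}}(\rt{p_1}\rsubst{q'})(\rb{p_2}\rsubst{q''})$, observes that as $\rb q$ ranges over $\TE**{Q}$ its splittings $(\rb{q'},\rb{q''})$ range exactly over all pairs of bags built from elements of $\TE Q$ (the empty bag $1_{\fv Q}$ included), so the double union factors as a product of two unions, and closes the case by the induction hypotheses on $P_1$ and $P_2$. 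The free-variable annotations of the empty bags $1_X$ and of the vanishing summands $\esum X$ are tracked by the memory-substitution clauses of \cref{def:msubst,def:rsubst} and by the identity $\fv(\rt p)\msubst{\fv(\rb q)} = \fv(P)\setminus\set x \cup \fv(Q) = \fv(P\subst{Q})$, which holds because $x \in \fv(P)$; \cref{lem:resource-substitution} is what keeps this bookkeeping coherent under iteration.

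The main obstacle is thus the substitution lemma, and within it the annotation bookkeeping rather than the combinatorics of linear substitution, which is standard. One must check both that the two sides have the same first component (set of free variables) --- the \lamI-constraint $x \in \fv(P)$ being exactly what makes this work, whereas in the unrestricted calculus $\fv(\rt p)\msubst{\fv(\rb q)}$ could lose variables --- and that the memories decorating the surviving $1_X$'s, and the index sets of the empty sums $\esum X$ arising from arity mismatches, come out right; this is precisely what \cref{def:msubst,def:rsubst,lem:resource-substitution,lem:rsubst-x-notin-fv} ensure.
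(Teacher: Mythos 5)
Your proof is correct and takes essentially the same route as the paper's: a substitution lemma for the Taylor expansion, $\TE{P\subst Q} = \bigTElub_{\rt p, \rb q}\supp*{\rt p\rsubst q}$, proved by induction on $P$, followed by an induction on the derivation of $M \bred N$ that uses it in the root-redex case and handles the congruence cases by (bi)linearity. One cosmetic remark: the appeal to \cref{lem:resource-substitution} is not actually needed here, since the annotation bookkeeping in the induction on $P$ is already handled by \cref{def:msubst,def:rsubst,lem:rsubst-x-notin-fv}.
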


	\begin{proof}
	One first shows by induction on $P$ that
	for all $P,Q \in \LamI$ and $x \in \fv(P)$,
	$\TE{P \subst Q} = \bigTElub_{\rt s \TEin \TE P}
	\bigTElub_{\rb t \in \TE** Q} \supp*{ \rt s \rsubst t }$.
	Then the proof is an induction on a derivation of $M \bred N$,
	using the previous equality in the base case.
	See the details of a similar proof in 
	\cite[Lemmas 4.1 and 4.2]{CerdaVaux23}.
	\end{proof}

\newcounter{lem:TE-of-A-is-normal}
\setcounter{lem:TE-of-A-is-normal}{\arabic{thm}}
\begin{lem}
\label{lem:TE-of-A-is-normal}
	If $A \in \Appset$ then all $\rt s \TEin \TE A$
	are in $\resource$-nf.
\end{lem}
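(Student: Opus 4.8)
The plan is a straightforward structural induction on the approximant $A \in \Appset$, reading $\TE{-}$ on approximants through the extra base clause $\TE{\bot_X} \eqdef (X,\emptyset)$, $\TE**{\bot_X} \eqdef \set{1_X}$ of \cref{def:taylor}.

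In the base case $A = \bot_X$ the set component of $\TE{\bot_X}$ is empty, so there is nothing to prove (the bag $1_X$ is already $\resource$-normal, but the statement speaks only of resource \emph{terms}). In the inductive case $A = \lam x_1\ldots x_n.y\,A_1\cdots A_k$ with $A_i \in \Appset(X_i)$, unfolding the clauses $\TE{\lam x.M} = \lam x.\TE{M}$, $\TE{MN} = \TE{M}\TE**{N}$ and $\TE{x} = (\set x,\set x)$ shows that every $\rt s \TEin \TE{A}$ has the shape $\rt s = \lam x_1\ldots x_n.y\,\rb t_1\cdots\rb t_k$ with $\rb t_i \in \TE**{A_i}$. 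Each $\rb t_i$ is either the empty bag $1_{X_i}$, which contains no redex, or a finite multiset $[\rt u_1,\ldots,\rt u_m]$ with $\rt u_j \TEin \TE{A_i}$; in the latter case the induction hypothesis gives that each $\rt u_j$ is in $\resource$-normal form, and a bag is $\resource$-normal exactly when all of its elements are (immediate from \cref{def:rred}). Finally, no $\resource$-redex appears along the spine $y\,\rb t_1\cdots\rb t_k$, since a redex has the form $(\lam z.\rt v)\rb w$ and thus requires an abstraction in applicative head position, whereas here the head is the variable $y$; abstracting over $x_1,\ldots,x_n$ creates no redex either. Hence $\rt s$ is in $\resource$-normal form.

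I do not anticipate any genuine difficulty: the content is entirely routine. The only point worth isolating is the bookkeeping observation that $\rred$ (\cref{def:rred}) fires only redexes of the form $(\lam z.\rt v)\rb w$, so that a resource expression all of whose subexpressions are $\resource$-normal and in which every applicative head is a variable rather than an abstraction is itself $\resource$-normal; this could be recorded as a one-line auxiliary fact if one wishes to streamline the induction.
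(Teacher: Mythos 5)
Your proof is correct and is exactly the argument the paper intends: its own proof reads \enquote{Immediate induction on $A$}, and your structural induction (vacuous base case for $\bot_X$, elements of $\TE{A}$ having the shape $\lam\vec x.y\,\rb t_1\cdots\rb t_k$ with bags drawn from the $\TE**{A_i}$, and the head variable preventing any redex) is the standard way to carry it out. No gaps.
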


	\begin{proof}
	Immediate induction on $A$.
	\end{proof}

Everything is now in place to prove the Commutation \cref{the:commutation}.

\begin{proof}[Proof of \cref{the:commutation}]
	By \cref{rem:fv-of-taylor}, $\fv(\NFT M) = \fv(M) = \fv(\TMT M)$.
	Thus, it is sufficient to prove that, 
	for $\rt t \in \rterms(\fv(M))$:
	$\rt t \TEin \NFT M$ if and only if $\rt t \TEin \TMT M$.
	
	Take $\rt t \TEin \NFT{M}$, \ie $\rt t \in \supp*{\nf(\rt s)}$
	for some $\rt s \TEin \TE M$.
	Then, by \cref{lem:rred-conserv-bred}, there exists
	a reduction $M \breds N$
	such that $\supp*{\nf(\rt s)} \TElt \TE{N}$.
	Hence $\rt t \TEin \TE{N}$ and $\rt t$ is in $\resource$-nf: 
	\cref{lem:normal-approximants} ensures that $\rt t \TEin \TE A$ 
	for some $A\in \Appof{N}=\Appof{M}$ (by \cref{prop:InvariantUnderBeta}).
	This means exactly that $\rt t \TEin \TMT M$.
	
	Conversely, take $\rt t \TEin \TMT M$, \ie
	$\rt t \TEin \TE{A}$ for some $A \in \Appof{M}$.
	By definition there is a reduction $M \breds N$
	such that $A \MTle \dirapp{N}$.
	By \cref{lem:TE-monotone},
	\[\rt t \TEin \TE A \TElt \TE{\dirapp N} \TElt \TE N.\]
	By iterated applications of \cref{lem:rred-simul-bred}, 
	there is an $\rt s \TEin \TE M$ such that
	$\rt s \rreds \rs t_{\rt s}$ and $\rt t \in \supp*{\rs t_{\rt s}}$.
	By \cref{lem:TE-of-A-is-normal}, $\rt t$ is in $\resource$-nf,
	therefore $\rt t \TEin \supp*{\nf(\rt s)} \TElt \NFT M$.
\end{proof}

\subsection{The \lamI-theory of Ohana trees}

Consider the equivalence $\cM$ on $\LamI$, 
defined by $M =_{\cM} N$ if and only if $\MT{M} = \MT{N}$.
Thanks to \cref{the:commutation},
we are now able to show that this equivalence is a \lamI-theory.

\begin{cor} \label{cor:MTeqIffNFTeq}
	For $M,N\in\LamI$, $M =_\cM N$ if and only if $\NFT M = \NFT N$.
\end{cor}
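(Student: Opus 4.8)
The plan is to reduce the statement to the Commutation \cref{the:commutation} and the Continuous Approximation \cref{thm:MTbijApp}. Since $\NFT M = \TMT M$ and $\NFT N = \TMT N$ by \cref{the:commutation}, and since $\MT M = \MT N$ is equivalent to $\Appof M = \Appof N$ (because $\mtapp\cdot$ is a bijection satisfying $\mtapp{\MT P} = \Appof P$, by \cref{thm:MTbijApp}), the corollary amounts to the equivalence
\[ \Appof M = \Appof N \iff \TMT M = \TMT N . \]
The left-to-right implication is immediate, as $\TMT P = \bigTElub_{A \in \Appof P} \TE A = (\fv(P), \bigcup_{A \in \Appof P} \TE* A)$ depends only on the set $\Appof P$ (see also \cref{rem:fv-of-taylor,rem:appof-M-fv-M}). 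So the real content lies in the converse.

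For the converse I would establish the characterisation
\[ A \in \Appof P \iff \TE A \TElt \TMT P \qquad \text{for all } A \in \Appset,\ P \in \LamI . \]
The implication from left to right is again built into the definition of $\TMT P$ as a least upper bound. For the other implication I would proceed by structural induction on $A$. If $A = \bot_X$, then $\TE{\bot_X} = (X, \emptyset) \TElt \TMT P$ forces $X = \fv(\TMT P) = \fv(P)$ by \cref{rem:fv-of-taylor}, so $A = \bot_{\fv(P)} \in \Appof P$ by non-emptiness of the ideal $\Appof P$ (\cref{lem:DirApp}\ref{lem:DirApp2}). If $A = \lam\vec x.y\,A_1 \cdots A_k$, the key device is to pick a single \emph{representative} resource term $\rt s_A \TEin \TE A$, defined recursively by $\rt s_A \eqdef \lam\vec x.y\,\rb t_1 \cdots \rb t_k$ where $\rb t_i \eqdef 1_{\fv(A_i)}$ when $A_i$ is a $\bot$-node and $\rb t_i \eqdef [\rt s_{A_i}]$ otherwise, together with a \emph{faithfulness property}: for any $B \in \Appset$, if $\rt s_A \TEin \TE B$ then $A \MTle B$. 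This property follows by a parallel induction: since $\TE{\bot_X}$ contains no resource term, $\rt s_A \TEin \TE B$ already forces $B = \lam\vec x.y\,B_1 \cdots B_k$ with the same abstraction prefix, head variable and arity as $A$; then $\rb t_i \in \TE**{B_i}$, which in the $\bot$-case means $1_{\fv(A_i)} = 1_{\fv(B_i)}$, hence $A_i = \bot_{\fv(B_i)} \MTle B_i$, and in the other case means $\rt s_{A_i} \TEin \TE{B_i}$, hence $A_i \MTle B_i$ by the induction hypothesis; in all cases $A \MTle B$. Now $\TE A \TElt \TMT P = \bigTElub_{B \in \Appof P} \TE B$ provides some $B \in \Appof P$ with $\rt s_A \TEin \TE B$ (note $\fv(A) = \fv(P) = \fv(B)$, so the comparison lives in $\Appset(\fv(P))$), whence $A \MTle B$ and finally $A \in \Appof P$ by downward closure of the ideal $\Appof P$ (\cref{lem:DirApp}\ref{lem:DirApp2}).

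With this characterisation in hand, $\TMT M = \TMT N$ yields $\Appof M = \set{A}[\TE A \TElt \TMT M] = \set{A}[\TE A \TElt \TMT N] = \Appof N$, which closes the argument.

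I expect the faithfulness property $\rt s_A \TEin \TE B \Rightarrow A \MTle B$ to be the main obstacle: one must choose $\rt s_A$ so that a single occurrence already pins down the entire shape of $A$, and this is precisely where the annotation of empty bags by a set of variables pays off, since at a $\bot$-node the Taylor expansion produces no resource \emph{term} but still produces the bag $1_X$, whose label $X$ is what forces the $B$-side to agree. The remaining free-variable bookkeeping, the well-posedness of the comparisons $A \MTle B$ (all taking place inside a single $\Appset(\fv(P))$), and the routine inductions should go through without surprises.
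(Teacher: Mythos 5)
Your proposal is correct and follows essentially the same route as the paper's own proof: the forward direction is immediate from the Commutation Theorem, and the backward direction goes through the Continuous Approximation Theorem, reducing to $\Appof M = \Appof N$ and using the ideal structure of $\Appof M$ (non-emptiness for the $\bot_X$ case, downward closure to conclude). Your representative term $\rt s_A$ is exactly the paper's $\resourceof A$, and your \enquote{faithfulness} induction ($\rt s_A \TEin \TE B$ implies $A \MTle B$) is precisely the key step the paper invokes, so the two arguments coincide up to presentation.
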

	\begin{proof}
	($\Rightarrow$)
	Immediate by \cref{the:commutation}.
	
	($\Leftarrow$)
	Take $M,N$ such that $\NFT M = \NFT N$. 
	By \cref{thm:MTbijApp}, it is sufficient to prove 
	$\Appof{M} = \Appof{N}$.
	Take $A\in\Appof{M}$. 
	If $A = \bot_X$, then $A \in \Appof{N}$ is trivial 
		since $\Appof{N}$ is downward closed.
	Otherwise, by hypothesis and \cref{the:commutation},
		$\TE A \TElt \bigTElub_{B \in \Appof N} \TE  B$.
		There is a $B \in \Appof N$ such that
		$\resourceof A \TEin \TE B$,
		where $\resourceof A \TEin \TE A$ is defined by
		\[
		\begin{array}{rcl}
		\resourceof x &\eqdef& x,\\
\resourceof{\lam \vec{x}.yA_1\cdots A_k} &\eqdef&
			\lam \vec{x}.y\resourceof*{A_1}\cdots \resourceof*{A_k},\\	
		\end{array}
		\qquad
		\resourceof*{A} \eqdef
		\begin{cases}
		1_X,&\textrm{if }A = \bot_X,\\
		{[\resourceof A]},&\textrm{otherwise}.
		\end{cases}
		\]
		By induction on $A$, one shows that
		$\resourceof A \TEin \TE B$ implies $A \MTle B$,
		hence $A \in \Appof N$ by downward-closure
		(\cref{lem:DirApp}\ref{lem:DirApp2}).
	This shows that $\Appof{M} \subseteq \Appof{N}$.
	The converse inclusion is symmetric.
	\end{proof}

\begin{cor}\label{cor:MisALamITheory} 
	$\cM$ is a \lamI-theory.
\end{cor}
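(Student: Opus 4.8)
The plan is to deduce everything from the Commutation Theorem via \cref{cor:MTeqIffNFTeq}, which identifies $\mathord{=_\cM}$ with the kernel of the map $M \mapsto \NFT M$. Since that corollary already exhibits $\mathord{=_\cM}$ as an equivalence relation, it suffices to check (i) that this map identifies $\beta$-convertible \lamI-terms, and (ii) that it is compatible with the constructors of the \lamI-calculus. (One could instead argue purely on the continuous side, using \cref{thm:MTbijApp} and continuity of the constructors on approximants, but the Taylor route is the most direct given the machinery already set up.)

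For (i), if $M$ and $N$ are $\beta$-convertible \lamI-terms then they lie in the same $\LamI(X)$ (\cref{rem:aboutLambdaI}) and, applying \cref{prop:InvariantUnderBeta} along a zig-zag of $\beta$-steps, $\Appof M = \Appof N$; by the bijection of \cref{thm:MTbijApp} this gives $\MT M = \MT N$, that is $M =_\cM N$. (Alternatively, the Simulation Lemma \cref{lem:rred-simul-bred} together with confluence of $\rred$ yields $\NFT M = \NFT N$ directly.)

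For (ii), the key point is that $\NFT{-}$ commutes with the constructors: by \cref{def:taylor}, $\TE{\lam x.M} = \lam x.\TE M$ and $\TE{MN} = \TE M\,\TE**{N}$, with $\TE**{N}$ itself determined by $\TE N$. Strong normalization and confluence of $\rred$ (\cref{the:rred-norm-confl}) provide unique normal forms, and since $\rred$ is a congruence for the constructors (the rules of \cref{def:rred}), a reduction occurring inside an abstraction, or in the function or the argument part of an application, can be performed in isolation; by confluence this yields, at the level of resource sums, $\nf(\lam x.\rs s) = \lam x.\nf(\rs s)$ and $\nf(\rs s\,\rbs t) = \nf\bigl(\nf(\rs s)\,\nf(\rbs t)\bigr)$, hence at the level of Taylor expansions that $\NFT{\lam x.M}$ depends only on $x$ and $\NFT M$, and $\NFT{MN}$ only on $\NFT M$ and $\NFT N$. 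Therefore $\NFT M = \NFT{M'}$ implies $\NFT{MN} = \NFT{M'N}$, $\NFT{NM} = \NFT{NM'}$, and --- when $\lam x.M, \lam x.M' \in \LamI$, which forces $x \in \fv(M) = \fv(M')$ --- also $\NFT{\lam x.M} = \NFT{\lam x.M'}$; re-applying \cref{cor:MTeqIffNFTeq} gives compatibility of $\mathord{=_\cM}$ with application and abstraction, so $\cM$ is a \lamI-theory.

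The main obstacle is the careful proof of the commutation identity $\nf(\rs s\,\rbs t) = \nf\bigl(\nf(\rs s)\,\nf(\rbs t)\bigr)$: a reduct of $\nf(\rs s)$ placed in function position may create a new redex with $\rbs t$, so one genuinely needs the \emph{last} normalization on the right-hand side, and it is confluence of $\rred$ (not merely a postponement argument) that makes the identity go through. The remaining ingredients --- closure of $\LamI$ under $\bred$, the bookkeeping of free-variable sets, and the handling of the empty sum together with the reflexive-closure subtleties of \cref{def:rred} --- are routine given the results already established.
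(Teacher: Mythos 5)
Your proposal is correct and follows essentially the same route as the paper: reduce everything to \cref{cor:MTeqIffNFTeq}, obtain $\beta$-invariance from \cref{prop:InvariantUnderBeta} and \cref{thm:MTbijApp}, get compatibility with abstraction from $\NFT{\lam x.M} = \lam x.\NFT{M}$ (under the side condition $x\in\fv(M)\cap\fv(M')$), and compatibility with application from the identity $\NFT{MN} = \nf(\TE M\,\TE**{N}) = \nf(\NFT M\,\NFT**{N})$ justified by confluence and strong normalization of $\rred$ (\cref{the:rred-norm-confl}). Your additional remarks on why the outer normalization on the right-hand side is genuinely needed simply make explicit what the paper leaves to the reader.
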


\begin{proof}
	The relation $=_\cM$ is clearly an equivalence.
	By \cref{prop:InvariantUnderBeta,thm:MTbijApp}, 
	for all $M,N\in\LamI$, $M =_\beta N$ entails $M =_\cM N$.
	
	For compatibility with abstraction, 
	take $M =_\cM N$ and $x \in \fv(M) \cap \fv(N)$. 
	By \cref{cor:MTeqIffNFTeq}, we get 
	$\NFT{\lam x.M} = \lam x.\NFT M = \lam x.\NFT N = \NFT{\lam x.N}$.
	
	For compatibility with application, 
	by \cref{the:rred-norm-confl} observe that for all $M,N \in \LamI$, 
	$\NFT{MN} = \nf(\TE M \TE** N) = \nf(\NFT M \NFT** N)$,
	where $\NFT** N$ is defined by adapting
	\cref{not:support} to sets of resource bags.
	Therefore, $\NFT M = \NFT{M'}$ and $\NFT N = \NFT{N'}$ 
	imply $\NFT{MN} = \NFT{M'N'}$, and
	we can conclude the proof by \cref{cor:MTeqIffNFTeq}.
\end{proof}


\section{A multi-type system characterizing $\cO$}
\label{sec:types}

Once established that Ohana trees induce a \lamI-theory, the question naturally arises whether it is possible to design a denotational model whose theory is exactly~$\cO$. In the conference version of this work \autocite{Cerda.Man.Sau.25}, this was presented as Problem~46.
In fact, this problem immediately leads to a further difficulty: 
since $\cO$ is not a \lam-theory
(as explained in \cref{sec:Ohana-lamK}),
it cannot be the theory of a denotational model of the \lam-calculus,
but only of a denotational model of the \lamI-calculus\dots 
which is not a particularly well-defined notion.
Although some proposals exist \autocite{EgidiHR92,Jacobs93}, unifying them and ensuring their compatibility with the present framework is left for future work.

In this section we present a first step towards such a denotational model by introducing a multi-type (or non-idempotent intersection type) system characterizing $\cO$ just like the usual system characterizes the theory $\cB$ induced by equality of Böhm trees
\autocite{Rocca82,BreuvartMR18}:
\[	\forall M,N \in \Lam,\quad
	\BT{M} = \BT{N} \Longleftrightarrow
	\set{(\Gamma,\sigma)}[\Gamma \vdash M : \sigma]
		= \set{(\Gamma,\sigma)}[\Gamma \vdash N : \sigma]. \]
By interpreting the set
$\interp{M} \eqdef \set{(\Gamma,\sigma)}[\Gamma \vdash M : \sigma]$
in the category $\mathbf{Rel}$ of sets and relations
this usually gives rise to
the relational model of the \lam-calculus
\autocite[Thm.~6.3.15~sqq.]{deCarvalho.07};
the current section paves the way for a similar construction
for the \lamI-calculus.

\subsection{A multi-type system}

Let us first introduce a grammar generating the multi-types.
We adapt the standard definition along the same lines as in the treatment of the resource calculus and the Taylor expansion in \cref{def:rterms,def:taylor}:
empty multisets $[]_X$ are annotated with a finite set $X$ of variables, and the types of a term are extended with a type $\esum{X}$ playing the role of the first component of the Taylor expansion.

\begin{defi}
	Given a countable set $\Atoms$ of \emph{atoms},
	the set $\OTypes$ of \defemph{multi-types} is defined inductively by:
	\[
		\begin{array}{r@{\quad\ni\quad}r@{\quad\eqbnf\quad}l@{\qquad}r}
		\OTypes & \sigma,\tau,\dots & \esum X \isep \alpha & X \subf \Vars 
			\\[\topsep]
		\Types & \alpha,\beta,\dots & \atom \isep \rb\alpha \lto \beta
			& \atom \in \Atoms \\[\topsep]
		\MTypes & \rb\alpha, \rb\beta, \dots & []_X \isep [α_0,\dots,α_n]
			& X \subf \Vars,\ n \in \Nat
		\end{array}
	\]
\end{defi}

The key feature of our typing system is that, along the usual environment $\Gamma$ associating a multiset of types to all the free variables of the typed term,
we introduce a second environment $\Delta$ in each sequent:
to each free variable $x$ of the typed term $M$ this environment associates \emph{the set of the free variables of the term that will ultimately be substituted to $x$ in $M$}.
This should become clear in the statement and the proof of the substitution \cref{lem:typing:fw-subst-lemma-empty,%
lem:typing:fw-subst-lemma-nonempty,lem:typing:bw-subst-lemma} below.

\begin{defi}
	An \defemph{environment} is a pair denoted by $\Gamma \ctxsep \Delta$,
	where:
	\begin{itemize}
	\item A \emph{(type) environment} $\Gamma$ is a map $\Vars \to \Mfin{\Types}$ whose support $\support(\Gamma) \eqdef \set{ x \mid \Gamma(x) \neq []}$ is finite.
	In other words, we have that $\Gamma(x) \neq []$ holds for finitely many variables $x$.
		\begin{itemize}
		\item Such a $\Gamma$ can be presented as a sequence $x_1:\rb\alpha_1, \dots, x_n:\rb\alpha_n$ where the $x_i$ are supposed distinct.
			Observe that the notation $x_i:\rb{\alpha}_i$ is slightly improper here, since these $\rb{\alpha}_i$ live in $\Mfin{\Types}$, not in $\MTypes$.
			Concretely, this means that when $\rb{\alpha} = []_X\in \MTypes$ is used in an environment $\Gamma, x:\rb\alpha$, we silently forget the annotation $X$ and set $\rb\alpha = []$.
		\item Such environments are equipped with an operation $+$ performing pointwise multiset union: 
		\[
			(\Gamma_1+\Gamma_2)(x) = \Gamma_1(x) + \Gamma_2(x),\textrm{ for all $x\in\Var$.}
		\]
		\end{itemize}
	\item A \emph{(variable) environment} is any finite partial map $\Delta : \Vars \pto \partsfin{\Vars}$. 
		\begin{itemize}
		\item Such a $\Delta$ is presented as a sequence $x_1:X_1,\dots,x_n:X_n$ where the $x_i$ are implicitly distinct and $X_i \in \partsfin{\Vars}$, for all $1\le i \le n$.
		\item We denote by $\domain(\Delta)$ the domain of $\Delta$,
			and we define $\uimage(\Delta) \eqdef \bigcup_{i=1}^n X_i$.
		\item We write $\Delta_1 \coh \Delta_2$ if, for all $x\in \domain(\Delta_1) \cap \domain(\Delta_2)$, we have $\Delta_1(x) = \Delta_2(x)$. 
		\item For $\Delta_1,\Delta_2$ such that $\Delta_1\coh \Delta_2$, we define
			\[
			(\Delta_1\vee \Delta_2)(x) \eqdef 
				\begin{cases}
				\Delta_1(x),& \text{if } x \in \domain(\Delta_1),\\
				\Delta_2(x),&\text{otherwise}.
				\end{cases}
			\]
		\item We write $\Delta_1 \subseteq \Delta_2$ whenever $\domain(\Delta_1) \subseteq \domain(\Delta_2)$,
			and $\Delta_1$ and $\Delta_2$ coincide on $\domain(\Delta_1)$.
			Observe that $\Delta_1 \subseteq \Delta_2 \coh \Delta_3$ entails $\Delta_1 \coh \Delta_3$.
		\end{itemize}
	\end{itemize}
\end{defi}

\begin{defi} \label{def:typing-rules}
	We define the typing judgements 
	 $\Gamma\ctxsep \Delta \vdash M : \sigma$
	and
	 $\Gamma\ctxsep \Delta \vdashbang M :\rb\alpha$
	by mutual induction as follows:
	\begin{gather*}
		\begin{prooftree}
		\hypo{ \domain(\Delta) = \fv(M) }
		\infer1[\esum{}]{ \ctxsep \Delta \vdash M : \esum{\uimage(\Delta)} }
		\end{prooftree}
	\\[\topsep]
		\begin{prooftree}
		\infer0[ax]{ x : [\alpha] \ctxsep x : X \vdash x : \alpha }
		\end{prooftree}
	\qquad
		\begin{prooftree}
		\hypo{ \Gamma_0 \ctxsep \Delta_0 \vdash M : \rb\alpha \lto \beta }
		\hypo{ \Gamma_1 \ctxsep \Delta_1 \vdashbang N : \rb\alpha }
		\hypo{ \Delta_0 \coh \Delta_1 }
		\infer3[@]{ \Gamma_0 + \Gamma_1 \ctxsep \Delta_0 \vee \Delta_1
			\vdash MN : \beta }
		\end{prooftree}
	\\[\topsep]
		\begin{prooftree}
		\hypo{ \Gamma, x : [] \ctxsep \Delta, x : X \vdash M : \beta }
		\infer1[λ^0]{ \Gamma \ctxsep \Delta \vdash λx.M : []_X \lto \beta }
		\end{prooftree}
	\qquad
		\begin{prooftree}
		\hypo{ \Gamma, x : [\alpha_0, \dots, \alpha_n] \ctxsep \Delta, x : X 
			\vdash M : \beta }
		\infer1[λ^+]{ \Gamma \ctxsep \Delta 
			\vdash λx.M : [\alpha_0, \dots, \alpha_n] \lto \beta }
		\end{prooftree}
	\\[\topsep]
		\begin{prooftree}
		\hypo{ \domain(\Delta) = \fv(M) }
		\infer1[!^0]{ \ctxsep \Delta \vdashbang M : []_{\uimage(\Delta)} }
		\end{prooftree}
	\qquad
		\begin{prooftree}
		\hypo{ \Gamma_0 \ctxsep \Delta \vdash M : \alpha_0 }
		\hypo{ \cdots }
		\hypo{ \Gamma_n \ctxsep \Delta \vdash M : \alpha_n }
		\infer3[!^+]{ \Gamma_0 + \cdots + \Gamma_n \ctxsep \Delta \vdashbang 
			M : [\alpha_0, \dots, \alpha_n] }
		\end{prooftree}
	\end{gather*}
	We write $\pi \derives \Gamma \ctxsep \Delta \vdash M : \sigma$ whenever $\pi$ is a derivation of the given conclusion, and $\Gamma \ctxsep \Delta \vdash M : \sigma$ to express that such a derivation exists.
\end{defi}

\emph{A priori} this typing system only acts at the level of \enquote{raw} \lamI-terms, \ie terms not quotiented by α-equivalence.
However, it should be clear that typing derivations can be endowed with an α-equivalence relation such that if $M =_α M'$ and $\pi \derives \Gamma \ctxsep \Delta \vdash M : \sigma$,
then there is a $\pi' =_α \pi$
such that $\pi' \derives \Gamma \ctxsep \Delta \vdash M' : \sigma$.
As a consequence we silently quotient derivations by α-equivalence and make the typing system act on (quotiented) \lamI-terms.

The two main results displaying the good behaviour
of any reasonable multi-type system,
which we shall now prove,
are subject reduction and subject expansion
(\cref{thm:typing:subject-reduction,thm:typing:subject-expansion}).

\begin{lem} \label{lem:typing:domain-subset-domain-eq-fv}
	If there is a derivation $\Gamma \ctxsep \Delta \vdash M : \alpha$
	or $\Gamma \ctxsep \Delta \vdashbang M : \rb\alpha$,
	then $\support(\Gamma) \subseteq \domain(\Delta) = \fv(M)$.
\end{lem}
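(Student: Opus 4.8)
The plan is to prove \cref{lem:typing:domain-subset-domain-eq-fv} by a straightforward induction on the typing derivation $\pi$, handling the two mutually-defined judgement forms simultaneously. The statement to establish at each step is the conjunction of two facts: $\support(\Gamma) \subseteq \domain(\Delta)$ and $\domain(\Delta) = \fv(M)$. Since the typing rules are syntax-directed on $M$ (for the term-judgement) and on the shape of the multi-type (for the bang-judgement), the case analysis follows the last rule applied.

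First I would treat the base cases. For the rule $(\mathrm{ax})$ with conclusion $x:[\alpha] \ctxsep x:X \vdash x:\alpha$, we have $\support(\Gamma) = \set{x} = \domain(\Delta)$ and $\fv(x) = \set{x}$, so both conjuncts hold. For the rules $(\esum{})$ and $(!^0)$, the side condition $\domain(\Delta) = \fv(M)$ is built into the rule, and $\Gamma$ is empty so $\support(\Gamma) = \emptyset \subseteq \domain(\Delta)$ trivially. Then for the inductive cases: for $(\lambda^0)$ and $(\lambda^+)$ with conclusion $\Gamma \ctxsep \Delta \vdash \lambda x.M : \rb\alpha \lto \beta$ from premise $\Gamma, x:\rb\alpha' \ctxsep \Delta, x:X \vdash M:\beta$, the induction hypothesis gives $\support(\Gamma) \cup \set{x} \supseteq^{\mathrm{op}}$ — more precisely $\support(\Gamma, x:\rb\alpha') \subseteq \domain(\Delta, x:X) = \fv(M)$; removing $x$ from both the context and the variable environment and using $\fv(\lambda x.M) = \fv(M) \setminus \set{x}$ yields the claim (note $x \in \fv(M)$ is guaranteed in the $\lambda^+$ case by the non-empty multiset, and in the $\lambda^0$ case $x$ may or may not be in $\support(\Gamma)$ but is always in $\domain(\Delta)$, matching $\fv(M)$). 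For the application rule $(@)$, the conclusion is $\Gamma_0 + \Gamma_1 \ctxsep \Delta_0 \vee \Delta_1 \vdash MN:\beta$; the induction hypotheses give $\support(\Gamma_i) \subseteq \domain(\Delta_i) = \fv(M)$ (resp.\ $\fv(N)$), and since $\domain(\Delta_0 \vee \Delta_1) = \domain(\Delta_0) \cup \domain(\Delta_1)$ and $\support(\Gamma_0 + \Gamma_1) = \support(\Gamma_0) \cup \support(\Gamma_1)$ and $\fv(MN) = \fv(M) \cup \fv(N)$, both conjuncts follow. The rule $(!^+)$ is similar, taking the union over the $n+1$ premises (all sharing the same $\Delta$, all giving $\domain(\Delta) = \fv(M)$).

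This lemma is purely bookkeeping, so I do not expect a genuine obstacle; the only point requiring a little care is the coherence of the $\Delta$-environments in the $(@)$ and $(!^+)$ rules — in $(@)$ one must observe that the side condition $\Delta_0 \coh \Delta_1$ makes $\Delta_0 \vee \Delta_1$ well-defined and that its domain is exactly the union of the two domains, and in $(!^+)$ that all premises literally share the \emph{same} $\Delta$, so there is nothing to merge. One should also keep in mind the convention noted after \cref{def:typing-rules}: annotations $X$ on empty multisets $[]_X$ appearing inside a type environment $\Gamma$ are silently discarded (set to $[]$), so they never contribute to $\support(\Gamma)$; this is exactly why $\support(\Gamma)$ can be a \emph{proper} subset of $\domain(\Delta)$ — a variable bound by a $\lambda^0$ whose argument is erased leaves no trace in $\Gamma$ but is still recorded in $\Delta$. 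I would conclude by remarking that the inclusion cannot in general be strengthened to an equality, precisely because of this phenomenon, which is the whole point of the separate environment $\Delta$.
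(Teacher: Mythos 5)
Your proof is correct and matches the paper's argument, which the paper simply records as an immediate induction on the (mutually defined) typing derivations; your case analysis is exactly that induction spelled out. The only nitpick: in the $\lambda^0$ case the premise assigns $x:[]$, so $x$ is never in the support of the premise's type environment (not "may or may not be"), but this does not affect the argument.
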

	
	\begin{proof}
	Immediate induction.
	\end{proof}

\begin{lem}[Forward Substitution Lemma, case~1]
\label{lem:typing:fw-subst-lemma-empty}
	If there are derivations
	$\Gamma \ctxsep \Delta, x:\uimage(\Delta') \vdash M : \beta$
	and $\ctxsep \Delta' \vdashbang N : []_{\uimage(\Delta')}$
	such that $x \notin \support(\Gamma)$ and $\Delta \coh \Delta'$,
	then there is a derivation
	$\Gamma \ctxsep \Delta \vee \Delta' \vdash M \subst N : \beta$.
\end{lem}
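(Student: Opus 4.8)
The plan is to proceed by induction on the derivation $\pi \derives \Gamma \ctxsep \Delta, x:\uimage(\Delta') \vdash M : \beta$, carried out simultaneously with the companion statement for bang-judgements (which is needed to handle the argument of an application): if $\Gamma \ctxsep \Delta, x:\uimage(\Delta') \vdashbang M : \rb\alpha$ with $x \notin \support(\Gamma)$ and $\Delta \coh \Delta'$, then $\Gamma \ctxsep \Delta \vee \Delta' \vdashbang M\subst N : \rb\alpha$. By \cref{lem:typing:domain-subset-domain-eq-fv} applied to the two hypotheses we get $\fv(M) = \domain(\Delta) \cup \set x$ (so in particular $x \in \fv(M)$ and $x \notin \domain(\Delta)$) and $\fv(N) = \domain(\Delta')$. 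The guiding observation is that, since $x \in \fv(M)$ but $x \notin \support(\Gamma)$, no axiom leaf of $\pi$ is labelled by $x$; hence the substitution never grafts the derivation of $N$ into $\pi$, and only affects $\pi$ by relabelling the variable environments of the $\esum{}$- and $!^0$-leaves that mention $x$. It is convenient to note, once and for all, that along every branch of $\pi$ the ambient variable environment takes the value $\uimage(\Delta')$ at $x$ (a consequence of the coherence side conditions of the $@$- and $!^+$-rules), and that for any such environment $\Theta, x:\uimage(\Delta')$ with $\Theta \coh \Delta'$ one has $\domain(\Theta \vee \Delta') = \domain(\Theta) \cup \fv(N)$ and $\uimage(\Theta \vee \Delta') = \uimage(\Theta, x:\uimage(\Delta'))$.

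The axiom rule is impossible, since $x \in \fv(M)$ would force $M = x$, contradicting $x \notin \support(\Gamma)$. For the leaves $\esum{}$ and $!^0$ the type environment $\Gamma$ is empty and $M$ is arbitrary; by the two identities above, the same rule types $M\subst N$ under $\Delta \vee \Delta'$ with exactly the same type, namely $\esum{\uimage(\Delta, x:\uimage(\Delta'))}$ in the first case and $[]_{\uimage(\Delta, x:\uimage(\Delta'))}$ in the second. For the abstraction rules $\lambda^0$ and $\lambda^+$ we write $M = \lambda y.M'$ with $y \neq x$ and $y \notin \fv(N)$; reassociating the environments, the premise reads $(\Gamma, y:\rb\gamma) \ctxsep (\Delta, y:Y), x:\uimage(\Delta') \vdash M' : \beta'$, still with $x \notin \support(\Gamma, y:\rb\gamma)$ and $(\Delta, y:Y) \coh \Delta'$ (because $y \notin \domain(\Delta') = \fv(N)$). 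The induction hypothesis yields $(\Gamma, y:\rb\gamma) \ctxsep (\Delta, y:Y) \vee \Delta' \vdash M'\subst N : \beta'$, and since $(\Delta, y:Y) \vee \Delta' = (\Delta \vee \Delta'), y:Y$, reapplying the same abstraction rule gives $\Gamma \ctxsep \Delta \vee \Delta' \vdash \lambda y.(M'\subst N) : \beta$, where $\lambda y.(M'\subst N) = M\subst N$.

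The remaining cases, $@$ and $!^+$, are where the bookkeeping concentrates. For $M = PQ$ with premises $\Gamma_0 \ctxsep \Delta_0 \vdash P : \rb\alpha \lto \beta$, $\Gamma_1 \ctxsep \Delta_1 \vdashbang Q : \rb\alpha$, $\Delta_0 \coh \Delta_1$, $\Gamma = \Gamma_0 + \Gamma_1$ and $\Delta, x:\uimage(\Delta') = \Delta_0 \vee \Delta_1$, we distinguish whether $x$ lies in $\domain(\Delta_0)$, in $\domain(\Delta_1)$, or in both; since $x \notin \support(\Gamma_0) \cup \support(\Gamma_1)$ and, wherever $x$ occurs, the ambient environment takes value $\uimage(\Delta')$ at $x$ (by $\Delta_0 \coh \Delta_1$), the induction hypotheses apply to the premises containing $x$ — the $\vdash$-one to the $P$-premise and the $\vdashbang$-one to the $Q$-premise — while any $x$-free premise is left untouched (recall $Q\subst N = Q$ when $x \notin \fv(Q)$). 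One then checks that the variable environments so obtained are still pairwise coherent (they all agree with $\Delta'$ on $\fv(N)$) and that recombining them with $\vee$ reproduces $\Delta \vee \Delta'$, using commutativity and associativity of $\vee$ on pairwise-coherent families; reapplying $@$ gives the claim. The $!^+$-rule is treated by applying the $\vdash$-induction hypothesis to each of its premises, which share the same variable environment, and recombining the type environments with $+$.

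The obstacle is not conceptual but bureaucratic: the delicate points are verifying that the coherence side conditions of the $@$- and $!^+$-rules are preserved by the substitution and that the $\vee$-combinations recompose to $\Delta \vee \Delta'$. It therefore pays to isolate beforehand the small calculus of $\domain$, $\uimage$, $\vee$ and $\coh$ — in particular the commutativity and associativity of $\vee$ on pairwise-coherent families together with the two identities displayed above — so that the case analysis reduces to routine checks.
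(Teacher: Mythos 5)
Your proof is correct and follows essentially the same route as the paper's: induction on the typing derivation, with the impossible axiom case, the same treatment of the abstraction rules, and the same key bookkeeping identities for $\domain$, $\uimage$, $\vee$ and coherence (in particular $\uimage(\Delta \vee \Delta') = \uimage(\Delta, x:\uimage(\Delta'))$, which guarantees that the annotation on $[]_Z$ is preserved so that \drule{@} can be reapplied). The only difference is organizational: you package the treatment of the argument as an explicitly stated companion statement for $\vdashbang$-judgements proved by mutual induction, whereas the paper inlines exactly the same analysis (re-deriving with \drule{!^0}, applying the induction hypothesis to the $\vdash$-premises of \drule{!^+}) inside the application case.
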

	
	\begin{proof}
	We proceed by induction on the derivation of
	$\Gamma \ctxsep \Delta, x:\uimage(\Delta') \vdash M : \beta$.
	
	\begin{itemize}
	\item The last rule cannot be \drule{ax} because $\Gamma$ and
		$\Delta, x:\uimage(\Delta')$ do not have identical domains.
		
	\item If the last rule is \drule{λ^0}
		then $M = λy.P$ with $y \neq x$ and $y \notin \fv(N)$,
		$β = []_Y \lto \delta$, and the rule's hypothesis is a derivation:
		\[	\Gamma \ctxsep \Delta, x:\uimage(\Delta'), y:Y
			\vdash P:\delta. \]
		Since $\Delta \coh \Delta'$ and $y \notin \fv(N)$,
		we have $(\Delta, y:Y) \coh \Delta'$,
		and we can apply the induction hypothesis:
		\[	\Gamma \ctxsep (\Delta, y:Y) \vee \Delta'
			\vdash P \subst N : \delta \]
		and conclude by observing that
		$(\Delta, y:Y) \vee \Delta' = \Delta \vee \Delta', y:Y$
		and by applying the rule \drule{λ^0} again.
	
	\item If the last rule is \drule{λ^+} the proof is the same
		as in the previous case.
	
	\item If the last rule is \drule{@} then $M = PQ$,
		and there are two possible cases for the hypotheses of the rule.
		The first one is as follows:
		\[\resizebox{\linewidth}{!}{
			\begin{prooftree}
			\hypo{\vdots}
			\infer1{ \Gamma \ctxsep \Delta_0, x:X_0
				\vdash P : []_Z \lto \beta }
			\hypo{ \domain(\Delta_1, x:X_1) = \fv(Q) }
			\infer1[!^0]{ \ctxsep \Delta_1, x:X_1 \vdashbang Q : []_Z }
			\hypo{ (\Delta_0, x:X_0) \coh (\Delta_1, x:X_1) }
			\infer3[@]{ \Gamma \ctxsep \Delta, x:\uimage(\Delta')
				\vdash PQ : \beta }
			\end{prooftree} 
		} \]
		where $\Delta = \Delta_0 \vee \Delta_1$,
		the sets $X_0$ and $X_1$ are equal either to $\emptyset$
		or to $\uimage(\Delta')$, and cannot be both empty,
		and $Z \eqdef \uimage(\Delta_1, x:X_1)$.
		
		If $X_0 = \emptyset$,
		\ie by \cref{lem:typing:domain-subset-domain-eq-fv}
		if $x \notin \fv(P)$,
		then the first hypothesis of \drule{@} becomes
		$\Gamma \ctxsep \Delta_0 \vdash P \subst N = P : []_Z \lto β$.
		Otherwise if $X_0 = \uimage(\Delta')$ then
		since $\Delta_0 \coh \Delta'$ (because $\Delta \coh \Delta'$)
		we can apply the induction hypothesis and obtain a derivation of
		$\Gamma \ctxsep \Delta_0 \vee \Delta'
		\vdash P \subst N : []_Z \lto β$.
		
		If $X_1 = \emptyset$,
		\ie by \cref{lem:typing:domain-subset-domain-eq-fv}
		if $x \notin \fv(Q)$,
		then the second hypothesis of \drule{@} becomes
		$\ctxsep \Delta_1 \vdashbang Q \subst N = Q : []_Z$.
		Otherwise if $X_1 = \uimage(\Delta')$ then
		since $\Delta_1 \coh \Delta'$ (because $\Delta \coh \Delta'$),
		$\Delta_1 \vee \Delta'$ is defined and verifies:
		\begin{gather*}
			\domain(\Delta_1 \vee \Delta')
			= \domain(\Delta_1, x:X_1) \setminus \set x
				\cup \domain(\Delta')
			= \fv(Q) \setminus \set x \cup \fv(N)
			= \fv(Q \subst N) \\
			\uimage(\Delta_1 \vee \Delta')
			= \uimage(\Delta_1) \cup \uimage(\Delta')
			= Z
		\end{gather*}
		so that the rule \drule{!^0} allows to derive
		$\ctxsep \Delta_1 \vee \Delta' \vdashbang Q \subst N : []_Z$.
			
		In addition, $\Delta_0 \coh \Delta_1$
		(as a consequence of the third hypothesis of \drule{@})
		and $(\Delta_0 \vee \Delta_1) \coh \Delta'$,
		therefore $\Delta_0$, $\Delta_1$, $\Delta_0 \vee \Delta'$
		and $\Delta_1 \vee \Delta'$ are all pairwise coherent.
		
		As a consequence we can apply the rule \drule{@} again
		in each of the three possible cases
		($X_0 = \varnothing$ and $X_1 = \uimage(\Delta')$,
		$X_0 = \uimage(\Delta')$ and $X_1 = \varnothing$, and
		$X_0 = X_1 = \uimage(\Delta')$).
		In each case, by the associativity and the idempotency of $\vee$,
		we obtain $\Gamma \ctxsep \Delta \vee \Delta'
		\vdash (P \subst N)(Q\subst N): β$.
		
		The second case for the hypotheses of \drule{@} is as follows:
		\[
			\resizebox{\linewidth}{!}{	
			\begin{prooftree}
			\hypo{\vdots}
			\infer1{ \Gamma_0 \ctxsep \Delta_0, x:X_0
				\vdash P : \rb\gamma \lto \beta }
			\hypo{\vdots}
			\infer1{ \Gamma_{1,i} \ctxsep \Delta_1, x:X_1
				\vdash Q : \gamma_i }
			\delims{ \left[ }{ \right]_{i=0}^n }
			\infer[left label={!^+}]1{ \Gamma_1 \ctxsep \Delta_1, x:X_1 
				\vdashbang Q : \rb\gamma }
			\hypo{ (\Delta_0, x:X_0) \coh (\Delta_1, x:X_1) }
			\infer3[@]{ \Gamma \ctxsep \Delta, x:\uimage(\Delta')
				\vdash PQ : \beta }
			\end{prooftree} 
			}
		\]
		where $\rb\gamma = [\gamma_0, \dots, \gamma_n]$,
		$\Gamma = \Gamma_0 + \Gamma_1$
		and $\Gamma_1 = \Gamma_{1,1} + \dots + \Gamma_{1,n}$,
		$\Delta = \Delta_0 \vee \Delta_1$,
		and the sets $X_0$ and $X_1$ are equal either to $\emptyset$
		or to $\uimage(\Delta')$ and cannot be both empty.
		
		Again, if $X_0 = \emptyset$,
		\ie by \cref{lem:typing:domain-subset-domain-eq-fv}
		if $x \notin \fv(P)$,
		then the first hypothesis of \drule{@} becomes
		$\Gamma_0 \ctxsep \Delta_0 \vdash P \subst N = P : \rb\gamma \lto β$.
		Otherwise if $X_0 = \uimage(\Delta')$ then
		since $\Delta_0 \coh \Delta'$ (because $\Delta \coh \Delta'$)
		we can apply the induction hypothesis and obtain a derivation of
		$\Gamma_0 \ctxsep \Delta_0 \vee \Delta'
		\vdash P \subst N : \rb\gamma \lto β$.
		We proceed similarly in each of the $n$ hypotheses of \drule{!^+}
		and apply again \drule{!^+}, obtaining
		either $\Gamma_1 \ctxsep \Delta_1
		\vdashbang Q \subst N = Q : \rb\gamma$ if $X_1 = \varnothing$,
		or $\Gamma_1 \ctxsep \Delta_1 \vee \Delta'
		\vdashbang Q \subst N : \rb\gamma$ if $X_1 = \uimage(\Delta')$.
		We conclude as in the first case.
	\qedhere
	\end{itemize}
	\end{proof}

\begin{lem}[Forward Substitution Lemma, case~2]
\label{lem:typing:fw-subst-lemma-nonempty}
	If there are derivations
	$\Gamma, x:\rb\alpha \ctxsep \Delta, x:\uimage(\Delta')
		\vdash M : \beta$
	and $\Gamma' \ctxsep \Delta' \vdashbang N : \rb\alpha$
	such that $\rb\alpha \neq []$ and $\Delta \coh \Delta'$,
	then there is a derivation
	$\Gamma + \Gamma' \ctxsep \Delta \vee \Delta' \vdash M \subst N : \beta$.
\end{lem}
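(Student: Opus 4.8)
The plan is to argue by induction on the derivation $\pi$ of $\Gamma, x:\rb\alpha \ctxsep \Delta, x:\uimage(\Delta') \vdash M : \beta$, mirroring the proof of \cref{lem:typing:fw-subst-lemma-empty}; the novelty is that $x$ now carries a nonempty multiset of types and hence occurs linearly in $M$, possibly with multiplicity. The preliminary observation is that, since $\rb\alpha = [\alpha_0,\dots,\alpha_n] \neq []$, the derivation of $\Gamma' \ctxsep \Delta' \vdashbang N : \rb\alpha$ must end with an instance of \drule{!^+}, so it decomposes into sub-derivations $\Gamma'_i \ctxsep \Delta' \vdash N : \alpha_i$ with $\Gamma' = \Gamma'_0 + \dots + \Gamma'_n$, and $\domain(\Delta') = \fv(N)$ by \cref{lem:typing:domain-subset-domain-eq-fv}. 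This gives a splitting principle: for any partition of the index set $\set{0,\dots,n}$ into blocks, regrouping the corresponding $\Gamma'_i$ and re-applying \drule{!^+} on each nonempty block (and \drule{!^0} on each empty one, legitimate as $\domain(\Delta') = \fv(N)$, with empty type environment) produces, for any decomposition $\rb\alpha = \rb\beta_1 + \dots + \rb\beta_p$ into possibly empty submultisets, type environments $\Gamma'_1,\dots,\Gamma'_p$ summing to $\Gamma'$ with $\Gamma'_j \ctxsep \Delta' \vdashbang N : \rb\beta_j$.

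I would then inspect the last rule of $\pi$; it cannot be the rule introducing a type of the form $\esum{X}$, whose conclusion has an empty type environment, contradicting $\rb\alpha \neq []$. The \drule{ax} case forces $M = x$, $\Gamma = \emptyset$, $\Delta$ empty and $\rb\alpha = [\beta]$ (so $n = 0$); then $M\subst N = N$ and the single premise $\Gamma' \ctxsep \Delta' \vdash N : \beta$ coming from \drule{!^+} above is already the goal, since $\Gamma + \Gamma' = \Gamma'$ and $\Delta \vee \Delta' = \Delta'$. The cases \drule{\lambda^0} and \drule{\lambda^+} are handled exactly as in \cref{lem:typing:fw-subst-lemma-empty}: one $\alpha$-renames the bound variable $y$ off $\fv(N)$, reshapes the premise so that the distinguished variable is still $x$ (with the same $\rb\alpha$) and its two environments are extended by $y$, checks that the $y$-extended variable environment stays coherent with $\Delta'$, applies the induction hypothesis, and re-applies the abstraction rule, the arrow annotation being untouched because $y \neq x$. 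The substantial case is \drule{@}, say $M = PQ$ with premises $\Gamma_0 \ctxsep \Delta_0 \vdash P : \rb\gamma \lto \beta$, $\Gamma_1 \ctxsep \Delta_1 \vdashbang Q : \rb\gamma$ and $\Delta_0 \coh \Delta_1$. Writing $\rb\alpha = \Gamma_0(x) + \Gamma_1(x) =: \rb\alpha_0 + \rb\alpha_1$, I use the splitting principle to write $\Gamma' = \Gamma'_0 + \Gamma'_1$ with $\Gamma'_k \ctxsep \Delta' \vdashbang N : \rb\alpha_k$, and — using \cref{lem:typing:domain-subset-domain-eq-fv} to locate the binding $x:\uimage(\Delta')$ inside $\Delta_0$ and/or $\Delta_1$ — transform each premise. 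On the $P$ side I apply the induction hypothesis if $\rb\alpha_0 \neq []$ and \cref{lem:typing:fw-subst-lemma-empty} if $\rb\alpha_0 = []$ (in which case $x$ does not occur linearly in $P$). On the $Q$ side, if $\rb\alpha_1 \neq []$ the derivation of $\vdashbang Q : \rb\gamma$ must itself be an instance of \drule{!^+} (it cannot be \drule{!^0}, whose conclusion has empty type environment), so I split $\rb\alpha_1$ further across its premises, treat each of them with the induction hypothesis or \cref{lem:typing:fw-subst-lemma-empty}, and recombine with \drule{!^+}; if $\rb\alpha_1 = []$, \cref{lem:typing:fw-subst-lemma-empty} suffices (and $Q\subst N = Q$ if moreover $x \notin \fv(Q)$). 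Re-applying \drule{@} and simplifying by associativity, commutativity and idempotence of $+$ and $\vee$ yields type environment $(\hat\Gamma_0 + \hat\Gamma_1) + (\Gamma'_0 + \Gamma'_1) = \Gamma + \Gamma'$ and variable environment $(\Delta_0' \vee \Delta') \vee (\Delta_1' \vee \Delta') = (\Delta_0' \vee \Delta_1') \vee \Delta' = \Delta \vee \Delta'$, with subject $(P\subst N)(Q\subst N) = (PQ)\subst N$, as required; here $\hat\Gamma_k$ and $\Delta_k'$ denote $\Gamma_k$ and $\Delta_k$ with their $x$-entry deleted.

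The routine but load-bearing part is the coherence bookkeeping: every combination with $\vee$ must be justified, which each time reduces to the fact recorded before the statement ($\Delta_1 \subseteq \Delta_2 \coh \Delta_3$ implies $\Delta_1 \coh \Delta_3$) together with $\Delta_0 \coh \Delta_1$, $\Delta \coh \Delta'$, and $\domain(\Delta') = \fv(N)$ (the last of which, since $x \notin \fv(N)$ after $\alpha$-renaming, also makes $\Delta'$ coherent with the $x$-free parts of $\Delta_0$ and $\Delta_1$). I expect the genuine obstacle to be exactly the \drule{@} case when $x$ occurs linearly in both $P$ and $Q$ and with multiplicity: one must then split $\rb\alpha$ between the two sides \emph{and}, inside $Q$, across the copies introduced by \drule{!^+}, keeping the $\Delta'$-component shared and the type-environment components additively distributed, and check that the re-assembled environments hit $\Gamma + \Gamma' \ctxsep \Delta \vee \Delta'$ exactly — but this is entirely parallel to the corresponding case of \cref{lem:typing:fw-subst-lemma-empty}. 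Finally, the proper type $\beta$ is preserved verbatim for the same reason as in the empty case: no annotation occurring in $\beta$ mentions $x$, and more generally every $[]_{(\cdot)}$- and $\esum{(\cdot)}$-annotation tracks a union of variable-environment values, so replacing the binding $x:\uimage(\Delta')$ by the bindings of $\fv(N) = \domain(\Delta')$ — whose union is again $\uimage(\Delta')$ — leaves each such annotation unchanged.
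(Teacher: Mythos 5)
Your proposal is correct and follows essentially the same route as the paper's proof: induction on the first derivation, reading off the \drule{!^+} premise in the axiom case, reusing the abstraction cases of \cref{lem:typing:fw-subst-lemma-empty}, and in the application case splitting $\rb\alpha$ and $\Gamma'$ across $P$, $Q$ and the premises of the \drule{!^+} typing of $Q$, invoking the induction hypothesis for non-empty components and \cref{lem:typing:fw-subst-lemma-empty} for empty-but-occurring ones, then recombining with \drule{!^+} and \drule{@} under the same coherence bookkeeping. The only nitpick is on the $P$ side: when $\rb\alpha_0=[]$ you should distinguish (as you do for $Q$) between $x\in\fv(P)$, where case~1 applies, and $x\notin\fv(P)$, where $P\subst N = P$ and no lemma is needed.
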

	
	\begin{proof}
	We proceed by induction on the derivation
	$\Gamma \ctxsep \Delta, x:\uimage(\Delta') \vdash M : \beta$.
	
	\begin{itemize}
	\item If the last rule is \drule{ax} then $M = x$,
		$\Gamma$ and $\Delta$ are empty,
		and $\rb{\alpha} = [\beta]$.
		The derivation $\Gamma' \ctxsep \Delta' \vdashbang N : [β]$
		must be obtained by the rule \drule{!^+}
		with the hypothesis $\Gamma' \ctxsep \Delta' \vdash N : β$,
		which is the desired result since $x \subst N = N$.
	
	\item If the last rule is \drule{λ^0} or \drule{λ^+}
		the proof is the same as for \cref{lem:typing:fw-subst-lemma-empty}.
	
	\item If the last rule is \drule{@}, then $M = PQ$
		and just as for \cref{lem:typing:fw-subst-lemma-empty}
		there are two cases for the hypotheses of the rule.
		
		The first case is the same as for 
		\cref{lem:typing:fw-subst-lemma-empty}
		with only two small differences in
		the first hypothesis of \drule{@} which becomes
		$\Gamma, x:\rb{\alpha} \ctxsep
		\Delta_0, x:\uimage(\Delta') \vdash P : []_Z \lto \beta$
		(the case $X_0 = \emptyset$ is not possible any more).
		The proof is exactly as in \cref{lem:typing:fw-subst-lemma-empty},
		in particular as the first difference is exactly what
		one needs to be able to apply the induction hypothesis.
		
		The second case for the hypotheses of \drule{@} is as follows:
		\[	\resizebox{\linewidth}{!}{\begin{prooftree}
			\hypo{\vdots}
			\infer1{ \Gamma_0, x:\rb{\alpha}_0 \ctxsep \Delta_0, x:X_0
				\vdash P : \rb\gamma \lto \beta }
			\hypo{\vdots}
			\infer1{ \Gamma_{1,i}, x:\rb{\alpha}_{1,i}
				\ctxsep \Delta_1, x:X_1	\vdash Q : \gamma_i }
			\delims{ \left[ }{ \right]_{i=0}^n }
			\infer[left label={!^+}]1{ \Gamma_1, x:\rb{\alpha}_1
				\ctxsep \Delta_1, x:X_1 \vdashbang Q : \rb\gamma }
			\hypo{ (\dots) }
			\infer3[@]{ \Gamma, x:\rb{\alpha} 
				\ctxsep \Delta, x:\uimage(\Delta') \vdash PQ : \beta }
			\end{prooftree}} \]
		where $\rb\gamma = [\gamma_0, \dots, \gamma_n]$,
		$\Gamma = \Gamma_0 + \Gamma_1$
		and $\Gamma_1 = \Gamma_{1,1} + \dots + \Gamma_{1,n}$,
		$\rb{\alpha} = \rb{\alpha}_0 + \rb{\alpha}_1$
		and $\rb{\alpha}_1 = \rb{\alpha}_{1,1} + \dots + \rb{\alpha}_{1,n}$,
		$\Delta = \Delta_0 \vee \Delta_1$,
		the sets $X_0$ and $X_1$ are equal either to $\emptyset$
		or to $\uimage(\Delta')$ and cannot be both empty,
		and the hypothesis $(\dots)$ is the coherence assumption
		$(\Delta_0, x:X_0) \coh (\Delta_1, x:X_1)$.
		We denote by $K$ the index set
		$\set{0} \cup \set{(1,i)}[1 \leq i \leq n]$,
		and for each $k \in K$ we denote by 
		\begin{equation} \label{lem:typing:fw-subst-lemma-nonempty:eq:hypM}
			\Gamma_k, x:\rb{\alpha}_k \ctxsep \Delta_k, x:X_k
			\vdash R_k:\epsilon_k
		\end{equation}
		the corresponding hypotheses of the above derivation.
		In addition, the derivation
		$\Gamma' \ctxsep \Delta' \vdashbang N : \rb\alpha$
		must be obtained by the rule \drule{!^+}, with hypotheses
		\begin{equation} \label{lem:typing:fw-subst-lemma-nonempty:eq:hypN}
			\Gamma'_j \ctxsep \Delta' \vdash N:\alpha_j
		\end{equation}
		for each element $\alpha_j$ of $\rb{\alpha}$.
		Then, for each $k \in K$:
		\begin{itemize}
		\item If $X_k = \varnothing$ (and thus $\rb{\alpha}_k = []$)
			then by \cref{lem:typing:domain-subset-domain-eq-fv}
			we have $x \notin \fv(R_k)$,
			therefore the hypothesis 
			in \cref{lem:typing:fw-subst-lemma-nonempty:eq:hypM} becomes
			$\Gamma_k + \Gamma'_k \ctxsep \Delta_k
			\vdash R_k \subst N = R_k : \epsilon_k$,
			where $\Gamma'_k$ is the empty environment.
		\item If $X_k = \uimage(\Delta')$ and $\rb{\alpha}_k = []$
			then by \cref{lem:typing:domain-subset-domain-eq-fv}
			and rule \drule{!^0} there is a derivation
			$\Gamma'_k \ctxsep \Delta' \vdashbang N:[]_{\uimage(\Delta')}$,
			where $\Gamma'_k$ is the empty environment.
			By \cref{lem:typing:fw-subst-lemma-empty}
			applied to this derivation, the derivation in 
			\cref{lem:typing:fw-subst-lemma-nonempty:eq:hypM},
			and the coherence assumption $\Delta_j \coh \Delta'$,
			there is a derivation
			$\Gamma_k + \Gamma'_k \ctxsep \Delta_k \vee \Delta'
			\vdash R_k \subst N : \epsilon_k$.
		\item If $X_k = \uimage(\Delta')$ and $\rb{\alpha}_k \neq []$
			then by applying the rule \drule{!^+}
			to all hypotheses $\Gamma'_j \ctxsep \Delta' \vdash N:\alpha_j$
			from \cref{lem:typing:fw-subst-lemma-nonempty:eq:hypN}
			such that $\alpha_j$ is sent into $\rb{\alpha}_k$
			in the decomposition $\rb{\alpha} = \rb{\alpha}_0 
			+ \rb{\alpha}_{1,0} + \dots + \rb{\alpha}_{1,n}$,
			we obtain a derivation
			$\Gamma'_k \ctxsep \Delta' \vdashbang N:\rb{\alpha}_k$.
			By the induction hypothesis
			applied to this derivation, the derivation in
			\cref{lem:typing:fw-subst-lemma-nonempty:eq:hypM}
			and the coherence assumption $\Delta_j \coh \Delta'$,
			there is a derivation
			$\Gamma_k + \Gamma'_k \ctxsep \Delta_k \vee \Delta'
			\vdash R_k \subst N : \epsilon_k$.
		\end{itemize}
	To conclude, we can apply the rules \drule{!^+} and \drule{@} again,
	as in the derivation above,
	in each of the three possible cases
	($X_0 = \varnothing$ and $X_1 = \uimage(\Delta')$,
	$X_0 = \uimage(\Delta')$ and $X_1 = \varnothing$, and
	$X_0 = X_1 = \uimage(\Delta')$).
	In each case, by the facts that
	$\Gamma = \sum_{k \in K} \Gamma_k$
	and $\Gamma' = \sum_{k \in K} \Gamma'_k$
	and by the associativity and the idempotency of $\vee$,
	we obtain $\Gamma + \Gamma' \ctxsep \Delta \vee \Delta'
	\vdash (P \subst N)(Q \subst N) : β$.
	\qedhere
	\end{itemize}
	\end{proof}

\begin{thm}[Subject Reduction] \label{thm:typing:subject-reduction}
	If $\Gamma \ctxsep \Delta \vdash M : \sigma$
	and $M \bred N$
	then $\Gamma \ctxsep \Delta \vdash N : \sigma$.
\end{thm}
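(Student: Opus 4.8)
The plan is to strengthen the statement into a simultaneous claim about both judgement forms --- if $\Gamma \ctxsep \Delta \vdash M : \sigma$ (resp.\ $\Gamma \ctxsep \Delta \vdashbang M : \rb\alpha$) and $M \bred N$, then $\Gamma \ctxsep \Delta \vdash N : \sigma$ (resp.\ $\Gamma \ctxsep \Delta \vdashbang N : \rb\alpha$), with the \emph{same} environments and (multi-)type --- and to prove it by mutual induction on a derivation of $M \bred N$, discriminating on the last rule of the typing derivation. Throughout one uses that $\fv(N) = \fv(M)$ and that $N$ is again a \lamI-term (\cref{rem:aboutLambdaI}), together with \cref{lem:typing:domain-subset-domain-eq-fv} to match $\domain(\Delta)$ with $\fv(M)$.

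The congruence cases are handled uniformly. If the last rule is $(\esum{})$ or $(!^0)$, the conclusion depends only on $\fv(M)$ and $\domain(\Delta)$, so the same rule re-derives the judgement for $N$ with unchanged $\Gamma$ and $\Delta$. If $M = \lam x.M'$ with the redex inside $M'$, the last rule is $(\lambda^0)$ or $(\lambda^+)$: apply the induction hypothesis to the premise typing $M'$ and re-apply the same rule. If $M = PM'$ with the redex inside $M'$, the last rule is $(@)$ with second premise $\Gamma_1 \ctxsep \Delta_1 \vdashbang M' : \rb\alpha$: apply the $\vdashbang$-part of the induction hypothesis to it, and re-apply $(@)$, the other premise and the coherence $\Delta_0 \coh \Delta_1$ being untouched (coherence survives because $\Delta_1$ is preserved). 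The case of a redex in the left subterm of an application is symmetric, using the $\vdash$-part. Finally the $\vdashbang$-part of the induction itself is immediate: $(!^0)$ depends only on $\fv(M)$, and $(!^+)$ reduces to finitely many uses of the $\vdash$-part, all its premises sharing the same (preserved) $\Delta$.

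The core is the base case $M = (\lam x.P)Q \bred P\subst Q =: N$. If $\sigma = \esum{X}$, the last rule is $(\esum{})$, $X = \uimage(\Delta)$ and $\domain(\Delta) = \fv(M) = \fv(N)$ (using $x \in \fv(P)$, which holds in the \lamI-calculus), so $(\esum{})$ re-derives the judgement for $N$. Otherwise $\sigma = \beta \in \Types$ and the derivation ends with $(@)$, with premises $\Gamma_0 \ctxsep \Delta_0 \vdash \lam x.P : \rb\alpha \lto \beta$ and $\Gamma_1 \ctxsep \Delta_1 \vdashbang Q : \rb\alpha$, where $\Delta_0 \coh \Delta_1$, $\Gamma = \Gamma_0 + \Gamma_1$ and $\Delta = \Delta_0 \vee \Delta_1$. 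The derivation of $\lam x.P$ ends with $(\lambda^0)$ or $(\lambda^+)$. In the first case $\rb\alpha = []_X$ and the premise is $\Gamma_0, x:[] \ctxsep \Delta_0, x:X \vdash P : \beta$; since $\rb\alpha$ is empty, $\vdashbang Q : []_X$ must be an instance of $(!^0)$, forcing $\Gamma_1$ empty, $\domain(\Delta_1) = \fv(Q)$ and $X = \uimage(\Delta_1)$. Then \cref{lem:typing:fw-subst-lemma-empty}, applied to these two derivations (note $x \notin \support(\Gamma_0)$ and $\Delta_0 \coh \Delta_1$), yields $\Gamma_0 \ctxsep \Delta_0 \vee \Delta_1 \vdash P\subst Q : \beta$, which is exactly $\Gamma \ctxsep \Delta \vdash N : \sigma$. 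In the second case $\rb\alpha = [\alpha_0,\dots,\alpha_n]$ is non-empty, the premise of $(\lambda^+)$ is $\Gamma_0, x:\rb\alpha \ctxsep \Delta_0, x:X \vdash P : \beta$, and $\vdashbang Q : \rb\alpha$ must be an instance of $(!^+)$, splitting $\Gamma_1 = \Gamma_{1,0} + \dots + \Gamma_{1,n}$; one then concludes by \cref{lem:typing:fw-subst-lemma-nonempty}.

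The one delicate point --- which I expect to be the main obstacle --- lies in this last sub-case: \cref{lem:typing:fw-subst-lemma-nonempty} requires the variable-environment entry of $x$ in the typing of $P$ to be \emph{exactly} $\uimage(\Delta_1)$, whereas the rules $(\mathrm{ax})$ and $(\lambda^+)$ leave this entry under-determined, so in general the recorded set $X$ need not coincide with $\uimage(\Delta_1)$. Resolving this requires arguing that in a derivation of $(\lam x.P)Q$ the two sets are already consistent, or --- equivalently --- relabelling the $x$-entries throughout the derivation of $P$ (adjusting, where necessary, the empty-multiset annotations that depend on them) without disturbing $\Gamma_0$, $\Delta_0$ or $\beta$. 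This is precisely the bookkeeping that the second environment $\Delta$ was introduced to make sound, and that the forward substitution lemmas are phrased to absorb; once it is settled, the mutual induction closes and subject reduction for $\vdashbang$ follows from the $\vdash$-case exactly as in the congruence step above.
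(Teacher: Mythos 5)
Your proposal follows the same route as the paper's proof: induction on the derivation of $M \bred N$; congruence cases dispatched by re-applying the last typing rule (the paper compresses all of these into \enquote{all the other cases are straightforward} --- your explicit mutual statement for $\vdashbang$, needed when the redex sits inside the argument of an application and the second premise of \drule{@} comes from \drule{!^+}, is exactly what that phrase hides); and the root case $(\lam x.P)Q \bred P\subst Q$ split according to whether the last rule is \drule{\esum{}}, or \drule{@} fed by \drule{λ^0}/\drule{!^0}, or \drule{@} fed by \drule{λ^+}/\drule{!^+}, discharged respectively by the \lamI-fact $\fv(P\subst Q)=\fv((\lam x.P)Q)$ and by \cref{lem:typing:fw-subst-lemma-empty,lem:typing:fw-subst-lemma-nonempty}. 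Up to your last paragraph the two arguments coincide, including the observation that in the \drule{λ^0} sub-case the annotation $[]_X$ in the arrow type must match the $[]_{\uimage(\Delta_1)}$ produced by \drule{!^0}, so the substitution lemma applies on the nose.

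The \enquote{delicate point} you flag in the \drule{λ^+} sub-case is precisely the step the paper does not discuss: its proof simply declares that this case \enquote{follows immediately} from \cref{lem:typing:fw-subst-lemma-nonempty}, i.e.\ it instantiates the lemma as though the premise of \drule{λ^+} carried $x:\uimage(\Delta_1)$. You are right that nothing in the rules forces this: the set recorded for $x$ disappears from the conclusion type $[\alpha_0,\dots,\alpha_n]\lto\beta$, and \drule{ax} lets it be guessed freely. Moreover the difficulty is not purely cosmetic, so the \enquote{relabelling} you suggest is not obviously available: the guessed set can resurface in $\Gamma_0$ or $\beta$ through \drule{!^0}-annotations covering \emph{other} occurrences of $x$ in $P$ (e.g.\ $P = y(zx)x$, where the occurrence of $x$ inside $zx$ is absorbed by a \drule{!^0} whose annotation ends up inside the type assigned to $y$ in $\Gamma_0$), so one cannot in general change the $x$-entry while keeping $\Gamma_0$, $\Delta_0$ and $\beta$ fixed. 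So your reconstruction is faithful to the paper's argument and does not miss any step the paper actually supplies; rather, you have isolated the one point the paper leaves implicit, and closing it genuinely requires something extra --- either an invariant ensuring that in any derivation of $(\lam x.P)Q$ ending as in this sub-case the entry for $x$ already equals $\uimage(\Delta_1)$, or a reformulation of \cref{lem:typing:fw-subst-lemma-nonempty} accepting an arbitrary entry --- which neither your text nor the paper's proof provides.
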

	
	\begin{proof}
	We proceed by induction on the reduction $M \bred N$.
	
	In the case of a root step $(λx.P)Q \bred P \subst Q$,
	there are three possible cases for the derivation
	$\Gamma \ctxsep \Delta \vdash (λx.P)Q : \sigma$.
	\begin{itemize}
	\item The first possibility is:
		\[	\begin{prooftree}
			\hypo{ \domain(\Delta) = \fv((λx.P)Q) }
			\infer1[\esum{}]{ \ctxsep \Delta \vdash\esum{\uimage(\Delta)}. }
			\end{prooftree} \]
		Since we are working in $\LamI$
		we have $\fv(P \subst Q) = \fv((λx.P)Q)$, hence we can just replace
		$(λx.P)Q$ with $P \subst Q$ in this derivation
		and obtain the desired result.
	\item The second possibility is:
		\[	\begin{prooftree}
			\hypo{\vdots}
			\infer1{ \Gamma \ctxsep \Delta_0, x:\uimage(\Delta_1)
				\vdash P:β }
			\infer1[λ^0]{ \Gamma \ctxsep \Delta_0
				\vdash λx.P : []_{\uimage(\Delta_1)} \lto β }
			\hypo{ \domain(\Delta_1) = \fv(Q) }
			\infer1[!^0]{ \ctxsep \Delta_1
				\vdashbang Q : []_{\uimage(\Delta_1)} }
			\hypo{ \Delta_0 \coh \Delta_1 }
			\infer3[@]{ \Gamma \ctxsep \Delta_0 \vee \Delta_1 
				\vdash (λx.P)Q : β }
			\end{prooftree} \]
		and \cref{lem:typing:fw-subst-lemma-empty}
		yields the desired derivation
		$\Gamma \ctxsep \Delta_0 \vee \Delta_1 \vdash P \subst Q : β$.
	\item Similarly, the third possibility corresponds to
		a rule \drule{@} whose first and second hypothesis
		are rules \drule{λ^+} and \drule{!^+},
		and the desired result follows immediately
		by \cref{lem:typing:fw-subst-lemma-nonempty}.
	\end{itemize}
	
	All the other cases are straightforward.
	\end{proof}

\begin{lem}[Backward Substitution Lemma]
\label{lem:typing:bw-subst-lemma}
	If there is a derivation $\Gamma \ctxsep \Delta \vdash M \subst N : β$
	with $x \in \fv(M)$,
	then there are $\rb{\alpha} \in \MTypes$ and derivations
	\[	\Gamma^M, x:\rb{\alpha} \ctxsep \Delta^M, x:\uimage(\Delta^N)
			\vdash M : β
		\qquad \text{and} \qquad
		\Gamma^N \ctxsep \Delta^N \vdashbang N : \rb{\alpha} \]
	such that $\Gamma = \Gamma^M + \Gamma^N$,
	$\Delta^M \coh \Delta^N$ and $\Delta = \Delta^M \vee \Delta^N$.
\end{lem}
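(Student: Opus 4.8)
The plan is to strengthen the statement so it carries through an induction, proving it simultaneously with its $\vdashbang$-analogue: if $\Gamma \ctxsep \Delta \vdashbang M\subst N : \rb\gamma$ with $x \in \fv(M)$, then there are $\rb\alpha \in \MTypes$ and derivations $\Gamma^M, x:\rb\alpha \ctxsep \Delta^M, x:\uimage(\Delta^N) \vdashbang M : \rb\gamma$ and $\Gamma^N \ctxsep \Delta^N \vdashbang N : \rb\alpha$ with $\Gamma = \Gamma^M + \Gamma^N$, $\Delta^M \coh \Delta^N$ and $\Delta = \Delta^M \vee \Delta^N$. I would argue by a mutual induction on the two given derivations. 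Since the conclusion type lies in $\Types$ (resp.\ in $\MTypes$), the last rule cannot be \drule{\esum{}}, so it pins down the head shape of $M\subst N$, hence of $M$ itself: $M$ is a variable --- necessarily $M = x$, as $x \in \fv(M)$ --- or an abstraction, or an application in the $\vdash$-case; and the last rule is \drule{!^0} or \drule{!^+} in the $\vdashbang$-case.

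The easy cases go as follows. If the last rule is \drule{ax}, then $M = x$ and $M\subst N = N$, so the given derivation already types $N : \beta$; one takes $\rb\alpha \eqdef [\beta]$, $\Gamma^M$ and $\Delta^M$ empty, $\Delta^N \eqdef \Delta$, applies \drule{!^+} to the given derivation for the witness $\vdashbang N : [\beta]$, and uses the instance $x:[\beta] \ctxsep x:\uimage(\Delta) \vdash x : \beta$ of \drule{ax} as the witness for $M$. If the last rule is \drule{λ^0} or \drule{λ^+}, then $M = λy.P$ with (after renaming the bound variable) $y \notin \fv(N)$ and $y \neq x$, hence $x \in \fv(P)$; one applies the induction hypothesis to the premise --- a $\vdash$-derivation of $P\subst N$ --- and observes, via \cref{lem:typing:domain-subset-domain-eq-fv}, that $y$ lies in neither $\support(\Gamma^N)$ nor $\domain(\Delta^N)$, so the entry for $y$ stays on the $P$-side of both environments and the same abstraction rule can be re-applied. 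If the last rule is \drule{!^0} (and similarly \drule{!^+} in the degenerate situation where the induction hypothesis returns only empty multi-types), one takes $\Delta^N \eqdef \Delta|_{\fv(N)}$, $\Delta^M \eqdef \Delta|_{\fv(M) \setminus \set x}$, $\rb\alpha \eqdef []_{\uimage(\Delta^N)}$, all $\Gamma$-components empty, and re-applies \drule{!^0} on both sides; the annotation on the empty multiset comes out correctly because $\uimage$ is additive along the domain splitting $\domain(\Delta) = (\fv(M) \setminus \set x) \cup \fv(N)$.

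The substantial cases are \drule{!^+} and \drule{@}. For \drule{!^+}, with premises $\Gamma_i \ctxsep \Delta \vdash M\subst N : \gamma_i$ (for $0 \leq i \leq n$), I would apply the $\vdash$-induction hypothesis to each premise. A priori each yields its own decomposition, but \cref{lem:typing:domain-subset-domain-eq-fv} together with $\Delta = \Delta^M_i \vee \Delta^N_i$ forces $\Delta^N_i = \Delta|_{\fv(N)}$ and $\Delta^M_i = \Delta|_{\fv(M) \setminus \set x}$, so these variable environments --- and hence the annotations $\uimage(\Delta^N_i)$ --- coincide for all $i$; one then sets $\rb\alpha \eqdef \sum_i \rb\alpha_i$, adds up the $\Gamma$-components, and re-applies \drule{!^+} on both sides. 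For \drule{@}, with premises $\Gamma_0 \ctxsep \Delta_0 \vdash P\subst N : \rb\delta \lto \beta$ and $\Gamma_1 \ctxsep \Delta_1 \vdashbang Q\subst N : \rb\delta$, and $x \in \fv(P) \cup \fv(Q)$, one applies the $\vdash$-induction hypothesis to the first premise whenever $x \in \fv(P)$ and the $\vdashbang$-induction hypothesis to the second whenever $x \in \fv(Q)$ (if $x$ is absent from one of the subterms, the substitution leaves that subterm unchanged and there is nothing to do for it, coherence of the unchanged variable environment with the extracted one being automatic). When $x$ occurs in both $P$ and $Q$, the variable environments extracted for $N$ from the two premises are both equal to $\Delta_0|_{\fv(N)} = \Delta_1|_{\fv(N)}$ --- genuinely equal, not merely coherent, since $\Delta_0 \coh \Delta_1$ and $\fv(N) \subseteq \domain(\Delta_0) \cap \domain(\Delta_1)$ --- so the two $x$-entries carry the same annotation; one merges the resulting derivations $\vdashbang N : \rb\alpha_P$ and $\vdashbang N : \rb\alpha_Q$ into a single derivation of $\vdashbang N : \rb\alpha_P + \rb\alpha_Q$ by combining their last \drule{!^0}/\drule{!^+} rules, re-applies \drule{@}, and concludes by a direct check of the $\Gamma$- and $\Delta$-bookkeeping (associativity and idempotency of $\vee$, additivity of $+$).

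I expect the main obstacle to be exactly this \drule{@} case with $x$ occurring in both subterms: one must recognise simultaneously that the two variable environments extracted for $N$ are literally identical --- so that the shared annotation $\uimage(\Delta^N)$ is unambiguous and the concluding \drule{@} applies --- merge the two $\vdashbang N$ derivations coherently, with particular care for the degenerate situation in which one or both of $\rb\alpha_P, \rb\alpha_Q$ is an annotated empty multiset (so that their sum must again be annotated by $\uimage(\Delta^N)$), and then verify that all the $\Gamma$-sums line up. Everything else is bureaucratic.
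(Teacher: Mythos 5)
Your proposal is, in substance, the paper's own argument: the same decompositions, the same key observation that all variable environments extracted for $N$ coincide (they have domain $\fv(N)$ and are restrictions of pairwise coherent environments, hence genuinely equal), and the same merging of the $\vdashbang N$ derivations by unwrapping and rewrapping their final \drule{!^0}/\drule{!^+} rules. The only organisational difference is that the paper proceeds by structural induction on $M$ with the $\vdash$-statement alone, unpacking the \drule{!^0}/\drule{!^+} premise inside the application case, whereas you induct on the typing derivation and carry a mutual $\vdashbang$-strengthening; these are equivalent packagings of the same induction.

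There is, however, one incorrect step in your set-up: the last rule of the given derivation does \emph{not} pin down the shape of $M$. Since $x \in \fv(M)$, the term $M$ may be the variable $x$ itself, in which case $M\subst{N} = N$ and the derivation ends in whatever rule the shape of $N$ dictates (\drule{λ^0}, \drule{λ^+} or \drule{@} in the $\vdash$-case), not in \drule{ax}. As written, your abstraction and application cases then assert $M = λy.P$ or $M = PQ$, so the subcase \enquote{$M = x$ with the last rule not \drule{ax}} is formally skipped. The gap is harmless: the argument you give under \drule{ax} never uses the shape of $N$ (take $\rb\alpha \eqdef [\beta]$, apply \drule{!^+} to the given derivation, and use an \drule{ax} instance with annotation $\uimage(\Delta)$ for the $M$-witness), so it covers $M = x$ uniformly; the clean repair is to case-split on $M$ first rather than on the last rule --- which is exactly why the paper organises the proof as an induction on $M$. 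With that adjustment, the rest of your sketch (including the care about empty annotated multisets and the $\Gamma$-/$\Delta$-bookkeeping in the application case) matches the paper's proof.
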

	
	\begin{proof}
	We proceed by induction on $M$.
	
	\begin{itemize}
	\item If $M = x$ then we can take $\rb{\alpha} \eqdef [β]$.
		For the first derivation we take $\Gamma^M$ and $\Delta^M$ empty
		and we apply the rule \drule{ax}.
		For the second derivation we take $\Gamma^N \eqdef \Gamma$,
		$\Delta^N \eqdef \Delta$ and we apply the rule \drule{!^+}
		under the hypothesis.
	
	\item If $M = λy.P$ for $y \neq x$ and $y \notin \fv(N)$,
		then there are two possibilities.
		The first one is when $β = []_Y \lto \delta$
		and the hypothesis has been derived with the rule \drule{λ^0}
		from the hypothesis
		$\Gamma \ctxsep \Delta, y:Y \vdash P \subst N : \delta$.
		By induction there are $\rb{\alpha} \in \MTypes$ and derivations
		\[	\Gamma^P, x:\rb{\alpha} \ctxsep \Delta^P, x:\uimage(\Delta^N)
				\vdash P : \delta
			\qquad \text{and} \qquad
			\Gamma^N \ctxsep \Delta^N \vdashbang N : \rb{\alpha} \]
		such that $\Gamma = \Gamma^P + \Gamma^N$,
		$\Delta^P \coh \Delta^N$ and $\Delta, y:Y = \Delta^P \vee \Delta^N$.
		
		Define $\Gamma^M \eqdef \Gamma^P$, and observe that
		$y \notin \support(\Gamma) \supseteq \support(\Gamma^M)$.
		Since $M \in \LamI$ we have $y \in \fv(P) = \domain(\Delta^P)$
		and $\Delta^P(y) = Y$ by the above conditions,
		hence we can write $\Delta^P = \Delta^M, y:Y$
		for some $\Delta^M$.
		If we apply the rule \drule{λ^0} again
		under the first derivation above, we obtain 
		$\Gamma^M, x:\rb{\alpha} \ctxsep \Delta^M, x:\uimage(\Delta^N)
		\vdash λy.P : []_Y \lto \delta$, which is the desired result.
		The decompositions of $\Gamma$ and $\Delta$ follow easily
		from those obtained by induction.
		
		The second case is when $β = \rb{\gamma} \lto \delta$
		and the initial hypothesis has been derived 
		with the rule \drule{λ^+}. The proof is exactly identical.
	
	\item If $M = PQ$, then again there are two possibilities.
		The first one is when the hypothesis has been derived as follows:
		\[	\begin{prooftree}
			\hypo{\vdots}
			\infer1{ \Gamma \ctxsep \Delta_0
				\vdash P \subst N : []_{\uimage(\Delta_1)} \lto \beta }
			\hypo{ \domain(\Delta_1) = \fv(Q \subst N) }
			\infer1[!^0]{ \ctxsep \Delta_1
				\vdashbang Q \subst N : []_{\uimage(\Delta_1)} }
			\hypo{ \Delta_0 \coh \Delta_1 }
			\infer3[@]{ \Gamma \ctxsep \Delta
				\vdash P \subst N Q \subst N : \beta }
			\end{prooftree} \]
		where $\Delta = \Delta_0 \vee \Delta_1$.
		We perform the following constructions.
		
		\begin{itemize}
		\item If $x \notin \fv(P)$,
			then the first hypothesis can be presented as
			$\Gamma^P \ctxsep \Delta_0^P
			\vdash P : []_{\uimage(\Delta_1)} \lto \beta$,
			with $\Gamma^P \eqdef \Gamma$ and $\Delta_0^P \eqdef \Delta_0$.
			
			Otherwise $x \in \fv(P)$,
			and by induction on the first hypothesis
			we obtain some $\rb{\alpha} \in \MTypes$ and derivations
			\[	\Gamma^P, x:\rb{\alpha} 
					\ctxsep \Delta_0^P, x:\uimage(\Delta_0^N)
					\vdash P : []_{\uimage(\Delta_1)} \lto \beta
				\qquad \text{and} \qquad
				\Gamma^N \ctxsep \Delta_0^N \vdashbang N : \rb{\alpha} \]
			such that $\Gamma = \Gamma^P + \Gamma^N$,
			$\Delta_0^P \coh \Delta_0^N$ and 
			$\Delta_0 = \Delta_0^P \vee \Delta_0^N$.
			
		\item If $x \notin \fv(Q)$,
			then the second hypothesis can be presented as
			$\ctxsep \Delta_1^Q \vdashbang Q:[]_{\uimage(\Delta_1)}$,
			with $\Delta_1^Q \eqdef \Delta_1$.
			
			Otherwise $x \in \fv(Q)$ and $\domain(\Delta_1) = 
			\fv(Q \subst N) = \fv(N) \setminus \set x \cup \fv(Q)$.
			Therefore we can write a decomposition
			$\Delta_1 = \Delta_1^Q \vee \Delta_1^N$ where
			$\Delta_1^Q$ (resp. $\Delta_1^N$) is defined
			by restricting the domain of $\Delta_1$
			to $\fv(Q) \setminus \set x$ (resp. to $\fv(N)$).
			Observe that
			\[	\uimage(\Delta_1^Q, x:\uimage(\Delta_1^N))
				= \uimage(\Delta_1^Q) \cup \uimage(\Delta_1^N)
				= \uimage(\Delta_1), \]
			hence we can derive
			\[	\begin{prooftree}
				\hypo{ \domain(\Delta_1^Q, x:\uimage(\Delta_1^N) = \fv(Q) }
				\infer1[!^0]{ \ctxsep \Delta_1^Q, x:\uimage(\Delta_1^N)
					\vdashbang Q : []_{\uimage(\Delta_1)}. }
				\end{prooftree} \]
		\end{itemize}
		Observe that whenever $x \in \fv(P)$ and $x \in \fv(Q)$,
		we define both $\Delta_0^N$ and $\Delta_1^N$.
		However, since
		$\Delta_0^N \subseteq \Delta_0 \coh \Delta_1 \supseteq \Delta_1^N$
		and $\domain(\Delta_0^N) = \domain(\Delta_1^N) = \fv(N)$,
		they are in fact equal.
		Therefore, in all cases we denote  $\Delta_0^N$ and $\Delta_1^N$
		simply by $\Delta^N$.
		
		To conclude, there are three possibles cases
		($x \notin \fv(P)$ and $x \notin \fv(Q)$ cannot occur together
		since $x \in \fv(PQ)$):
		\begin{itemize}
		\item If $x \notin \fv(P)$ and $x \in \fv(Q)$,
			define $\rb{\alpha} \eqdef []_{\uimage(\Delta^N)}$ so that
			$\ctxsep \Delta^N \vdashbang N:\rb{\alpha}$.
			Observe that
			$\Delta_0^P = \Delta_0 \coh \Delta_1 \supseteq \Delta_1^Q$
			and $x \notin \domain(\Delta_1^Q)$,
			hence we can apply the rule \drule{@} and obtain a derivation
			\begin{equation} \label{lem:typing:bw-subst-lemma:eq:app1}
				\Gamma^P, x:\rb{\alpha}
				\ctxsep \Delta_0^P \vee \Delta_1^Q, x:\uimage(\Delta^N)
				\vdash PQ : β.
			\end{equation}
		\item If $x \in \fv(P)$ and $x \notin \fv(Q)$, observe that 
			$\Delta_0^P \subseteq \Delta_0 \coh \Delta_1 = \Delta_1^Q$
			and $x \notin \fv(Q) = \domain(\Delta_1) = \domain(\Delta_1^Q)$,
			hence we can apply the rule \drule{@} and obtain exactly
			the derivation from \cref{lem:typing:bw-subst-lemma:eq:app1}.
		\item If $x \in \fv(P)$ and $x \in \fv(Q)$, we observe that
			$\Delta_0^P \subseteq \Delta_0
			\coh \Delta_1 \supseteq \Delta_1^Q$
			and $x \notin \fv(Q) = \domain(\Delta_1^Q)$,
			and we apply the rule \drule{@}, obtaining
			the derivation from \cref{lem:typing:bw-subst-lemma:eq:app1}.
		\end{itemize}
		In all three cases we also have
		$\Gamma = \Gamma^P + \Gamma^N$
		($\Gamma_N$ being the empty environment in the cases where
		we did not explicitly define it),
		$(\Delta_0^P \vee \Delta_1^Q) \coh \Delta^N$ and 
		$\Delta = (\Delta_0^P \vee \Delta_1^Q) \vee \Delta^N$,
		hence we can conclude with $\Gamma^M \eqdef \Gamma^P$
		and $\Delta^M \eqdef \Delta_0^P \vee \Delta_1^Q$.
		
		The second possibility is when the initial hypothesis
		has been derived as follows:
		\[	\begin{prooftree}
			\hypo{\vdots}
			\infer1{ \Gamma_0 \ctxsep \Delta_0
				\vdash P \subst N : \rb{\gamma} \lto \beta }
			\hypo{\vdots}
			\infer1{ \Gamma_{1,i} \ctxsep \Delta_1 
				\vdash Q \subst N : \gamma_i }
			\delims{ \left[ }{ \right]_{i=0}^n }
			\infer[left label={!^+}]1{ \Gamma_1 \ctxsep \Delta_1
				\vdashbang Q \subst N : \rb{\gamma} }
			\hypo{ \Delta_0 \coh \Delta_1 }
			\infer3[@]{ \Gamma \ctxsep \Delta
				\vdash P \subst N Q \subst N : \beta }
			\end{prooftree} \]
		where $\rb\gamma = [\gamma_0, \dots, \gamma_n]$,
		$\Gamma = \Gamma_0 + \Gamma_1$
		and $\Gamma_1 = \Gamma_{1,1} + \dots + \Gamma_{1,n}$,
		and $\Delta = \Delta_0 \vee \Delta_1$.
		We denote by $K$ the index set
		$\set{0} \cup \set{(1,i)}[1 \leq i \leq n]$,
		and for each $k \in K$ we denote by 
		\begin{equation} \label{lem:typing:bw-subst-lemma:eq:app2-hypM}
			\Gamma_k \ctxsep \Delta_k \vdash R_k \subst N:\epsilon_k
		\end{equation}
		the corresponding hypotheses of the above derivation.
		Then for each $k \in K$ we perform the following constructions.
		\begin{itemize}
		\item If $x \notin \fv(R_k)$, then
			\cref{lem:typing:bw-subst-lemma:eq:app2-hypM}
			can be presented as
			$\Gamma_k^{R_k} \ctxsep \Delta_ k^{R_k}
			\vdash R_k : \epsilon_k$,
			with $\Gamma_k^{R_k} \eqdef \Gamma_k$
			and $\Delta_k^{R_k} \eqdef \Delta_k$.
			We also define $\Gamma_k^N$ to be the empty environment
			and $\rb{\alpha}_k \eqdef []_{\uimage(\Delta_k^N)}$.
		\item Otherwise $x \in \fv(R_k)$, and by induction on
			\cref{lem:typing:bw-subst-lemma:eq:app2-hypM}
			we obtain some $\rb{\alpha}_k \in \MTypes$ and derivations
			\[	\Gamma_k^{R_k}, x:\rb{\alpha}_k
					\ctxsep \Delta_k^{R_k}, x:\uimage(\Delta_k^N)
					\vdash R_k : \epsilon_k
				\qquad \text{and} \qquad
				\Gamma_k^N \ctxsep \Delta_k^N \vdashbang N : \rb{\alpha}_k \]
			such that $\Gamma_k = \Gamma_k^{R_k} + \Gamma_k^N$,
			$\Delta_k^{R_k} \coh \Delta_k^N$ and 
			$\Delta_k = \Delta_k^{R_k} \vee \Delta_k^N$.
		\end{itemize}
		In addition, let us make the following observations.
		\begin{itemize}
		\item If $x \in \fv(Q)$ then for all $i,j \in [1,n]$ we have
			$\Delta_{1,i}^Q \subseteq \Delta_1 \supseteq \Delta_{1,j}^Q$
			and $\domain(\Delta_{1,i}^Q) = \domain(\Delta_{1,j}^Q)
			= \fv(Q) \setminus \set x$,
			hence $\Delta_{1,i}^Q = \Delta_{1,j}^Q$
			and we can denote by $\Delta_1^Q$ the unique value taken
			by all $\Delta_{1,i}^Q$.
			If $x \notin \fv(Q)$ we define $\Delta_ 1^Q$ to be empty.
		\item Whenever $x \in \fv(R_k)$ and $x \in \fv(R_l)$,
			$\Delta_k^N = \Delta_l^N$, because
			$\Delta_k^N \subseteq \Delta_k 
			\coh \Delta_l \supseteq \Delta_l^N$
			(the coherence is due to the fact that $\Delta_k$ and $\Delta_l$
			are either $\Delta_0$ or $\Delta_1$)
			and $\domain(\Delta_k^N) = \domain(\Delta_l^N) = \fv(N)$.
			Therefore we denote by $\Delta^N$ the unique value taken
			by all defined $\Delta_k^N$.
			Notice that we defined at least one of these,
			since $x \in \fv(M)$ implies that 
			$x \in \fv(P)$ or $x \in \fv(Q)$.
		\end{itemize}
		As a consequence we can apply again the rules
		\drule{!^+} and \drule{@} and obtain:
		\[	\sum_{k \in K} \Gamma_k^{R_k}, x:\sum_{k \in K} \rb{α}_k 
			\ctxsep \Delta_0^P \vee \Delta_1^Q, x:\uimage(\Delta^N)
			\vdash PQ : β \]
		and, by \enquote{unwrapping} all derivations
		$\Gamma_k^N \ctxsep \Delta_k^N \vdashbang N : \rb{\alpha}_k$
		using the rules \drule{!^0} and \drule{!^+},
		then \enquote{rewrapping} them using the rule \drule{!^+}:
		\[	\sum_{k \in K} \Gamma_k^N \ctxsep \Delta^N
			\vdashbang N : \sum_{k \in K} \rb{\alpha}_k \]
		which are the desired derivations, with
		$\Gamma^M \eqdef \sum_{k \in K} \Gamma_k^{R_k}$,
		$\rb{α} \eqdef \sum_{k \in K} \rb{α}_k$,
		$\Delta^M \eqdef \Delta_0^P \vee \Delta_1^Q$, and
		$\Gamma^N \eqdef \sum_{k \in K} \Gamma_k^N$.
		The conditions $\Gamma = \Gamma^M + \Gamma^N$,
		$\Delta^M \coh \Delta^N$ and $\Delta = \Delta^M \vee \Delta^N$.
		follow immediately from the above constructions.
	\qedhere
	\end{itemize}
	\end{proof}

\begin{thm}[Subject Expansion] \label{thm:typing:subject-expansion}
	If $\Gamma \ctxsep \Delta \vdash N : \sigma$
	and $M \bred N$
	then $\Gamma \ctxsep \Delta \vdash M : \sigma$.
\end{thm}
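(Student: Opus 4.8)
The plan is to argue by induction on the reduction $M \bred N$ (where $M, N \in \LamI$), carried out simultaneously with the analogous statement for the bang-judgements: if $\Gamma \ctxsep \Delta \vdashbang N : \rb\alpha$ and $M \bred N$, then $\Gamma \ctxsep \Delta \vdashbang M : \rb\alpha$. The mutual formulation will be needed because a reduction taking place in the argument position of an application, $PQ \bred PQ'$, is witnessed in the derivation of $PQ'$ by a $\vdashbang$-premise about $Q'$; conversely, the bang-statement is immediately routed back to the ordinary one through the rule \drule{!^+}, while the rule \drule{!^0} is handled directly using that $\fv(M) = \fv(N)$ in the $\LamI$ setting (\cref{rem:aboutLambdaI}).

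All the substance of the argument will be concentrated in the base case, a root step $M = (\lambda x.P)Q \bred P\subst{Q} = N$. I would first note that $\lambda x.P \in \LamI$ forces $x \in \fv(P)$, and that $\fv((\lambda x.P)Q) = \fv(P\subst{Q})$. If $\sigma = \esum{\uimage(\Delta)}$ is obtained by the rule \drule{\esum{}} from the hypothesis $\domain(\Delta) = \fv(P\subst{Q})$, it suffices to put $(\lambda x.P)Q$ back in place of $P\subst{Q}$, keeping the same derivation. Otherwise $\sigma$ is a proper type $\beta$, and the crucial step is to apply the Backward Substitution \cref{lem:typing:bw-subst-lemma} to the derivation of $\Gamma \ctxsep \Delta \vdash P\subst{Q} : \beta$, which is legitimate precisely because $x \in \fv(P)$. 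This yields some $\rb\alpha \in \MTypes$ together with derivations $\Gamma^P, x:\rb\alpha \ctxsep \Delta^P, x:\uimage(\Delta^Q) \vdash P : \beta$ and $\Gamma^Q \ctxsep \Delta^Q \vdashbang Q : \rb\alpha$, with $\Gamma = \Gamma^P + \Gamma^Q$, $\Delta^P \coh \Delta^Q$ and $\Delta = \Delta^P \vee \Delta^Q$. Depending on whether $\rb\alpha$ is the empty multiset annotated by $\uimage(\Delta^Q)$ or a non-empty multiset $[\alpha_0, \dots, \alpha_n]$, I would apply respectively the rule \drule{λ^0} or \drule{λ^+} to the first derivation, obtaining $\Gamma^P \ctxsep \Delta^P \vdash \lambda x.P : \rb\alpha \lto \beta$, and then recombine it with the second derivation via the rule \drule{@}. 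The coherence and decomposition side conditions required by \drule{@} are exactly those supplied by \cref{lem:typing:bw-subst-lemma}, so the conclusion $\Gamma \ctxsep \Delta \vdash (\lambda x.P)Q : \beta$ follows.

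The contextual cases are purely mechanical. For $M = \lambda x.P \bred \lambda x.P' = N$ (where $x \in \fv(P') = \fv(P)$ since $M, N \in \LamI$) the derivation of $N$ ends in \drule{λ^0} or \drule{λ^+}, and one applies the induction hypothesis to $P \bred P'$ and re-applies the same rule; for $M = M_1 M_2 \bred M_1' M_2 = N$ the derivation of $N$ ends in \drule{@} and one applies the induction hypothesis to its left premise; the case $M = M_1 M_2 \bred M_1 M_2' = N$ is symmetric but appeals to the bang-part of the induction for the right premise. Since \cref{lem:typing:bw-subst-lemma} is already at our disposal, the substantive difficulty has essentially been discharged; the only point requiring care in the above will be keeping the two layered environments $\Gamma$ and $\Delta$ synchronised when re-applying \drule{@} after the substitution lemma — concretely, checking that $\Delta^P \coh \Delta^Q$ and $\Delta = \Delta^P \vee \Delta^Q$ are precisely the side conditions that rule demands — together with the systematic use of the invariant $\fv(M) = \fv(N)$, which is what makes the \drule{\esum{}}, \drule{!^0}, \drule{λ^0}, and \drule{λ^+} cases legitimate in the $\LamI$ setting.
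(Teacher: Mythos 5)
Your proposal is correct and follows essentially the same route as the paper: induction on the reduction, with the root case split between the $\drule{\esum{}}$ case (handled by free-variable preservation in $\LamI$) and the proper-type case (handled by the Backward Substitution Lemma, legitimate because $x\in\fv(P)$, followed by re-applying $\drule{λ^0}$/$\drule{λ^+}$ and $\drule{@}$). The only difference is that you spell out the contextual cases, including the mutual statement for $\vdashbang$-judgements needed for reductions in argument position, which the paper dismisses as straightforward.
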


	\begin{proof}
	We proceed by induction on the reduction $M \bred N$,
	exactly as for subject reduction \cref{thm:typing:subject-reduction}.
	
	In the case of a root step $(λx.P)Q \bred P \subst Q$,
	there are two possible cases for the derivation
	$\Gamma \ctxsep \Delta \vdash P \subst Q : \sigma$.
	\begin{itemize}
	\item The first possibility is:
		\[	\begin{prooftree}
			\hypo{ \domain(\Delta) = \fv(P \subst Q) }
			\infer1[\esum{}]{ \ctxsep \Delta \vdash\esum{\uimage(\Delta)}. }
			\end{prooftree} 
		\]
		Since we are working in $\LamI$
		we have $\fv(P \subst Q) = \fv((λx.P)Q)$, hence we can just replace
		$P \subst Q$ with $(λx.P)Q$ in this derivation
		and obtain the desired result.
	\item The second possibility is when $\sigma$ is some $β \in \Types$.
		Since we are working in $\LamI$ we have $x \in \fv(P)$,
		hence we can apply \cref{lem:typing:bw-subst-lemma}
		to the derivation $\Gamma \ctxsep \Delta \vdash P \subst Q : β$
		and obtain $\rb{\alpha} \in \MTypes$ and derivations
		$\Gamma^P, x:\rb{\alpha} \ctxsep \Delta^P, x:\uimage(\Delta^Q)
		\vdash P : β$ and
		$\Gamma^Q \ctxsep \Delta^Q \vdashbang Q : \rb{\alpha}$
		such that $\Gamma = \Gamma^P + \Gamma^Q$,
		$\Delta^P \coh \Delta^Q$ and $\Delta = \Delta^P \vee \Delta^Q$.
		Then we build the following derivation:	
		\[	\begin{prooftree}
			\hypo{\vdots}
			\infer1{ \Gamma^P, x:\rb{\alpha}
				\ctxsep \Delta^P, x:\uimage(\Delta^Q) \vdash P:β }
			\infer[right label template	= {\small\inserttext}]1%
				[\drule{λ^0} or \drule{λ^+}]
				{ \Gamma^P \ctxsep \Delta^P
				\vdash λx.P : \rb{\alpha} \lto β }
			\hypo{ \vdots }
			\infer1{ \Gamma^Q \ctxsep \Delta^Q \vdashbang Q : \rb{\alpha} }
			\hypo{ \Delta^P \coh \Delta^Q }
			\infer3[@]{ \Gamma \ctxsep \Delta \vdash (λx.P)Q : β. }
			\end{prooftree} \]
	\end{itemize}
	
	All the other cases are straightforward.
	\end{proof}

\subsection{The multi-type interpretation characterises $\cO$}

We now exploit our typing system to provide a characterization of $\cO$,
introducing the following interpretation
and the subsequent theorem.

\begin{defi}
	For all $M \in \LamI$, its \emph{multi-type interpretation}
	is the set
	\[	\interp{M} \eqdef \set{ (\Gamma,\Delta,\sigma) }
			[ \Gamma \ctxsep \Delta \vdash M : \sigma ]. \]
\end{defi}

\begin{thm} \label{thm:interp-characterises-O}
	For all $M,N \in \LamI$, 
	$\interp M = \interp N$ if and only if $\OT M = \OT N$.
\end{thm}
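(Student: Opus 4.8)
The plan is to prove the two implications separately, following the familiar blueprint used for Böhm trees and relational semantics, but taking care of the extra variable environment $\Delta$. For the ($\Leftarrow$) direction I would first establish the standard \emph{invariance} facts: by Subject Reduction and Subject Expansion (\cref{thm:typing:subject-reduction,thm:typing:subject-expansion}), $\interp{-}$ is invariant under $\beta$-conversion; and from the typing rules one reads off directly that $\interp{\lam x.M}$ and $\interp{MN}$ are functorially determined by $\interp M$ (and $\interp N$). Then I would prove that $\interp{-}$ is \emph{continuous} with respect to the approximants with memory, in the sense that $\interp M = \bigcup_{A \in \Appof M} \interp A$. The $\supseteq$ inclusion follows because each $A \in \Appof M$ arises from a reduct $N$ with $A \sqsubseteq \omega_\memory(N)$, together with monotonicity of $\interp{-}$ along $\sqsubseteq$ and invariance under $\beta$; the $\subseteq$ inclusion is the usual argument: any typing derivation $\pi \derives \Gamma \ctxsep \Delta \vdash M : \sigma$ uses only a finite portion of the computation, so by iterating Subject Reduction one can push $M$ to a reduct $N$ whose direct approximant $\omega_\memory(N)$ already supports $\pi$ — this is where the quantitative (non-idempotent) character of the system does the work, bounding the head reduction needed. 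Granting this, $\OT M = \OT N$ gives $\Appof M = \Appof N$ by \cref{thm:MTbijApp}, hence $\interp M = \bigcup_{A \in \Appof M}\interp A = \bigcup_{A \in \Appof N}\interp A = \interp N$.

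For the ($\Rightarrow$) direction I would argue contrapositively: if $\OT M \neq \OT N$, then by \cref{thm:MTbijApp} we have $\Appof M \neq \Appof N$, so WLOG there is an approximant $A \in \Appof M \setminus \Appof N$, which by downward-closure and directedness (\cref{lem:DirApp}) we may take to be \enquote{head-maximal} at the position where the two Ohana trees first disagree. The key lemma is then a \emph{definability/separation} statement: for every approximant $A \in \Appset(X)$ there is a type judgement $(\Gamma_A, \Delta_A, \alpha_A)$ such that $\Gamma_A \ctxsep \Delta_A \vdash P : \alpha_A$ holds if and only if $A \sqsubseteq \omega_\memory(P)$ — equivalently, if and only if $A \in \Appof P$ when $P$ is already a hnf at the relevant positions. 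This is proved by induction on $A$: the $\bot_X$-case is exactly the rule $(\esum{})$, which records $X = \uimage(\Delta)$, so $\bot_X$ is separated from $\bot_{X'}$ with $X \neq X'$ precisely because the $\esum{}$-type remembers the memory; the application case uses the $(\texttt{@})$, $(\lambda^0/\lambda^+)$ and $(!^0/!^+)$ rules, with the $\Delta$-environment tracking the free variables of the arguments — crucially, the $(\lambda^0)$ rule with its annotated empty multiset $[]_X$ is what lets the system \enquote{see} a variable that is bound but pushed to infinity or hidden behind $\bot$. Combining this with the continuity equation gives $(\Gamma_A,\Delta_A,\alpha_A) \in \interp M \setminus \interp N$, so $\interp M \neq \interp N$.

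The main obstacle I expect is the bookkeeping around the variable environment $\Delta$ in both halves: in the continuity proof one must check that the $\Delta$ appearing in a derivation of $M$ really does match $\fv$ of the argument-terms in the reduct that supports it (this is the content of \cref{lem:typing:domain-subset-domain-eq-fv} together with the invariance of $\fv$ under $\bred$ in $\LamI$, which is why everything is consistent), and in the separation proof one must verify that the annotated empty multisets $[]_X$ and the $\esum X$ types are \emph{rigid enough} to detect exactly the set $X$ at a $\bot_X$-node and no more — i.e. that two distinct memory sets cannot be confused by any derivation. A secondary subtlety is that, unlike the \lam-calculus case, one cannot separate via the \emph{absence} of a type (there is always the $\esum{}$ rule available), so the separating witness must be a genuine functional type $\alpha_A \in \Types$ reading off the head-variable structure down to the first disagreement, with the memory sets carried along the spine; building this witness is routine but must be done with care at the position where one Ohana tree has a $\bot_X$ and the other has a head normal form, or where the annotated sets on the branches differ.
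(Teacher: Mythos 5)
Your right-to-left direction (equal Ohana trees imply equal interpretations) takes a genuinely different route from the paper. You propose to prove a continuity equation $\interp M = \bigcup_{A \in \Appof M} \interp A$ directly, arguing that every derivation is supported by the direct approximant of some reduct because the non-idempotent system bounds the head reduction needed. The paper instead reuses the machinery of \cref{sec:TE4MTs}: it types \emph{resource terms}, proves \cref{lem:typing:approx-TE-typings} ($\Gamma \ctxsep \Delta \vdash M : \alpha$ iff some $\rt s \TEin \TE M$ admits the same judgement) together with subject reduction and expansion for the resource calculus (\cref{lem:typing:resource-subject-reduction-expansion}), and then chains $\OT M \MTle \OT N \Rightarrow \nf(\TE M) \TElt \nf(\TE N) \Rightarrow \interp M \subseteq \interp N$ through the Commutation Theorem (\cref{the:commutation}). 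So the approximation property you want is obtained from the strong normalization and confluence of $\rterms$, whereas your route still requires two ingredients the paper never develops: a typing discipline for approximants (including $\bot_X$), and a weighted subject-reduction argument showing that a $\Types$-typed \lamI-term head-normalizes. Both are plausible, but they are additional lemmas, not bookkeeping.

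In the separation direction there is a genuine flaw in your key lemma as stated: a judgement $(\Gamma_A,\Delta_A,\alpha_A)$ with \enquote{$\Gamma_A \ctxsep \Delta_A \vdash P : \alpha_A$ iff $A \MTle \dirapp{P}$} cannot exist, because the left-hand side is invariant under $\beta$-expansion (\cref{thm:typing:subject-expansion}) while the right-hand side is not: for $P = (\lam y.y)\comb D$ we have $\dirapp P = \bot_\emptyset$, yet $P$ has exactly the typings of $\comb D$, so the characteristic judgement of the approximant $\lam x.x\bot_{\set x}$ is derivable for $P$ although $\lam x.x\bot_{\set x} \not\MTle \dirapp P$. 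The correct target is \enquote{iff $A \in \Appof P$}, and even then the only-if direction --- which you dismiss as routine --- is where all the work lies: the witness type must be made rigid (fresh atoms threaded along the whole head spine, in the style of de~Carvalho's characteristic types) so that no derivation for a term with a different head structure, or with different memory sets, can produce it. This is exactly what the paper's proof of \cref{prop:interp-characterises-O-2} does, by induction on the depth of the first difference between $\OT M$ and $\OT N$: at each step it introduces one fresh atom $\atom$ in the type of the head variable and argues that any derivation of the separating judgement for $N$ \emph{must} come from the same construction, while the annotations $[]_{\fv(M_i)}$ and the $\esum{}$ rule detect mismatches of free-variable sets. Your observation that one cannot separate by mere untypability (the $\esum{}$ rule types everything) is correct and matches the paper, but the characteristic-judgement lemma must be restated and its rigidity argument actually carried out before your contrapositive goes through.
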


We divide the proof of the theorem in two parts,
\cref{prop:interp-characterises-O-1,prop:interp-characterises-O-2}.
For the former we rely again on Taylor expansion,
following the well-established connexion between
the Taylor expansion of a λ-term and its typing derivations
in the usual multi-type system \autocite[§~6.3]{deCarvalho.07},
which was already exploited by \autocite[Corollary~3.11]{Manzonetto.Ruo.14}
to obtain the same result for this system and the \lam-theory $\cB$.

\begin{prop} \label{prop:interp-characterises-O-1}
	For all $M,N \in \LamI$, 
	if $\OT M \MTle \OT N$ then $\interp M \subseteq \interp N$.
\end{prop}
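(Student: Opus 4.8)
The plan is to factor the statement through Taylor expansion, using the
Commutation Theorem (\cref{the:commutation}) together with the connexion
between typing derivations and resource terms. First I would establish the
link between the multi-type system and the \lamI-resource calculus: to each
\lamI-resource term $\rt t \TEin \TE M$ one associates its set of typing
derivations, and conversely each derivation $\pi \derives \Gamma \ctxsep
\Delta \vdash M : \sigma$ corresponds to a finite family of resource terms in
$\TE M$ (its ``resource content''). Concretely, one proves a lemma of the form
\emph{$(\Gamma,\Delta,\sigma) \in \interp M$ if and only if there exists
$\rt t \TEin \TMT M$ with $\rt t$ typeable by $\Gamma \ctxsep \Delta \vdash
\rt t : \sigma$ in the resource type system}; this is the analogue, in our
annotated setting, of \autocite[§~6.3]{deCarvalho.07} and of
\autocite[Corollary~3.11]{Manzonetto.Ruo.14}. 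The annotations $X$ on empty
bags $1_X$ on the resource side correspond exactly to the annotations $X$ on
empty multisets $[]_X$ and to the second environment $\Delta$ on the typing
side, so the translation is essentially structural; the key technical points
are that resource substitution on terms is tracked by the substitution lemmas
\cref{lem:typing:fw-subst-lemma-empty,lem:typing:fw-subst-lemma-nonempty,%
lem:typing:bw-subst-lemma}, and that $\resource$-reduction on resource terms
corresponds to the identity on typing judgements (an immediate consequence of
\cref{thm:typing:subject-reduction,thm:typing:subject-expansion} transported
along the translation).

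Granting this, the proposition follows easily. Suppose $\OT M \MTle \OT N$.
Take $(\Gamma,\Delta,\sigma) \in \interp M$; by the lemma above there is a
resource term $\rt t \TEin \TMT M$ typeable with conclusion
$\Gamma \ctxsep \Delta \vdash \rt t : \sigma$. By \cref{the:commutation} we
have $\TMT M = \NFT M$, and by monotonicity of $\TE{-}$
(\cref{lem:TE-monotone}) the hypothesis $\OT M \MTle \OT N$ gives
$\TMT M = \bigTElub_{A \in \Appof M} \TE A \TElt
\bigTElub_{B \in \Appof N} \TE B = \TMT N$ --- here one uses
\cref{thm:MTbijApp} to pass from $\OT M \MTle \OT N$ to the corresponding
inclusion of approximant sets (after down-closing and using
\cref{lem:DirApp}\ref{lem:DirApp2}), and \cref{lem:TE-monotone} at the level
of approximants. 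Hence $\rt t \TEin \TMT N$, and since the typeability of a
resource term is a property of the term alone, the very same derivation gives
$\Gamma \ctxsep \Delta \vdash \rt t : \sigma$ with $\rt t \TEin \TMT N$.
Applying the lemma in the other direction yields $(\Gamma,\Delta,\sigma) \in
\interp N$, which proves $\interp M \subseteq \interp N$.

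The main obstacle is the first lemma, i.e.\ setting up cleanly the
correspondence between typing derivations for $M$ and typeable elements of
$\TMT M$, in the presence of the two novel features of our system: the
variable environment $\Delta$ and the annotations on empty bags/multisets. One
must check that the $\esum{}$ and $!^0$ rules --- which type a term purely by
its free variables --- correspond exactly to the resource terms built from
$1_X$, and that $\Delta$ is forced to be $\fv$-consistent along the way
(\cref{lem:typing:domain-subset-domain-eq-fv}), so that the translation is
well-defined and bijective on the nose. Everything else is a routine transport
of the classical de~Carvalho dictionary through the decorations, but this
bookkeeping is where the real work lies; once it is done, the continuity and
commutation results do the rest.
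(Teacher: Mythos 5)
Your plan is essentially the paper's proof: your ``dictionary'' lemma is \cref{lem:typing:approx-TE-typings} (stated there for $\TE M$ rather than $\TMT M$), combined with resource-level subject reduction/expansion (\cref{lem:typing:resource-subject-reduction-expansion}) to move between $\TE M$ and its normal form, and the chain ``monotonicity of $\TE{-}$ plus Commutation'' is exactly the displayed implication chain in the paper. Two remarks. First, a genuine imprecision: your key lemma is false as literally stated for $\sigma = \esum X$. The resource type system has no rule assigning a type $\esum X$ to a resource term, whereas the triple $(\Gamma,\Delta,\esum{\uimage(\Delta)})$ with $\Gamma$ empty belongs to $\interp M$ as soon as $\domain(\Delta) = \fv(M)$; so the ``if and only if'' cannot hold for these triples, and your suggestion that the $\esum{}$ rule ``corresponds to the resource terms built from $1_X$'' does not repair this (terms built from $1_X$ are typed with ordinary types via $!^0_{\resource}$, never with $\esum X$). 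The fix is what the paper does: state the correspondence only for $\alpha \in \Types$, and handle the $\esum X$-typed triples separately --- they depend only on $\fv(M)$, and $\fv(M)=\fv(N)$ is forced by $\OT M \MTle \OT N$ (equivalently by $\nf(\TE M) \TElt \nf(\TE N)$, since $\TElt$ requires equal variable sets). With that split your argument goes through and coincides with the paper's. Second, a smaller point: resource-level subject reduction/expansion is not an ``immediate consequence'' of \cref{thm:typing:subject-reduction,thm:typing:subject-expansion} transported along a translation; the paper proves it by re-running the same argument on the resource syntax, which is routine but is a separate proof, not a corollary.
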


To use the Taylor expansion as a way to connect Ohana trees
and typing derivations,
let us introduce the following slightly modified version
of the typing rules from \cref{def:typing-rules},
adapted to \lamI-resource terms:
\begin{gather*}
	\begin{prooftree}
	\infer0[ax_{\resource}]{ x : [\alpha] \ctxsep x : X \vdash x : \alpha }
	\end{prooftree}
\qquad
	\begin{prooftree}
	\hypo{ \Gamma_0 \ctxsep \Delta_0 \vdash s : \rb\alpha \lto \beta }
	\hypo{ \Gamma_1 \ctxsep \Delta_1 \vdashbang \rb t : \rb\alpha }
	\hypo{ \Delta_0 \coh \Delta_1 }
	\infer3[@_{\resource}]
		{ \Gamma_0 + \Gamma_1 \ctxsep \Delta_0 \vee \Delta_1
		\vdash s \rb t : \beta }
	\end{prooftree}
\\[\topsep]
	\begin{prooftree}
	\hypo{ \Gamma, x : [] \ctxsep \Delta, x : X \vdash s : \beta }
	\infer1[λ^0_{\resource}]
		{ \Gamma \ctxsep \Delta \vdash λx.s : []_X \lto \beta }
	\end{prooftree}
\qquad
	\begin{prooftree}
	\hypo{ \Gamma, x : [\alpha_0, \dots, \alpha_n] \ctxsep \Delta, x : X 
		\vdash s : \beta }
	\infer1[λ^+_{\resource}]{ \Gamma \ctxsep \Delta 
		\vdash λx.s : [\alpha_0, \dots, \alpha_n] \lto \beta }
	\end{prooftree}
\\[\topsep]
	\begin{prooftree}
	\infer0[!^0_{\resource}]
		{ \ctxsep \Delta \vdashbang
		[]_{\domain(\Delta)} : []_{\uimage(\Delta)} }
	\end{prooftree}
\qquad
	\begin{prooftree}
	\hypo{ \Gamma_0 \ctxsep \Delta \vdash t_0 : \alpha_0 }
	\hypo{ \dots }
	\hypo{ \Gamma_n \ctxsep \Delta \vdash t_n : \alpha_n }
	\infer3[!^+_{\resource}]
		{ \Gamma_0 + \dots + \Gamma_n \ctxsep \Delta \vdashbang 
		[t_0, \dots, t_n] : [\alpha_0, \dots, \alpha_n] }
	\end{prooftree}
\end{gather*}
Just as for \lamI-terms, we write
$\pi \derives \Gamma \ctxsep \Delta \vdash s : \alpha$
whenever $\pi$ is a derivation of the given conclusion,
and simply $\Gamma \ctxsep \Delta \vdash s : \alpha$
to express that such a derivation exists.

\begin{lem} \label{lem:typing:resource-subject-reduction-expansion}
	The following subject reduction and subject expansion properties hold:
	\begin{itemize}
	\item If $\Gamma \ctxsep \Delta \vdash \rts : \alpha$
		and $\rts \rreds \rs t$,
		then we can write $\rs t = \rtt + \rs t'$
		with $\Gamma \ctxsep \Delta \vdash \rtt : \alpha$.
	\item If $\Gamma \ctxsep \Delta \vdash \rtt : \alpha$
		and $\rts \rreds \rtt + \rs t'$
		then $\Gamma \ctxsep \Delta \vdash \rts : \alpha$.
	\end{itemize}
\end{lem}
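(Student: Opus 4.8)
The plan is to follow closely the proofs of subject reduction and subject expansion for \lamI-terms (\cref{thm:typing:subject-reduction,thm:typing:subject-expansion}); the only genuinely new ingredient is that a single $\rred$-step, like resource substitution itself, returns a \emph{sum}, so at each stage one must keep track of which summand carries the relevant type. First I would isolate the one-step statements on resource expressions: (SR) if $\Gamma \ctxsep \Delta \vdash \rts : \alpha$ and $\rts \rred \rs t$, then $\rs t = \rtt + \rs t'$ for some $\rtt$ with $\Gamma \ctxsep \Delta \vdash \rtt : \alpha$; and (SE) if $\rts \rred \rs t$ and $\rtt + \rs t'$ equals $\rs t$ with $\Gamma \ctxsep \Delta \vdash \rtt : \alpha$, then $\Gamma \ctxsep \Delta \vdash \rts : \alpha$. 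Both are proved by induction on a derivation of the $\rred$-step using the rules for $\rred$ on resource expressions in \cref{def:rred}. In each congruence case one inverts the last typing rule, applies the induction hypothesis to the immediate subexpression, and recombines; recombination works because the constructors act linearly on resource sums, so that \enquote{$\rtt$ is a summand of $\rs t$ of type $\rb\alpha \lto \beta$} yields \enquote{$\rtt\,\rb q$ is a summand of $\rs t\,\rb q$ of type $\beta$} (re-using the typing of the bag $\rb q$ via \drule{@_\resource}), and likewise for abstraction and for the bag constructors.

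For (SR) I would also record that a \emph{typed} redex never contracts to an empty sum. In the resource multi-type system the only way to type the empty bag is the axiom \drule{!^0_\resource}, so \emph{every} resource subexpression of a derivably typed resource expression is itself typed somewhere in the derivation — in contrast with the \lamI-term system, where \drule{!^0} can type an arbitrary argument by the empty multiset — and the rules \drule{ax_\resource}, \drule{@_\resource} and \drule{!^+_\resource} force the cardinality of the multiset typing a variable $x$ in an environment $\Gamma$ to equal its linear degree $\deg_x(\cdot)$. Hence the bag in a typed redex has exactly the cardinality required by the body, so the contraction produces a non-empty resource sum, which is what makes the conclusion \enquote{$\rs t = \rtt + \rs t'$} meaningful.

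The base case of both inductions is the root reduction $(\lam x.\rtu)\rb v \rred \rtu \rsubst v$, and here I would prove resource analogues of the substitution lemmas \cref{lem:typing:fw-subst-lemma-empty,lem:typing:fw-subst-lemma-nonempty,lem:typing:bw-subst-lemma}. The forward one: if $\Gamma, x:\rb\alpha \ctxsep \Delta, x:\uimage(\Delta') \vdash \rt s : \beta$ and $\Gamma' \ctxsep \Delta' \vdashbang \rb u : \rb\alpha$ with $\Delta \coh \Delta'$, then $\rt s \rsubst u = \rtt + \rs t'$ for some $\rtt$ with $\Gamma + \Gamma' \ctxsep \Delta \vee \Delta' \vdash \rtt : \beta$, splitting — as in the \lamI-term case — according to whether $\rb\alpha = []$, the empty case amounting to a pure memory substitution $\msubst{\uimage(\Delta')}$. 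The backward one: if $\Gamma \ctxsep \Delta \vdash \rtt : \beta$ with $\rtt$ a summand of $\rt s \rsubst u$ and $x \in \fv(\rt s)$, then there are $\rb\alpha \in \MTypes$, decompositions $\Gamma = \Gamma^s + \Gamma^u$ and $\Delta = \Delta^s \vee \Delta^u$ with $\Delta^s \coh \Delta^u$, and derivations $\Gamma^s, x:\rb\alpha \ctxsep \Delta^s, x:\uimage(\Delta^u) \vdash \rt s : \beta$ and $\Gamma^u \ctxsep \Delta^u \vdashbang \rb u : \rb\alpha$. Each is proved by induction on $\rt s$, copying verbatim the case analysis of the corresponding \lamI-term lemma; the only new point is at an application or bag node, say $\rt s = \rt p\,\rb q$, where $\rt s \rsubst u = \sum_{\rb u = \rb v \cdot \rb w}(\rt p \rsubst v)(\rb q \rsubst w)$ and the last typing rule \drule{@_\resource} (resp.\ \drule{!^+_\resource}) splits $\rb\alpha$, $\Gamma$ and $\Delta$ accordingly: in the forward direction one selects the split of $\rb u$ whose cardinalities and types match that decomposition, applies the induction hypotheses to obtain a typed summand in each of $\rt p \rsubst v$ and $\rb q \rsubst w$, and multiplies them; in the backward direction one reads off from the given summand how $\rb u$ was split among the linear occurrences of $x$, recurses, and reassembles $\rb u$ together with its type $\rb\alpha$ via \drule{!^+_\resource}.

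Finally, the full statements follow by induction on the length of $\rts \rreds \rs t$ (resp.\ $\rts \rreds \rtt + \rs t'$), chaining (SR) (resp.\ (SE)) with the shape of the sum-step rule in \cref{def:rred}\ref{def:rred3}: at each sum-step exactly one summand is contracted while the rest is carried along (or further reduced by $\rredr$), so one applies the one-step statement to the summand that matters and invokes the induction hypothesis on the remainder, using for (SR) that a typed summand is never contracted into nothing. I expect the main obstacle to be the bookkeeping inside the forward and backward resource substitution lemmas — reconciling the sum-over-splits structure of resource substitution with the multiset-and-$\Delta$ decompositions performed by \drule{@_\resource} and \drule{!^+_\resource}, and verifying that exactly one summand, up to the permutation symmetry exhibited in \cref{lem:aboutressubst}, is assigned the target type — everything else being a routine transcription of the \lamI-term arguments together with standard manipulation of formal sums.
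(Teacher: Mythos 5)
Your proposal follows essentially the same route as the paper: the paper's proof simply states that the one-step versions are obtained by retracing the subject reduction and subject expansion arguments of \cref{thm:typing:subject-reduction,thm:typing:subject-expansion} (hence implicitly via resource analogues of the forward and backward substitution lemmas), and then extends to $\rreds$ by a straightforward induction, which is exactly your plan. Your additional observations — that typed redexes cannot contract to the empty sum and the summand bookkeeping through sums — are correct details that the paper leaves implicit.
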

	
	\begin{proof}
	The one-step version of these properties
	(\ie where $\rreds$ is replaced with $\rred$)
	can be proved following exactly the same path as
	for \cref{thm:typing:subject-reduction,thm:typing:subject-expansion},
	respectively.
	It can then be extended to the reflexive-transitive closure $\rreds$
	by a straightforward induction.
	\end{proof}

\begin{lem} \label{lem:typing:approx-TE-typings}
	$\Gamma \ctxsep \Delta \vdash M : \alpha$
	if and only if there exists $\rts \TEin \TE M$
	such that $\Gamma \ctxsep \Delta \vdash \rts : \alpha$.
\end{lem}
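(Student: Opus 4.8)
The plan is to prove the statement by a structural induction carried out \emph{simultaneously} with the obvious analogue for the $\vdashbang$ judgement and resource bags:
\[	\Gamma \ctxsep \Delta \vdashbang N : \rb\alpha
	\quad\text{iff}\quad
	\text{there is a bag } \rb t \in \TE** N
	\text{ with } \Gamma \ctxsep \Delta \vdashbang \rb t : \rb\alpha. \]
The whole proof is essentially a one-to-one translation of derivation shapes, because the resource typing rules recalled just before \cref{lem:typing:resource-subject-reduction-expansion} were tailored to mirror those of \cref{def:typing-rules}, and because the Taylor expansion decomposes in lockstep with the syntax: $\TE{x} = (\set x,\set x)$, $\TE{\lam x.P} = \lam x.\TE P$, and $\TE{PQ} = \TE P\,\TE** Q$.

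For the direction \enquote{$\Rightarrow$} I would induct on the \lamI-typing derivation of $\Gamma \ctxsep \Delta \vdash M : \alpha$ (resp.\ $\Gamma \ctxsep \Delta \vdashbang N : \rb\alpha$). As $\alpha \in \Types$, the derivation of a $\vdash$-judgement ends with $\drule{ax}$, $\drule{λ^0}$, $\drule{λ^+}$ or $\drule{@}$, never with $\drule{\esum{}}$. In the $\drule{ax}$ case take $\rts \eqdef x$ and type it with $\drule{ax_\resource}$. In the $\drule{λ^0}$ / $\drule{λ^+}$ cases the induction hypothesis yields $\rt p \TEin \TE P$ with a resource derivation, and $\lam x.\rt p \TEin \TE{\lam x.P}$ is typed by $\drule{λ^0_\resource}$ / $\drule{λ^+_\resource}$ with exactly the same environments. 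In the $\drule{@}$ case the induction hypotheses give $\rt p \TEin \TE P$ and a bag $\rb q \in \TE** Q$, whence $\rt p\rb q \TEin \TE{PQ}$ is typed by $\drule{@_\resource}$ (the coherence condition $\Delta_0 \coh \Delta_1$ and the join $\Delta_0 \vee \Delta_1$ being literally the same as in $\drule{@}$). For $\drule{!^0}$ use the bag $1_{\fv(N)} \in \TE** N$, typed by $\drule{!^0_\resource}$; for $\drule{!^+}$, whose $n+1$ premisses all share the same $\Delta$, apply the induction hypothesis to each premiss to get $\rt t_0,\dots,\rt t_n \TEin \TE N$ and type $[\rt t_0,\dots,\rt t_n] \in \TE** N$ with $\drule{!^+_\resource}$. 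The only point to verify is that all the resource expressions built this way are \emph{legal} \lamI-resource expressions: $\lam x.\rt p$ is legal because $x \in \fv(P) = \fv(\rt p)$, and $[\rt t_0,\dots,\rt t_n]$ lies in $\rterms*(\fv(N))$ because its elements share the free-variable set $\fv(N)$ --- recall from \cref{rem:aboutresourceterms} that $\rterms*$ contains only same-support bags. Both facts are instances of the equality $\fv(\rts) = \fv(M)$ for $\rts \TEin \TE M$, the sharp form of \cref{rem:fv-of-taylor}.

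For the direction \enquote{$\Leftarrow$} I would induct on the structure of $M$ (equivalently, on the resource-typing derivation, whose shape is forced by that of $\rts \TEin \TE M$). If $M = x$ then $\rts = x$ and the resource derivation is an instance of $\drule{ax_\resource}$, hence also of $\drule{ax}$. If $M = \lam x.P$ then $\rts = \lam x.\rt p$ with $\rt p \TEin \TE P$, the resource derivation ends with $\drule{λ^0_\resource}$ or $\drule{λ^+_\resource}$, and the induction hypothesis applied to its premiss followed by $\drule{λ^0}$ or $\drule{λ^+}$ concludes (the implicit side condition $x \in \fv(P)$ holds since $M \in \LamI$). If $M = PQ$ then $\rts = \rt p\rb q$ with $\rt p \TEin \TE P$ and $\rb q \in \TE** Q$, the resource derivation ends with $\drule{@_\resource}$, its left premiss is handled by the induction hypothesis, and its right premiss by the mutually inductive bag statement: if $\rb q = 1_{\fv(Q)}$ the derivation is $\drule{!^0_\resource}$, matching $\drule{!^0}$; if $\rb q = [\rt q_0,\dots,\rt q_n]$ it ends with $\drule{!^+_\resource}$, whose premisses are handled by the induction hypothesis and reassembled with $\drule{!^+}$. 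Applying $\drule{@}$, again with the same $\coh$/$\vee$ bookkeeping, gives $\Gamma \ctxsep \Delta \vdash PQ : \beta$.

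There is no deep obstacle: the two inductions are routine once the set-up is right. The only things requiring care are (i) phrasing the mutual induction so that the $\vdash$/$\vdashbang$ statements and the term/bag statements are available together, and (ii) the systematic verification that every abstraction, bag and application produced during the translation satisfies the \enquote{\lamI} well-formedness constraints, which in each case reduces to the identity $\fv(\rts) = \fv(M)$ for $\rts \TEin \TE M$.
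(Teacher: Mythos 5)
Your proposal is correct and follows essentially the same route as the paper, whose proof simply observes that both directions go by induction on the given derivations with a completely transparent rule-by-rule correspondence (\drule{ax} with \drule{ax_{\resource}}, \drule{@} with \drule{@_{\resource}}, etc., with matching multiset sizes in the application cases). Your write-up merely makes explicit what the paper leaves implicit --- the mutual statement for $\vdashbang$ and bags, the exclusion of the rule \drule{\esum{}} since $\alpha \in \Types$, and the well-formedness checks via $\fv(\rts)=\fv(M)$ for $\rts \TEin \TE M$ --- all of which are sound.
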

	
	\begin{proof}
	Both directions are proved by induction on the given derivations.
	The construction is completely transparent:
	rule \drule{ax} corresponds to rule \drule{ax_{\resource}},
	rule \drule{@} corresponds to rule \drule{@_{\resource}},
	and so on.
	In the application cases, the size of the multisets
	in both derivations and in the resource approximant $\rts \TEin \TE M$
	are of course identical.
	\end{proof}

Notice that this Lemma bears the content of an
\enquote{Approximation Theorem}
in the sense of \cite[Theorem~3.10]{Manzonetto.Ruo.14}:
if, for all $(X,\mathcal X)  \in \bigdunion_{X \subf \Var} 
\parts{\rterms(X)}$, one defines
\[	\interp{X, \mathcal X} \eqdef
	\set*{ (\Gamma, \Delta, \esum{\uimage(\Delta)}) }[\begin{array}{@{}l@{}}
		\support(\Gamma)=  \emptyset \\ \domain(\Delta) = X \end{array}]
	\cup \set{(\Gamma,\Delta,\alpha)}[\exists \rts \in \mathcal X,\ 
		\Gamma \ctxsep \Delta \vdash \rts : \alpha] \]
then \cref{lem:typing:approx-TE-typings} implies that
for all $M \in \LamI$, $\interp M = \interp{\TE M}$.
This approximation is the key ingredient of the following proof.

\begin{proof}[Proof of \cref{prop:interp-characterises-O-1}]
	Take two terms $M,N \in \LamI$, then:
	\begin{flalign*}
	& \OT M \MTle \OT N \\
	\Rightarrow\quad & \TE{\OT M} \TElt \TE{\OT N}
		& \llap{by \cref{lem:TE-monotone},} \\
	\Rightarrow\quad & \nf(\TE M) \TElt \nf(\TE N)
		& \llap{by \cref{the:commutation},} \\
	\Rightarrow\quad & \left\{ \begin{array}{l}
		\fv(M) = \fv(N) \\
		\interp{M}' \subseteq \interp{N}' \text{ for }
			\interp{M}' \eqdef \set{ (\Gamma,\Delta,\alpha) }
				[ \exists \rtt \TEin \nf(\TE M),\ 
				\Gamma \ctxsep \Delta \vdash \rtt : \alpha ]
		\end{array} \right. \\
	\Rightarrow\quad & \left\{ \begin{array}{l}
		\fv(M) = \fv(N) \\
		\interp{M}'' \subseteq \interp{N}'' \text{ for }
			\interp{M}'' \eqdef \set{ (\Gamma,\Delta,\alpha) }
				[ \exists \rts \TEin \TE M,\ 
				\Gamma \ctxsep \Delta \vdash \rts : \alpha ]
		\end{array} \right. \\
		&& \llap{by \cref{lem:typing:resource-subject-reduction-expansion},}
		\\
	\Rightarrow\quad & \left\{ \begin{array}{l}
		\interp{M}'''_0 = \interp{N}'''_0 \text{ for }
			\interp{M}'''_0 \eqdef \set{ (\Gamma,\Delta,\esum X) }
				[ \Gamma \ctxsep \Delta \vdash M : \esum X ] \\
		\interp{M}'''_1 \subseteq \interp{N}'''_1 \text{ for }
			\interp{M}'''_1 \eqdef \set{ (\Gamma,\Delta,\alpha) }
				[ \Gamma \ctxsep \Delta \vdash M : \alpha ]
		\end{array} \right. \\
		&& \llap{by \cref{lem:typing:domain-subset-domain-eq-fv,%
			lem:typing:approx-TE-typings},}
		\\
	\Rightarrow\quad & \interp M \subseteq \interp N. & \qedhere
	\end{flalign*}
\end{proof}

The second part of \cref{thm:interp-characterises-O} relies on a different argument, which is already implicit in \autocite[Theorem~3]{Rocca82} and was subsequently employed by \autocites[Lemma~5.5]{BreuvartMR18}[Lemma~14.75]{BarendregtM22}, whose approach we follow, as well as by \autocite[§~6]{Lancelot.25} in a coinductive setting.

\begin{prop} \label{prop:interp-characterises-O-2}
	For all $M,N \in \LamI$, 
	if $\OT M \neq \OT N$ then $\interp M \neq \interp N$.
\end{prop}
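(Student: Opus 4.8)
The plan is to prove the statement in its contrapositive, separating form: assuming $\OT M \neq \OT N$, I would exhibit a single triple $(\Gamma,\Delta,\sigma)$ lying in exactly one of $\interp M$, $\interp N$. I follow the classical argument of \autocite{BreuvartMR18} / \autocite{BarendregtM22}, but routed through the Taylor expansion so as to exploit the rigidity of resource terms. First, if $\fv(M) \neq \fv(N)$, say $z \in \fv(M) \setminus \fv(N)$, then taking any $\Delta$ with $\domain(\Delta) = \fv(M)$ the rule $\drule{\esum{}}$ gives $(\emptyset,\Delta,\esum{\uimage(\Delta)}) \in \interp M$, whereas by \cref{lem:typing:domain-subset-domain-eq-fv} no triple whose $\Delta$-component has $z$ in its domain belongs to $\interp N$; so this triple separates, and from now on I assume $\fv(M) = \fv(N) =: Z$. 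Next, by \cref{thm:MTbijApp} the hypothesis $\OT M \neq \OT N$ is equivalent to $\Appof M \neq \Appof N$, so without loss of generality there is an $A \in \Appof M \setminus \Appof N$; since $\fv(A) = Z$ (\cref{rem:appof-M-fv-M}) and $\bot_Z \in \Appof N$, $A$ is not a $\bot$-leaf, so $A = \lambda\vec x.yA_1\cdots A_k$.

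The second ingredient is that the multi-type interpretation computes the Taylor expansion of the Ohana tree: for every $P \in \LamI$,
\[	\interp P = \interp{\TMT P}, \]
the right-hand side being the interpretation of a set of resource terms as defined before \cref{prop:interp-characterises-O-1}. Indeed $\interp P = \interp{\TE P}$ (remark following \cref{lem:typing:approx-TE-typings}); by \cref{lem:typing:resource-subject-reduction-expansion} the typings of resource terms are invariant under $\rreds$, hence $\interp{\TE P} = \interp{\NFT P}$; and $\NFT P = \TMT P$ by the Commutation \cref{the:commutation}. Now I pick the discriminating resource term: let $\resourceof A$ be the resource term built from $A$ as in the proof of \cref{cor:MTeqIffNFTeq}. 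It satisfies $\resourceof A \TEin \TE A \TElt \TMT M$ and is $\resource$-normal by \cref{lem:TE-of-A-is-normal}. Moreover $\resourceof A \notin \TMT N$: otherwise $\resourceof A \TEin \TE B$ for some $B \in \Appof N$, and the argument in the proof of \cref{cor:MTeqIffNFTeq} yields $A \MTle B$, whence $A \in \Appof N$ by downward-closure (\cref{lem:DirApp}\ref{lem:DirApp2}), contradicting the choice of $A$.

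The heart of the proof — and, I expect, the main obstacle — is an auxiliary \emph{rigid typing lemma}: every $\resource$-normal resource term $\rts \in \rterms(X)$ admits a typing $\Gamma_\rts \ctxsep \Delta_\rts \vdash \rts : \alpha_\rts$ over a single atom $\ast$, with all multisets forced by the exact linear usage of variables in $\rts$, with $\Delta_\rts(z) = \{z\}$ for $z \in X$, and with each empty multiset annotated by the free variables of the corresponding $1_{(-)}$-subterm — such that, up to $\alpha$-equivalence, $\rts$ is the \emph{only} $\resource$-normal resource term with that typing. I would prove this by induction on $\rts = \lambda\vec x.y\rb t_1\cdots\rb t_k$: the rule $\drule{ax_{\resource}}$ pins the head $y$; each bag $\rb t_i$ is either an empty bag $1_{X_i}$, in which case $\drule{!^0_{\resource}}$ together with $\Delta_\rts(z) = \{z\}$ forces a bag whose free variables are exactly $X_i$, or a non-empty bag, in which case $\drule{!^+_{\resource}}$ and the induction hypotheses determine its content; finally the rules $\drule{λ^0_{\resource}}$/$\drule{λ^+_{\resource}}$ re-absorb the bound variables, each of which occurs in $\rts$ (linearly or in a memory) by the \lamI-discipline. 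The delicate point of this induction is checking that the annotations $X_i$ and the environments $\Gamma,\Delta$ leave no slack at all, i.e. that no other $\resource$-normal resource term can be fitted into the same typing; this is where the proof requires real care.

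To conclude, I apply the rigid typing lemma to $\rts := \resourceof A$, obtaining a typing $(\Gamma^*,\Delta^*,\ast)$. Since $\resourceof A \TEin \TMT M$, we get $(\Gamma^*,\Delta^*,\ast) \in \interp{\TMT M} = \interp M$. If this triple also lay in $\interp{\TMT N} = \interp N$, then some $\rtt \TEin \TMT N$ would satisfy $\Gamma^* \ctxsep \Delta^* \vdash \rtt : \ast$; as every resource term occurring in $\TMT N$ is $\resource$-normal (\cref{lem:TE-of-A-is-normal}), the uniqueness clause of the lemma forces $\rtt =_\alpha \resourceof A$, contradicting $\resourceof A \notin \TMT N$. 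Hence $(\Gamma^*,\Delta^*,\ast) \in \interp M \setminus \interp N$, so $\interp M \neq \interp N$, as required. (An alternative route, closer to the cited references, would build the separating judgement directly by induction on the approximant $A$, turning the annotations on its $\bot$-leaves into annotations on empty multisets $[]_X$; the Taylor-based route above has the advantage of reusing $\resourceof{-}$ and the monotonicity argument already set up for \cref{cor:MTeqIffNFTeq}.)
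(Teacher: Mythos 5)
Your reduction to a single resource term is sound as far as it goes: the free-variable case, the passage to an approximant $A\in\Appof M\setminus\Appof N$, the identity $\interp P=\interp{\TMT P}$ (via \cref{lem:typing:approx-TE-typings}, \cref{lem:typing:resource-subject-reduction-expansion} and \cref{the:commutation}), and the fact that $\resourceof A\TEin\TMT M$ while $\resourceof A$ does not occur in $\TMT N$ are all fine. The gap is exactly where you anticipated it: the \emph{rigid typing lemma} is false, and with it the final step collapses. Counterexample: let $M=\lambda w.x(w(\Om x w))$ and $N=\lambda w.x(w(\Om w))$, so that $\OT M\neq\OT N$, and take $A=\dirapp M=\lambda w.x(w\,\bot_{\{x,w\}})\in\Appof M\setminus\Appof N$. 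Then $\rts\eqdef\resourceof A=\lambda w.x[w\,1_{\{x,w\}}]$, while $\rtt\eqdef\resourceof{\dirapp N}=\lambda w.x[w\,1_{\{w\}}]\TEin\TMT N$; both are $\resource$-normal and they are not $\alpha$-equivalent. In any derivation for $\rts$ the bound $w$ gets a singleton multiset (it occurs linearly exactly once), so the last rule is $\drule{\lambda^+_\resource}$, which does \emph{not} record the set assigned to $w$ in the variable environment; that set only surfaces through $\uimage$ in the $\drule{!^0_\resource}$ typing of the empty bag. Concretely, the derivable judgements for $\rts$ are precisely of the form
\[ x:[[\gamma]\lto\beta]\ \ctxsep\ x:S\ \vdash\ \rts\ :\ [\,[]_{S\cup W}\lto\gamma\,]\lto\beta \qquad(\gamma,\beta\in\Types,\ S,W\subf\Var), \]
whereas for $\rtt$ one obtains the same judgements with an \emph{arbitrary} annotation $W'$ in place of $S\cup W$: assign $w:W'$ in the variable environment, which $\drule{ax_\resource}$, $\drule{!^0_\resource}$ and $\drule{\lambda^+_\resource}$ leave completely free. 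Taking $W'\eqdef S\cup W$ shows that \emph{every} typing of $\rts$ --- in particular the one your lemma prescribes, with $S=\{x\}$, $W=\{w\}$ and a single atom --- is also a typing of $\rtt$. So uniqueness fails, and your concluding step (``$\rtt=_\alpha\resourceof A$, contradiction'') is unavailable.

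Moreover the failure cannot be repaired by a more careful proof of the lemma or a cleverer choice of typing: for this $A$, every judgement derivable for $\resourceof A$ is derivable for $\rtt\TEin\TMT N$, so \emph{no} triple built from $\resourceof A$ witnesses $\interp M\neq\interp N$; the strategy ``pick any $A\in\Appof M\setminus\Appof N$ and separate with a typing of $\resourceof A$'' is itself broken. What does work is that the sets assigned to bound variables \emph{are} recorded by $\drule{\lambda^0}$ (and by $\drule{!^0}$), creating cross-constraints between the annotation in the head variable's type and the annotation in the final type: in the example, $x:[[]_{\{x,w\}}\lto\atom]\ctxsep x:\{x\}\vdash\cdot\,:[]_{\{w\}}\lto\atom$ is derivable for $M$ (through the smaller approximant $\lambda w.x\,\bot_{\{x,w\}}$) but not for $N$. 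Exploiting this systematically requires tailoring both the approximant and the typing to the \emph{first difference} between $\OT M$ and $\OT N$; this is what the paper's proof does, arguing by induction on the depth of that difference and assigning a fresh atom to the head variable so that any derivation of the same judgement for $N$ is forced to decompose in the same way, thereby reducing to the induction hypothesis.
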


	\begin{proof}
	\NewDocumentCommand \Deltaid {O{M}} { \Delta_{\mathrm{id}}^{#1} }
	Take $M,N \in \LamI$ such that $\OT M \neq \OT N$,
	we want to build a derivation
	$\pi \derives \Gamma \ctxsep \Deltaid \vdash M : \sigma$
	such that there exists no derivation
	$\pi' \derives \Gamma \ctxsep \Deltaid \vdash N : \sigma$
	(without loss of generality, since $M$ and $N$ play symmetric roles),
	where $\Deltaid$ is a notation for 
	the partial map $x \mapsto \set x$ defined only on $\fv(M)$.
	
	We do so by induction on the depth of the first
	difference between $\OT M$ and $\OT N$
	(since Ohana trees are defined by coinduction,
	their equality is a coinductive construction
	and the negation of it is inductive).
	
	\begin{itemize}
	\item If the first difference occurs at depth~0, there are three cases.
		\begin{itemize}
		\item If $M \hreds λx_1\dots x_n.yM_1 \cdots M_k$
			then we can derive either
			\[	y:[[]_{\fv(M_1)} \lto \cdots \lto []_{\fv(M_k)} \lto \atom]
				\ctxsep \Deltaid \vdash M : 
					\underbrace{[]_\emptyset \lto \cdots \lto []_\emptyset} 
					_{\text{$n$ times}} \lto \atom, \]
			if $y$ is not among the $x_1,\dots, x_n$, or
			\[	\ctxsep \Deltaid \vdash M : 
				\underbrace{[]_\emptyset \lto \cdots \lto 
					[[]_{\fv(M_1)} \lto \cdots \lto []_{\fv(M_k)} \lto \atom]
					\lto \cdots \lto []_\emptyset} 
				_{\text{$n$ times}} \lto \atom, \]
			if $y = x_i$ for some $i\in\set{1,\dots,n}$, in which case the non-empty multiset
			occurs in the $i$-th argument position in the given type.
			In the three possible situations:
			\begin{itemize}
			\item $N$ has no head normal form,
			\item $N$ has a head normal form of a different shape,
			\item $N \hreds λx_1\dots x_n.yN_1 \cdots N_k$
				but there is an $i$ such that $\fv(M_i) \neq \fv(N_i)$,
			\end{itemize}
			is clear that $N$ cannot be given a derivation
			with same environment and type.
		\item In the symmetric case, the same argument applies.
		\item Finally if both $M$ and $N$ do not have a head normal form
			and $\fv(M) \neq \fv(N)$,
			then we can derive
			$\ctxsep \Deltaid \vdash M : \esum{\fv(M)}$,
			which cannot be done for $N$.
		\end{itemize}
	
	\item If the first difference occurs deeper, there are reductions
		$M \hreds λx_1\dots x_n.yM_1 \cdots M_k$ and
		$N \hreds λx_1\dots x_n.yN_1 \cdots N_k$,
		and for all $1 \le i \le k$, $\fv(M_i) = \fv(N_i)$.
		Let us concentrate on the case where $n = 0$
		to lighten the notations;
		the case where $n > 0$ can be deduced straightforwardly
		by analyzing the rules $\drule{λ^0}$ and $\drule{λ^+}$.
		
		Fix an $i$ such that $\OT{M_i} \neq \OT{N_i}$,
		then by induction we obtain a derivation
		$\pi_i \derives \Gamma \ctxsep \Deltaid[M_i] \vdash M_i : \alpha$
		such that there exists no derivation
		$\pi'_i \derives \Gamma \ctxsep \Deltaid[M_i] \vdash N_i : \alpha$
		(notice that the equality of the free variables of $M$ and $N$
		allows to restrict ourselves to derivations
		whose type is in $\Types$).
		For each $j \neq i$, we also build the following derivation:
		\[ \pi_j \quad\eqdef\quad \begin{prooftree}[center]
			\hypo{ \domain(\Deltaid[M_j])  = \fv(M_j) }
			\infer1[!^0]{ \ctxsep \Deltaid[M_j]
				\vdashbang M_j : []_{\fv(M_j)} }
		\end{prooftree}. \]
		
		Now let $\atom$ be a fresh atom,
		\ie it does not appear in $\pi$.
		By applying rule \drule{ax},
		then rule \drule{@} with second hypothesis $\pi_1$,
		then rule \drule{@} with second hypothesis $\pi_2$,
		and so on until we apply
		rule \drule{@} with second hypothesis $\pi_k$,
		we obtain a derivation:
		\[	\pi \derives \Gamma + ( y : [
				\underbrace{[]_{\fv(M_1)} \lto \cdots \lto [\alpha]
					\lto \cdots \lto []_{\fv(M_k)}} 
				_{\text{$n$ times}} \lto \atom
			]) \ctxsep \bigvee_{j=1}^k \Deltaid[M_j]
			\vdash yM_1\cdots M_k : \atom \]
		where the multiset $[\alpha]$ appears in $i$th argument position
		of the given type.
		
		The key observation to be made at this point is	that
		a derivation of $\vdash yN_1\cdots N_k : \atom$
		with the same environment
		\emph{must} come from the same construction than $\pi$,
		with hypotheses $\Gamma \ctxsep \Deltaid[M_i] \vdash N_i : \alpha$,
		and $\ctxsep \Deltaid[M_j] \vdashbang N_j : []_{\fv(M_j)}$
		for $j \neq i$.
		Indeed since $\atom$ has been taken fresh it \emph{only} appears
		in the type $[]_{\fv(M_1)} \lto \cdots \lto [\alpha]
		\lto \cdots \lto []_{\fv(M_k)} \lto \atom$ assigned to $y$
		in the environment (in particular it does not appear in $\Gamma$),
		hence this type of $y$ must have been used 
		to produce the final~$\atom$,
		and therefore it is the type assigned to the head occurrence of~$y$.
		This observation allows us to conclude,
		as the induction hypothesis tells us that there is no
		derivation $\Gamma \ctxsep \Deltaid[M_i] \vdash N_i : \alpha$.
	\qedhere
	\end{itemize}
	\end{proof}


\section{Conclusions and future work}

In this paper we introduced the Ohana trees for the \lamI-calculus, together with two theories of program approximations: the former based on finite trees, the latter on resource approximants and Taylor expansion. In addition, we defined a denotational model characterizing the \lamI-theory induced by Ohana trees.
Our pioneering results look encouraging, so we believe that this approach deserves further investigations.

\subsection{Further work on the \lamI-calculus}

The Ohana trees introduced in \cref{def:memorytrees4LamI} are an 
adaptation of Böhm trees, since they regard as meaningless the terms 
without hnf. 
By considering as meaningless the subset of zero-terms\footnote{\Ie, terms 
without an hnf that never reduce to an abstraction.} one obtains Lévy-Longo 
trees~\cite{Levy76,Longo83}, and by taking mute terms\footnote{Also called 
\emph{root active}, mute terms have the property that all their reducts can 
reduce to a redex.} as meaningless one obtains Berarducci 
trees~\cite{Berarducci96}. 
Our definition of Ohana trees can be readily adapted to both settings by 
appropriately modifying the notion of meaningless. 

\begin{prob}
Is it possible to obtain different \lamI-theories by varying the notion of meaningless terms underlying Ohana trees?
\end{prob}

The question is whether the equivalence induced on \lamI-terms remains contextual, and therefore a \lamI-theory.
Preliminary investigations indicate that all our results extend seamlessly to the Lévy-Longo version. 
Regarding the Berarducci version, the direct approximants can be generalized without any issues 
(except for the fact that an oracle is needed, see \cite{BakelBDV02}). 
We are currently studying whether the corresponding \lamI-resource calculus and Taylor expansion could also be designed.

Concerning denotational models, the one we constructed in \cref{sec:types} is inspired by the relational semantics of linear logic, although we have not explored its underlying categorical framework in detail. What we can say with reasonable confidence is that it does not fit any of the notions of \lamI-calculus models previously described in the literature~\cite{EgidiHR92,Jacobs93}. We believe that the following problem deserves further investigations.

\begin{prob}\label{problem:model}
Is there a categorical notion of a denotational model of $\LamI$ that is general enough to encompass all instances introduced individually in the literature?
\end{prob}

As pointed out by an anonymous reviewer of \autocite{Cerda.Man.Sau.25}, our \cref{def:LamI} of \lamI-terms can be recast in the framework of abstract syntax with binding \cite{Fiore.Plo.Tur.99}: just as $\Lam(-)$ can be described as a canonical presheaf over (a skeleton of) the category of finite sets and functions, $\LamI(-)$ would be a presheaf over the category of finite sets and \emph{surjections} (intuitively, considering only surjective renamings of variables prevents any weakening).
Similarly, $\rterms**(-)$ would appear as a presheaf over the same category. One could then wonder whether all our work can be presented \enquote{over} finite sets and surjections; in particular, would our Taylor expansion act naturally with respect to these presheaf constructions?
There is hope that it is the case, given that we followed the inductive 
structures of both \lamI-terms and \lamI-resource terms, which may pave the 
way towards building (presheaf) models answering \cref{problem:model}.
We believe that finding a denotational model and studying its categorical properties are important steps towards a deeper mathematical understanding of our Taylor expansion. 
In particular, it is currently unclear whether our resource calculus stands on a solid notion of differentiation, as it is the case for the usual resource calculus~\cite{EhrhardR03}, or if it is an \emph{ad hoc} adaptation.

\begin{prob}
Is the \lamI-resource calculus representing some notion of derivative?
\end{prob}

We now discuss more speculative extensions, going beyond the setting of \lamI-calculus.

\subsection{What about the full \lam-calculus?}

Our investigation of Ohana trees was originally inspired by the relational model with infinite multiplicities $\model{E}$ defined in \cite{CarraroES10}.
Indeed, this model distinguishes $\Om$ from $\Om x$, and $\comb{Y}$ from $\Bible$, just like our Ohana trees. 
Unlike our Ohana trees, it separates $\Om$ and $\Om\comb{I}$ because of the linearity of $\comb{I}$, thus the induced theory is different from Ohana trees equality.
We believe that the interpretation of \lam-terms in this model is sufficiently stratified to be translated into a coinductive notion of evaluation tree.

\begin{prob}\label{problem:model2}
Is there a notion of evaluation trees for \lam-terms capturing the equality induced by the relational model $\model{E}$ with infinite coefficients?
\end{prob}

The fact that $\model{E}$ is a model of the full \lam-calculus demonstrates the existence of consistent \lam-theories that track variables pushed into infinity in the Böhm tree semantics. 
A deeper analysis of this model may suggest ways to refine the definition of Ohana trees for the full \lam-calculus, so as to avoid the counterexample to contextuality discussed in \cref{sec:Ohana-lamK}. 
Another natural question is whether Ohana trees can serve as a meaningful notion of observation in the sense of Morris's observational equivalences~\cite{MorrisTh}.

\begin{prob}
Is there a denotational model inducing the following \lam-theory?
\[
 M \equiv N \iff	\forall C[]\,.\, \MT{C[M]} = \MT{C[N]}
\]
where $C[]$ denotes a \lam-calculus \emph{context} (namely, a \lam-term containing a hole $[]$), and $C[M]$ the \lam-term obtained by replacing $M$ for the hole $[]$ in $C[]$, possibly with capture of free variables.
\end{prob}

\section*{Acknowledgments} We would like to thank Thomas Colcombet, Thomas Ehrhard, Paul-André Melliès, and Guy McCusker for stimulating discussions that inspired the development of Ohana trees.

\printbibliography

\end{document}